\newcommand{\inversion}[1]{\!\overleftrightarrow{\,#1\,}\!}
\newcommand{\reverse}{{\,\overleftrightarrow{\!(\cdots)\!}\,}}
\newcommand{\triang}{{\begin{tikzpicture}[scale=.15]
			\draw (1,0) -- (0,0) -- (0,1) -- cycle;
\end{tikzpicture}}}
\newcommand{\trileg}{{\begin{tikzpicture}[scale=.15]
			\draw (1,1) -- (0,0);
			\draw (1,0) -- (0,0) -- (0,1);
\end{tikzpicture}}}
\newcommand{\cyclic}{{\circlearrowleft}_{ijk}}
\newcommand{\pdv}[2]{\frac{\partial {#1}}{\partial {#2}}}
\newcommand{\C}{\mathbb{C}}
\newcommand{\R}{\mathbb{R}}
\newcommand{\Z}{\mathbb{Z}}
\newcommand{\dilog}{\operatorname{Li_2}}
\newcommand{\cL}{\mathcal{L}}
\renewcommand{\d}{\mathrm{d}}
\renewcommand{\vec}{\boldsymbol}
\theoremstyle{definition}
	\newtheorem{Definition}{Definition}[section]
	\newtheorem{ExampleH1}{Example: H1, part}
	\newtheorem{ExampleH2}{Example: H2, part}
	\newtheorem{ExampleH3}{Example: H3$\boldsymbol{_{\delta=0}}$, part}
	\newtheorem{Remark}[Definition]{Remark}
\newtheorem{Lemma}[Definition]{Lemma}
\newtheorem{Theorem}[Definition]{Theorem}
\newtheorem{Proposition}[Definition]{Proposition}
\numberwithin{equation}{section}
\begin{document}
\allowdisplaybreaks

\newcommand{\arXivNumber}{2501.13012}

\renewcommand{\PaperNumber}{058}

\FirstPageHeading

\ShortArticleName{Discrete Lagrangian Multiforms for ABS Equations I}

\ArticleName{Discrete Lagrangian Multiforms for ABS Equations I: Quad Equations}
	
\Author{Jacob J. RICHARDSON~$^{\rm a}$ and Mats VERMEEREN~$^{\rm b}$}

\AuthorNameForHeading{J.J.~Richardson and M.~Vermeeren}

\Address{$^{\rm a)}$~School of Mathematics, University of Leeds, Leeds, LS2 9JT, UK}
\EmailD{\href{mailto:jacobjoseph.gu@gmail.com}{jacobjoseph.gu@gmail.com}}

\Address{$^{\rm b)}$~Department of Mathematical Sciences, Loughborough University, \\
\hphantom{$^{\rm b)}$}~Loughborough, LE11 3TU, UK}
\EmailD{\href{mailto:m.vermeeren@lboro.ac.uk}{m.vermeeren@lboro.ac.uk}}
	
\ArticleDates{Received March 26, 2024, in final form July 02, 2025; Published online July 18, 2025}
	
\Abstract{Discrete Lagrangian multiform theory is a variational perspective on lattice equations that are integrable in the sense of multidimensional consistency. The Lagrangian multiforms for the equations of the ABS classification formed the start of this theory, but the Lagrangian multiforms that are usually considered in this context produce equations that are slightly weaker than the ABS equations. In this work, we present alternative Lagrangian multiforms that have Euler--Lagrange equations equivalent to the ABS equations. In addition, the treatment of the ABS Lagrangian multiforms in the existing literature fails to acknowledge that the complex functions in their definitions have branch cuts. The choice of branch affects both the existence of an additive three-leg form for the ABS equations and the closure property of the Lagrangian multiforms. We give counterexamples for both these properties, but we recover them by including integer-valued fields, related to the branch choices, in the action sums.}
	
\Keywords{discrete integrability; Lagrangian multiforms; variational principles}
	
\Classification{39A36; 37J70; 37J06}

\section{Introduction}

The theory of Lagrangian multiforms describes integrable systems through a variational principle. In recent years, the continuous version of the theory, describing integrable differential equations, has received a lot of attention \cite{caudrelier2023lagrangian,caudrelier2024classical, nijhoff2023lagrangian, petrera2017variational, sleigh2021lagrangian, suris2016lagrangian} and a semi-discrete version was formulated, extending the theory to differential-difference equations \cite{nijhoff2024lagrangianb, sleigh2022semi}. However, the origins of Lagrangian multiform theory lie in the fully discrete setting and concern integrable \emph{quad equations}, in particular those of the Adler--Bobenko--Suris (ABS) list \cite{adler2003classification}. This is a classification of multi-affine, multidimensionally consistent difference equations on a quadrilateral stencil (under the additional assumption that there exists a compatible \emph{tetrahedron equation}). The observation that all equations on this list have a variational structure led to the introduction of Lagrangian multiforms in \cite{lobb2009lagrangian}.

Lagrangian multiform theory describes a set of compatible equations (multidimensionally consistent difference equations or a hierarchy of differential equations) through a single variational principle involving a difference form or differential form. This $d$-form is defined on the space of all independent variables of the set of equations. For any $d$-dimensional surface within the space of independent variables, we can consider an action, defined by the integral/sum of the Lagrangian $d$-form over the surface. In Lagrangian multiform theory (and in the closely related theory of \emph{pluri-Lagrangian systems}), the variational principle requires these integrals to be critical no matter which surface is chosen.

The action functional originally suggested for the ABS equations in \cite{adler2003classification} involves Lagrangians defined on a 4-point stencil. These have not been studied as Lagrangian 2-forms in a higher-dimensional lattice, only as traditional Lagrangians in a planar lattice.
In \cite{lobb2009lagrangian}, Lagrangians on triangular stencils were introduced and generalised to discrete Lagrangian 2-forms.
These have been studied in \cite{bobenko2010lagrangian, boll2014integrability, boll2016integrability, lobb2009lagrangian, lobb2018variational, xenitidis2011lagrangian} and produce a slightly weaker set of equations: the quad equations are sufficient conditions for the variational principle, but not necessary conditions.
Each of the generalised Euler--Lagrange equations (called \emph{corner equations}) is a linear combination of two quad equations \cite{boll2014integrability}, or they can be put in the form of quad equations with an additional term, which plays the role of an integration constant \cite{lobb2018variational}.
It is emphasised in \cite{boll2014integrability} that: \emph{``quad-equations are not variational; rather, discrete Euler--Lagrange equations for the $2$-forms given in {\rm \cite{bobenko2010lagrangian,lobb2009lagrangian}} are consequences of quad-equations''}.
In the present work, we show that the literal interpretation of the first part of that statement is not true: quad equations \emph{are} variational.
We generalise the original Lagrangians on quadrilateral stencils to Lagrangian 2-forms and show that their corner equations imply the multi-affine quad equations.\looseness=1

To obtain equivalence between the corner equations and the multi-affine quad equations, we need to overcome an additional subtlety that has often been ignored in the literature (a notable exception is \cite{bobenko2012discrete}). The corner equations are in an \emph{additive three-leg form}. For some of the ABS equations, transforming this three-leg form into the multi-affine quad equation involves taking its exponential. Hence, the inverse operation is only defined up to multiples of $2 \pi {\rm i}$. To be equivalent to the quad equations, the additive three-leg form needs to include an integer multiple of~$2 \pi {\rm i}$. We will make such a term appear in the corner equations by including an integer-valued field in the action.

An important property of Lagrangian multiform theory is that, on solutions to the variational principle, the Lagrangian $d$-form is closed. This means in particular that the action over a~closed surface vanishes on solutions. The existing proofs of closure make implicit assumptions regarding the branch choices in the logarithm or dilogarithm functions in the action. We give a counterexample showing that, with the standard choice of branches, the closure property fails on some solutions. We remedy this in a similar way to the three-leg form issue: we add integer fields to the action that balance out the branch choices in the action. For equations H1, H2, A1, Q1, and Q2 we recover the result that the action around an elementary cube is zero. However, for H3, A2, and Q3, we only show that it is a multiple of $4 \pi^2$. We present numerical evidence that non-zero multiples of $4 \pi^2$ do occur. While we do not discuss equation Q4 here, we believe that the techniques we developed provide a possible path to establishing its closure relation.\looseness=1

The structure of this paper is as follows. In Section~\ref{sec-abs}, we review the structure of the ABS equations, emphasising the multiples of $2 \pi {\rm i}$ that must be included in the three-leg forms to ensure they are equivalent to the quad equations. We start Section~\ref{sec-multiforms} with a short introduction to Lagrangian multiform theory. Then, in Section~\ref{sec-trident} we introduce the 4-point (\emph{trident}) Lagrangian that provides a variational structure for the quad equations. In Section~\ref{sec-triangle}, we~contrast this to the more common 3-point (\emph{triangle}) Lagrangian, which produces weaker corner equations. In Section~\ref{subsec-clos}, we~revisit the closure property. We give a proof of closure based on deforming a given solution into a very simple one for which closure holds, while tracking the effects of potentially crossing branch cuts during the deformation. These arguments take place on an elementary cube; in Section~\ref{sec-arbitrary}, we comment on how they generalise to arbitrary discrete surfaces. Finally, Section~\ref{sec-conclusion} provides some conclusions and perspectives.\looseness=1

We performed some symbolic and numerical verifications of the computations presented in this paper in SageMath \cite{sagemath}. The code can be found at \cite{code}.

\newpage

\section{Quad equations}
\label{sec-abs}

\subsection{The ABS classification}

We consider partial difference equations in the lattice $\Z^N$, where to each lattice direction there is an associated parameter $\alpha_i \in \C$. For a function $u\colon \Z^N \to \C$ and a reference point $\vec n \in \Z^N$, we use the shorthand notations $u = u(\vec n)$, $u_i = u(\vec n + \vec e_i)$, $u_{ij} = u(\vec n + \vec e_i + \vec e_j)$, etc., where $\vec e_i$ is the unit vector in the $i$-th coordinate direction.

The ABS list \cite{adler2003classification} is a classification of integrable difference equations that satisfy the following conditions:
\begin{itemize}\itemsep=0pt
 \item They are quad equations: they depend on a square stencil and on two parameters associated to the two lattice directions, i.e., they take the form
 \begin{equation}
 \label{quad}
 Q(u,u_i,u_j,u_{ij},\alpha_i,\alpha_j) = 0 .
 \end{equation}

 \item They are symmetric: equation \eqref{quad} is invariant under the symmetries of the square,
 \begin{equation}
 \label{quad-symmetries}
 Q(u,u_i,u_j,u_{ij},\alpha_i,\alpha_j)
 = \pm Q(u,u_j,u_i,u_{ij},\alpha_j,\alpha_i)
 = \pm Q(u_i,u,u_{ij},u_{j},\alpha_i,\alpha_j) .
 \end{equation}
 Note that these two transformations of the tuple $(u,u_i,u_j,u_{ij})$ generate the symmetry group $D_4$, and that the $\alpha_i$, $\alpha_j$ are interchanged accordingly.

 \item They are multi-affine: the function $Q$ is a polynomial of degree one in each of the $u$,~$u_i$, $u_j$,~$u_{ij}$ (but can be of higher total degree). This guarantees that we can solve the equations for any of the four variables, given the other three.

 \item They are three-dimensionally consistent: given initial values $u$, $u_i$, $u_j$, $u_k$, use the equations%
 \begin{align*}
 &Q_{ij}:= Q(u,u_i,u_j,u_{ij},\alpha_i,\alpha_j) = 0 , \\
 &Q_{jk}:= Q(u,u_j,u_k,u_{jk},\alpha_j,\alpha_k) = 0 , \\
 &Q_{ki}:= Q(u,u_k,u_i,u_{ki},\alpha_k,\alpha_i) = 0 ,
 \end{align*}
 to determine $u_{ij}$, $u_{jk}$, $u_{ki}$; then $u_{ijk}$ should be uniquely determined by
 \begin{align*}
 &\inversion Q_{ij}:= Q(u_{ijk},u_{jk},u_{ki},u_{k},\alpha_i,\alpha_j) = 0 , \\
 &\inversion Q_{jk}:= Q(u_{ijk},u_{ki},u_{ij},u_{i},\alpha_j,\alpha_k) = 0 , \\
 &\inversion Q_{ki}:= Q(u_{ijk},u_{ij},u_{jk},u_{j},\alpha_k,\alpha_i) = 0 .
 \end{align*}
 This is illustrated in Figure \ref{fig-cac}.

 \item They satisfy the tetrahedron property: the value obtained for $u_{ijk}$ in the computation above is independent of $u$, depending only on $u_i$, $u_j$, $u_k$ (and on the lattice parameters).\looseness=1
\end{itemize}

\begin{figure}[t]
\centering
\begin{tikzpicture}[scale=2]
	\begin{scope}[shift=({0,0}), y={(5mm, 3mm)}, z={(0cm,1cm)}]
		\draw
		(0,0,0) node[below left] {$u$} -- (1,0,0) node[below] {$u_i$} -- (1,1,0) node[right] {$u_{ij}$};
 \draw[dashed] (1,1,0) -- (0,1,0) node[above left] {$u_j$} -- (0,0,0);
 \draw[dashed] (0,1,0) -- (0,1,1) node[above] {$u_{jk}$};
 \draw (0,1,1) -- (0,0,1) node[left] {$u_k$} -- (0,0,0)
 (1,1,1) node[above] {$u_{ijk}$} -- (0,1,1);
 \draw (1,1,0) -- (1,1,1);
 \node at (0,0,0) {$\bullet$};
 \node at (1,0,0) {$\bullet$};
 \node at (0,1,0) {$\bullet$};
 \node at (0,0,1) {$\bullet$};
 \node[gray] at (1,1,0) {$\bullet$};
 \node[gray] at (0,1,1) {$\bullet$};
 \node[gray!50] at (1,1,1) {$\bullet$};
	\end{scope}
 \begin{scope}[shift=({2.5,0}), y={(5mm, 3mm)}, z={(0cm,1cm)}]
		\draw (0,0,0) node[below left] {$u$} -- (1,0,0) node[below] {$u_i$} -- (1,1,0) node[right] {$u_{ij}$};
 \draw[dashed] (1,1,0) -- (0,1,0) node[above] {$u_j$} -- (0,0,0);
 \draw (0,0,0) -- (0,0,1) node[left] {$u_k$} -- (1,0,1) node[below right] {$u_{ki}$} -- (1,0,0) -- (0,0,0)
 (1,1,1) node[above] {$u_{ijk}$} -- (1,0,1) -- (1,0,0) -- (1,1,0) -- (1,1,1) ;
 \node at (0,0,0) {$\bullet$};
 \node at (1,0,0) {$\bullet$};
 \node at (0,1,0) {$\bullet$};
 \node at (0,0,1) {$\bullet$};
 \node[gray] at (1,1,0) {$\bullet$};
 \node[gray] at (1,0,1) {$\bullet$};
 \node[gray!50] at (1,1,1) {$\bullet$};
	\end{scope}	
 \begin{scope}[shift=({5,0}), y={(5mm, 3mm)}, z={(0cm,1cm)}]
		\draw[dashed] (0,0,0) node[below left] {$u$} -- (0,1,0) node[above left] {$u_j$} -- (0,1,1) node[above] {$u_{jk}$};
 \draw (0,1,1) -- (0,0,1) node[left] {$u_k$} -- (0,0,0)
 (0,0,0) -- (0,0,1) -- (1,0,1) node[below right] {$u_{ki}$} -- (1,0,0) node[below] {$u_i$} -- (0,0,0)
 (1,1,1) node[above] {$u_{ijk}$} -- (1,0,1) -- (0,0,1) -- (0,1,1) -- (1,1,1) ;
 \node at (0,0,0) {$\bullet$};
 \node at (1,0,0) {$\bullet$};
 \node at (0,1,0) {$\bullet$};
 \node at (0,0,1) {$\bullet$};
 \node[gray] at (1,0,1) {$\bullet$};
 \node[gray] at (0,1,1) {$\bullet$};
 \node[gray!50] at (1,1,1) {$\bullet$};
	\end{scope}	
\end{tikzpicture}
\caption{Multidimensional consistency demands that all three of these routes to calculate $u_{ijk}$ from initial values $u$, $u_i$, $u_j$, $u_k$ produce the same value.}
\label{fig-cac}
\end{figure}
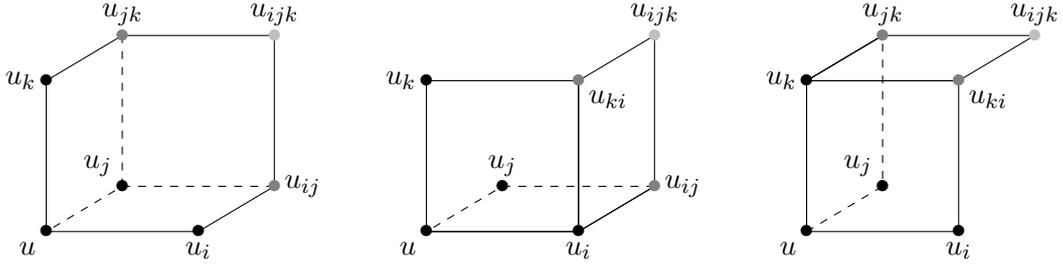

\begin{Remark}
 Note that, regardless of which signs occur in equation \eqref{quad-symmetries}, we have that
 \[ \inversion Q_{ij} := Q(u_{ijk},u_{jk},u_{ki},u_k,\alpha_i,\alpha_j) =Q(u_k,u_{ki},u_{jk},u_{ijk},\alpha_i,\alpha_j), \]
 because the last equality is obtained by applying both symmetries of the square twice. We will use the notation \smash{\raisebox{-1pt}{$\reverse$}} more generally to denote the \emph{point inversion} $u \leftrightarrow u_{ijk}$, $u_i \leftrightarrow u_{jk}$, etc.
\end{Remark}

The ABS list is given in Appendix~\ref{AppendixA}. By convention, its members are denoted H1--H3, A1--A2, and Q1--Q4. Some of them depend on a parameter $\delta$, and we use a subscript on the equation name to indicate whether or not this parameter is zero.

Each equation $Q(u,u_i,u_j,u_{ij}, \alpha_i, \alpha_j) = 0$ of the ABS list can be written in a \emph{three-leg form}. For H1, A1$_{\delta=0}$, and Q1$_{\delta=0}$, there exist functions $\psi$ and $\phi$ such that the multi-affine equation is equivalent to
\begin{equation}
\label{three-leg-0}
\phi(u,u_{ij}, \alpha_i-\alpha_j) = \psi(u,u_i,\alpha_i) - \psi(u,u_j,\alpha_j) .
\end{equation}
For the other quad equations, the most natural three-leg form has a multiplicative rather than an additive structure: there exist functions $\Psi$ and $\Phi$ such that the multi-affine equation is equivalent to
\[ \Phi(u,u_{ij}, \alpha_i-\alpha_j) = \frac{ \Psi(u,u_i,\alpha_i) }{ \Psi(u,u_j,\alpha_j) } . \]
We can bring this into an additive form by taking the logarithm. Since we are working over~$\C$, writing the logarithm of a quotient as the difference of logarithms may introduce an error of~$2 \pi {\rm i}$. 
So, putting $\psi = \log(\Psi)$ and $\phi = \log(\Phi)$, we find that (for H2, H3, A1$_{\delta\neq0}$, A2, Q1$_{\delta\neq0}$, Q2, Q3, Q4) the multi-affine equation is equivalent to
\begin{equation}
\label{three-leg-theta}
\phi(u,u_{ij}, \alpha_i-\alpha_j) = \psi(u,u_i,\alpha_i) - \psi(u,u_j,\alpha_j) + 2 \Theta \pi {\rm i} , \qquad \Theta \in \Z .
\end{equation}
Based on the discussion above, we could specify $\Theta \in \{-1,0,1\}$ in equation \eqref{three-leg-theta}. However, in~many cases, the $\psi$ and $\phi$ given in Appendix~\ref{AppendixA} are simplified by applying further logarithm identities to the right-hand sides of $\psi = \log(\Psi)$ and $\phi = \log(\Phi)$, which can lead to further multiples of $2 \pi {\rm i}$ being added. To accommodate this, we allow $\Theta \in \Z$.

\begin{Remark}
As an alternative to introducing the integer $\Theta$, we could use non-principal branches of the logarithm when putting $\psi = \log(\Psi)$ and $\phi = \log(\Phi)$. Then for every solution $(u,u_i,u_j,u_{ij})$ to the multi-affine quad equation, there exist a choice of branch for the logarithm $\psi = \log(\Psi)$ on the diagonal leg, such that
\[ \phi(u,u_{ij}, \alpha_i-\alpha_j) = \psi(u,u_i,\alpha_i) - \psi(u,u_j,\alpha_j) . \]
(In the expressions $\phi = \log(\phi)$ for the short legs, we can still fix the principal branch.)

Most papers on the subject mention neither the choice of branch cuts, nor the possibility of a term $2 \Theta \pi {\rm i}$ in the three-leg form. A notable exception, where these subtleties are explicitly addressed, is \cite{bobenko2012discrete}.
\end{Remark}

We denote the additive three-leg expression, based at the vertex $u$, by
\begin{equation}
	\mathcal{Q}_{ij}^{(u)} := \psi(u,u_i,\alpha_i) - \psi(u,u_j,\alpha_j) - \phi(u,u_{ij}, \alpha_i-\alpha_j) .
	\label{quad-u}
\end{equation}
We can also consider three-leg forms based at the other three vertices of the square:
\begin{align}
	&\mathcal{Q}_{ij}^{(u_i)}:= \psi(u_i,u_{ij},\alpha_j) - \psi(u_i,u,\alpha_i) - \phi(u_i,u_j, \alpha_j-\alpha_i) , \label{quad-ui}\\
	&\mathcal{Q}_{ij}^{(u_{ij})}:= \psi(u_{ij},u_j,\alpha_i) - \psi(u_{ij},u_i,\alpha_j) - \phi(u_{ij},u, \alpha_i-\alpha_j) , \label{quad-uij}\\
	&\mathcal{Q}_{ij}^{(u_j)}:= \psi(u_j,u,\alpha_j) - \psi(u_j,u_{ij},\alpha_i) - \phi(u_j,u_i, \alpha_j-\alpha_i) . \label{quad-uj}
\end{align}
The functions $\phi$ and $\psi$ are independent of the choice of base vertex. This reflects the rotational symmetry of the multi-affine quad equation. The three-leg forms based at different vertices are illustrated in Figure \ref{fig:3leg}.

Similarly, we denote the three-leg forms of \smash{$\inversion{Q}_{ij}$} by
\begin{align}
&	\mathcal{Q}_{ij}^{(u_{ijk})}:= \inversion{\mathcal{Q}_{ij}^{(u)}} = \psi(u_{ijk},u_{jk},\alpha_i) - \psi(u_{ijk},u_{ki},\alpha_j) - \phi(u_{ijk},u_{k}, \alpha_i-\alpha_j) , \label{quad-uijk} \\
&	\mathcal{Q}_{ij}^{(u_{jk})}:= \inversion{\mathcal{Q}_{ij}^{(u_i)}} = \psi(u_{jk},u_{k},\alpha_j) - \psi(u_{jk},u_{ijk},\alpha_i) - \phi(u_{jk},u_{ki}, \alpha_j-\alpha_i) , \label{quad-ujk}\\
&	\mathcal{Q}_{ij}^{(u_{k})}:= \inversion{\mathcal{Q}_{ij}^{(u_{ij})}} = \psi(u_{k},u_{ki},\alpha_i) - \psi(u_{k},u_{jk},\alpha_j) - \phi(u_{k},u_{ijk}, \alpha_i-\alpha_j) , \label{quad-uk}\\
&	\mathcal{Q}_{ij}^{(u_{ki})}:= \inversion{\mathcal{Q}_{ij}^{(u_j)}} = \psi(u_{ki},u_{ijk},\alpha_j) - \psi(u_{ki},u_{k},\alpha_i) - \phi(u_{ki},u_{jk}, \alpha_j-\alpha_i) . \label{quad-uki}
\end{align}
The convention we use in these notations is that the upper index denotes the vertex of the cube that the three-leg form is based at, and the lower indices indicate the directions spanning the face of the cube it is situated in.

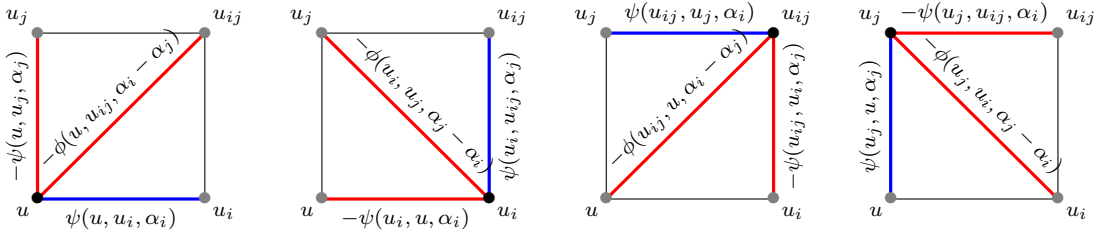
\begin{figure}[t]
\scriptsize
\centering
\begin{tikzpicture}[scale=2.2]
	\draw (0,0) node[below left] {$u$} -- (1,0) node[below right] {$u_i$} -- (1,1) node[above right] {$u_{ij}$} -- (0,1) node[above left] {$u_j$} -- cycle;
	\begin{scope}[very thick]
		\draw[blue] (0,0) -- node[below] {\color{black}$\psi(u,u_i,\alpha_i)$} (1,0);
		\draw[red] (0,1) -- node[above, rotate=90] {\color{black}$-\psi(u,u_j,\alpha_j)$} (0,0);
		\draw[red] (1,1) -- node[above, rotate=45] {\color{black}\ $-\phi(u,u_{ij}, \alpha_i-\alpha_j)$} (0,0);
		\node at (0,0) {\normalsize $\bullet$};
		\node[gray] at (1,0) {\normalsize $\bullet$};
		\node[gray] at (0,1) {\normalsize $\bullet$};
		\node[gray] at (1,1) {\normalsize $\bullet$};
	\end{scope}
	\begin{scope}[shift=({1.7,0})]
		\draw (0,0) node[below left] {$u$} -- (1,0) node[below right] {$u_i$} -- (1,1) node[above right] {$u_{ij}$} -- (0,1) node[above left] {$u_j$} -- cycle;
		\begin{scope}[very thick]
			\draw[blue] (1,0) -- node[below, rotate=90] {\color{black}$\psi(u_i,u_{ij},\alpha_j)$} (1,1);
			\draw[red] (0,0) -- node[below] {\color{black}$-\psi(u_i,u,\alpha_i)$} (1,0);
			\draw[red] (0,1) -- node[above, rotate=-45] {\color{black}\ $-\phi(u_i,u_j, \alpha_j-\alpha_i)$} (1,0);
			\node at (1,0) {\normalsize $\bullet$};
 		\node[gray] at (0,0) {\normalsize $\bullet$};
 		\node[gray] at (0,1) {\normalsize $\bullet$};
 		\node[gray] at (1,1) {\normalsize $\bullet$};
		\end{scope}
	\end{scope}
	\begin{scope}[shift=({3.4,0})]
		\draw (0,0) node[below left] {$u$} -- (1,0) node[below right] {$u_i$} -- (1,1) node[above right] {$u_{ij}$} -- (0,1) node[above left] {$u_j$} -- cycle;
		\begin{scope}[very thick]
			\draw[blue] (1,1) -- node[above] {\color{black}$\psi(u_{ij},u_j,\alpha_i)$} (0,1);
			\draw[red] (1,0) -- node[below, rotate=90] {\color{black}$-\psi(u_{ij},u_i,\alpha_j)$} (1,1);
			\draw[red] (0,0) -- node[above, rotate=45] {\color{black}\ $-\phi(u_{ij},u, \alpha_i-\alpha_j)$} (1,1);
			\node at (1,1) {\normalsize $\bullet$};
 		 \node[gray] at (1,0) {\normalsize $\bullet$};
 		\node[gray] at (0,1) {\normalsize $\bullet$};
 		\node[gray] at (0,0) {\normalsize $\bullet$};
		\end{scope}
	\end{scope}
	\begin{scope}[shift=({5.1,0})]
		\draw (0,0) node[below left] {$u$} -- (1,0) node[below right] {$u_i$} -- (1,1) node[above right] {$u_{ij}$} -- (0,1) node[above left] {$u_j$} -- cycle;
		\begin{scope}[very thick]
			\draw[blue] (0,1) -- node[above, rotate=90] {\color{black}$\psi(u_j,u,\alpha_j)$} (0,0);
			\draw[red] (1,1) -- node[above] {\color{black}$-\psi(u_j,u_{ij},\alpha_i)$} (0,1);
			\draw[red] (1,0) -- node[above, rotate=-45] {\color{black}\ $-\phi(u_j,u_i, \alpha_j-\alpha_i)$} (0,1);
			\node at (0,1) {\normalsize $\bullet$};
 \node[gray] at (1,0) {\normalsize $\bullet$};
 		\node[gray] at (0,0) {\normalsize $\bullet$};
 		\node[gray] at (1,1) {\normalsize $\bullet$};
		\end{scope}
	\end{scope}
\end{tikzpicture}
\caption{Graphical representation of the four orientations of a three-leg form, where the colour reflects the sign of the leg function.}
\label{fig:3leg}
\end{figure}

\begin{Proposition}
For ${\rm H}2$, ${\rm H}3$, ${\rm A}1_{\delta=1}$, ${\rm A}2$, ${\rm Q}1_{\delta=1}$, ${\rm Q}2$, ${\rm Q}3$, ${\rm Q}4$, the following are equivalent:
\begin{enumerate}\itemsep=0pt
 \item[$1.$] The multi-affine quad equations holds, i.e., $Q_{ij} = 0$.
 \item[$2.$] There exists a $\Theta \in \Z$ such that \smash{$\mathcal{Q}_{ij}^{(u)} = 2 \Theta \pi {\rm i}$}.
 \item[$3.$] There exists a $\Theta \in \Z$ such that \smash{$\mathcal{Q}_{ij}^{(u_i)} = 2 \Theta \pi {\rm i}$}.
 \item[$4.$] There exists a $\Theta \in \Z$ such that \smash{$\mathcal{Q}_{ij}^{(u_j)} = 2 \Theta \pi {\rm i}$}.
 \item[$5.$] There exists a $\Theta \in \Z$ such that \smash{$\mathcal{Q}_{ij}^{(u_{ij})} = 2 \Theta \pi {\rm i}$}.
\end{enumerate}
For ${\rm H}1$, ${\rm A}1_{\delta=0}$, ${\rm Q}1_{\delta=0}$, we have
\[ Q_{ij} = 0 \ \Leftrightarrow\ \mathcal{Q}_{ij}^{(u)} = 0 \ \Leftrightarrow\ \mathcal{Q}_{ij}^{(u_i)} = 0 \ \Leftrightarrow\ \mathcal{Q}_{ij}^{(u_j)} = 0 \ \Leftrightarrow\ \mathcal{Q}_{ij}^{(u_{ij})} = 0 . \]
\end{Proposition}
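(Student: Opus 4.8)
The plan is to reduce everything to the multiplicative three-leg forms $\Phi(u,u_{ij},\alpha_i-\alpha_j) = \Psi(u,u_i,\alpha_i)/\Psi(u,u_j,\alpha_j)$ tabulated in Appendix~\ref{AppendixA}, whose equivalence to $Q_{ij}=0$ (in both directions) is the classical ABS result that I would take as given. The bridge from the multiplicative form to the additive expression $\mathcal{Q}_{ij}^{(u)}$ of \eqref{quad-u} is exponentiation. Because $\psi=\log\Psi$ and $\phi=\log\Phi$ for some choice of branch, we have $e^{\psi}=\Psi$ and $e^{\phi}=\Phi$ \emph{regardless} of that branch, so
\[ e^{\mathcal{Q}_{ij}^{(u)}} = \frac{\Psi(u,u_i,\alpha_i)}{\Psi(u,u_j,\alpha_j)\,\Phi(u,u_{ij},\alpha_i-\alpha_j)} . \]
The condition $\mathcal{Q}_{ij}^{(u)}\in 2\pi{\rm i}\,\Z$ is precisely $e^{\mathcal{Q}_{ij}^{(u)}}=1$, i.e.\ the multiplicative three-leg form. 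This establishes $1\Leftrightarrow 2$ in one stroke, and it is exactly here that the branch-cut difficulty dissolves: exponentiation collapses the ambiguity $2\Theta\pi{\rm i}$ to the factor $e^{2\Theta\pi{\rm i}}=1$, so no particular branch need be fixed.

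For the other three corners I would invoke the $D_4$ symmetry. The expressions $\mathcal{Q}_{ij}^{(u_i)}$, $\mathcal{Q}_{ij}^{(u_j)}$, $\mathcal{Q}_{ij}^{(u_{ij})}$ use the \emph{same} functions $\psi$, $\phi$ but are centred at the respective corner, and each exponentiates to the multiplicative three-leg form based there. By the square symmetries \eqref{quad-symmetries}, the quad equation written from any corner equals $\pm Q_{ij}$, hence vanishes exactly when $Q_{ij}=0$. Repeating the exponentiation argument at each corner gives $1\Leftrightarrow 3$, $1\Leftrightarrow 4$, $1\Leftrightarrow 5$, and transitivity through statement~1 yields $2\Leftrightarrow 3\Leftrightarrow 4\Leftrightarrow 5$. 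The integer $\Theta$ may differ between corners, but this is harmless since each clause only asserts the existence of some $\Theta\in\Z$.

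For H1, A1$_{\delta=0}$, and Q1$_{\delta=0}$ the three-leg form \eqref{three-leg-0} is genuinely additive: no logarithm is taken, so there is no branch ambiguity and no $\Theta$. Here $Q_{ij}=0\Leftrightarrow\mathcal{Q}_{ij}^{(u)}=0$ is immediate from the ABS form, and the four-corner equivalence again follows from \eqref{quad-symmetries}.

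The step I expect to demand the most care is the input I am treating as given: verifying, equation by equation, that the $\Psi$ and $\Phi$ of Appendix~\ref{AppendixA} satisfy the multiplicative three-leg identity as an algebraic consequence of $Q_{ij}=0$, and conversely that the three-leg form returns $Q_{ij}=0$ (where multi-affineness drives the converse). This is case-by-case algebra rather than a conceptual obstacle and can be discharged with the symbolic checks in \cite{code}. A secondary subtlety is that the $\psi$, $\phi$ of Appendix~\ref{AppendixA} have been simplified via further logarithm identities, which can shift $\Theta$; but since we claim only $\Theta\in\Z$ rather than $\Theta\in\{-1,0,1\}$, the exponentiation argument is entirely insensitive to this.
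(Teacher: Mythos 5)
Your proposal is correct and takes essentially the same route as the paper, which states this Proposition without a separate formal proof and instead relies on the preceding discussion: the multiplicative three-leg form is equivalent to $Q_{ij}=0$ (the classical ABS result, taken as given exactly as you do), and passing between multiplicative and additive forms only costs multiples of $2\pi{\rm i}$ — your exponentiation of $\mathcal{Q}_{ij}^{(u)}$, which collapses all branch and simplification ambiguities since ${\rm e}^{\psi}=\Psi$ and ${\rm e}^{\phi}=\Phi$ hold for any branch, is that argument run in the reverse direction. Your treatment of the remaining corners via the $D_4$ symmetries \eqref{quad-symmetries} likewise mirrors the paper's observation that $\psi$ and $\phi$ are independent of the base vertex, reflecting the rotational symmetry of the multi-affine quad equation.
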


\begin{ExampleH2}
Consider equation H2, for which
\[ Q_{ij} = (u - u_{ij}) (u_i - u_j) - (\alpha_i - \alpha_j) (u + u_i + u_{ij} + u_j) -\alpha_i^2 + \alpha_j^2 . \]
Its multiplicative three-leg form is given by
\[ \frac{\alpha_i - \alpha_j + u - u_{ij}}{-\alpha_i + \alpha_j + u - u_{ij}} \cdot \frac{\alpha_i + u + u_i}{\alpha_j + u + u_j} = 1 . \]
Hence, the additive three-leg form is given by
\[ \log(\alpha_i + u + u_i) - \log(\alpha_j + u + u_j) - \log \left( \frac{-\alpha_i + \alpha_j + u - u_{ij}}{\alpha_i - \alpha_j + u - u_{ij}} \right) = 2 \Theta \pi {\rm i} . \]
(Note that the $\phi$ given in Appendix~\ref{AppendixA} expands the logarithm of the quotient as a difference of logarithms, which could be off by $2 \pi {\rm i}$ and hence could change the value of $\Theta$.)

To illustrate the importance of the right-hand side of the three leg form, consider $u$, $u_i$, $u_j$, $u_{ij}$, $\alpha_i$, $\alpha_j$ that satisfy
\begin{align*}
	&\alpha_i + u + u_i = \exp\bigl(\tfrac23 \pi {\rm i}\bigr) , \\
	&\alpha_j + u + u_j = \exp\bigl(-\tfrac23 \pi {\rm i}\bigr) , \\
	&\frac{-\alpha_i + \alpha_j + u - u_{ij}}{\alpha_i - \alpha_j + u - u_{ij}} = \exp\bigl(-\tfrac23 \pi {\rm i}\bigr) .
\end{align*}
This is a solution to the $Q_{ij} = 0$ because it manifestly solves the equation in multiplicative three-leg form. However, the additive three-leg form does not equal zero:
\[
	\log(\alpha_i + u + u_i) - \log(\alpha_j + u + u_j) - \log \left( \frac{-\alpha_i + \alpha_j + u - u_{ij}}{\alpha_i - \alpha_j + u - u_{ij}} \right)
	= 2 \pi {\rm i} .
\]
\end{ExampleH2}

On the faces of a cube adjacent to a fixed vertex, the three-leg forms based at that vertex naturally combine to form an equation on a tetrahedral stencil, illustrated in Figure \ref{fig-tetra}, for example,
\begin{align}
 \mathcal{T}^{(u)}:={}& \phi(u,u_{ij},\alpha_i-\alpha_j) + \phi(u,u_{jk},\alpha_j-\alpha_k) + \phi(u,u_{ki},\alpha_k-\alpha_i) \label{tetra-u}\\
 ={}& {-}\mathcal{Q}^{(u)}_{ij} - \mathcal{Q}^{(u)}_{jk} - \mathcal{Q}^{(u)}_{ki} . \notag
\end{align}
Then the multi-affine quad equations imply that $\mathcal{T}^{(u)}$ is a multiple of $2 \pi {\rm i}$.
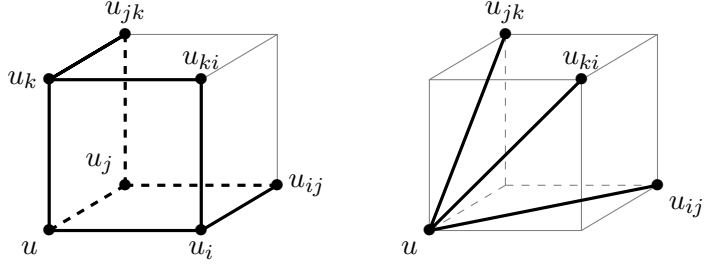
\begin{figure}[t]
\centering
\begin{tikzpicture}[scale=2]
	\begin{scope}[y={(5mm, 3mm)}, z={(0cm,1cm)}]
 \draw[gray, every edge/.append style={dashed}]
		(1,0,0) -- (0,0,0) -- (0,0,1) -- (0,1,1) -- (1,1,1) -- (1,1,0) -- cycle -- (1,0,1) -- (0,0,1)
		(1,0,1) -- (1,1,1)
		(0,1,0) edge (0,0,0) edge (0,1,1) edge (1,1,0);
		\draw[very thick, every edge/.append style={dashed}]
		(1,1,0) node[right] {$u_{ij}$} -- (1,0,0) node[below] {$u_i$} -- (0,0,0) node[below left] {$u$} -- (0,0,1) node[left] {$u_k$} -- (0,1,1) node[above] {$u_{jk}$} -- (0,0,1) -- (1,0,1) node[above] {$u_{ki}$} -- (1,0,0)
		(0,1,0) edge (0,0,0) edge (0,1,1) edge (1,1,0) node[above left] {$u_j$};
 \node at (0,0,0) {$\bullet$};
 \node at (1,0,0) {$\bullet$};
 \node at (0,1,0) {$\bullet$};
 \node at (0,0,1) {$\bullet$};
 \node at (1,1,0) {$\bullet$};
 \node at (0,1,1) {$\bullet$};
 \node at (1,0,1) {$\bullet$};
	\end{scope}
 \begin{scope}[shift=({2.5,0}),scale=1, y={(5mm, 3mm)}, z={(0cm,1cm)}]
 	\draw[gray, every edge/.append style={dashed}]
		(1,0,0) -- (0,0,0) -- (0,0,1) -- (0,1,1) -- (1,1,1) -- (1,1,0) -- cycle -- (1,0,1) -- (0,0,1)
		(1,0,1) -- (1,1,1)
		(0,1,0) edge (0,0,0) edge (0,1,1) edge (1,1,0);
		\draw[very thick, every edge/.append style={dashed}]
 (0,0,0) node[below left] {$u$} -- (0,1,1) node[above] {$u_{jk}$}
		(0,0,0) -- (1,0,1) node[above] {$u_{ki}$}
		(0,0,0) -- (1,1,0) node[below right] {$u_{ij}$};
 \node at (0,0,0) {$\bullet$};
 \node at (1,1,0) {$\bullet$};
 \node at (0,1,1) {$\bullet$};
 \node at (1,0,1) {$\bullet$};
	\end{scope}	
\end{tikzpicture}
\caption{Tetrahedron equation from three quad equations.}
\label{fig-tetra}
\end{figure}

To relate the three-leg tetrahedron equation \eqref{tetra-u} to a multi-affine equation, we rely on some relations between the equations of the ABS list. The ABS equations of type Q have the property that their short and long leg functions are the same: $\phi=\psi$. Each equation of type~H or~A shares its long leg function $\phi$ with an equation of type~Q, but has a different short leg function $\psi$.
Thus, the quad equations of type~H and type~A have the same the tetrahedron equation as an equation of type Q. In fact, a multi-affine tetrahedron equation $T=0$, equivalent to $\mathcal{T}^{(u)} = 0$ or $\mathcal{T}^{(u)} = 2 \Theta \pi {\rm i}$, can be stated in terms of a quad polynomial of type Q, evaluated on a tetrahedron stencil:
\[ T(u,u_{ij},u_{jk},u_{ki},\alpha_i,\alpha_j,\alpha_k)
 = Q^{(\text{type Q})}(u,u_{ij},u_{jk},u_{ki},\alpha_i-\alpha_j,-\alpha_j+\alpha_k) . \]
Here $Q^{(\text{type Q})}$ represents the multi-affine polynomial corresponding to a quad equation of type~Q.

The same argument applies to the other four variables, so we can derive a second tetrahedron equation in three-leg form, which is related to the first by point inversion:
\begin{equation}
 \label{tetra-uijk}
 \mathcal{T}^{(u_{ijk})} := \phi(u_{ijk},u_{k},\alpha_i-\alpha_j) + \phi(u_{ijk},u_{i},\alpha_j-\alpha_k) + \phi(u_{ijk},u_{j},\alpha_k-\alpha_i)
\end{equation}

\begin{ExampleH1}
One of the simplest equations on the ABS list is the lattice potential KdV equation, labelled H1, for which
\[
 Q_{ij} = (u_i - u_j)(u - u_{ij}) - \alpha_i + \alpha_j.
\]
In this case, the three-leg form is easily found in additive form, with
\begin{equation}
\label{H1-three-leg}
\begin{split}
 \psi(u,u_i,\alpha_i) = u_i
 \qquad \text{and} \qquad
 \phi(u,u_{ij},\alpha_i-\alpha_j) = \frac{\alpha_i - \alpha_j}{u - u_{ij}} .
\end{split}
\end{equation}
The quad equation $Q_{ij} = 0$ is equivalent to the three-leg equation \smash{$\mathcal{Q}^{(u)}_{ij} = 0$}, where
\begin{align*}
 \mathcal{Q}^{(u)}_{ij} & = \psi(u,u_i,\alpha_i) - \psi(u,u_j,\alpha_j) - \phi(u,u_{ij},\alpha_i-\alpha_j) \notag \\
 & = u_i - u_j - \frac{\alpha_i - \alpha_j}{u - u_{ij}} = \frac{Q_{ij}}{u - u_{ij}} .
\end{align*}
Note that there is no $2 \Theta \pi {\rm i}$ term, because for H1 we do not need to take a logarithm to arrive at an additive three-leg form.

The three-leg form $\mathcal{T}^{(u)} = 0$ of the tetrahedron equation is given by
\begin{align*}
 \mathcal{T}^{(u)}
 & = \phi(u,u_{ij},\alpha_i-\alpha_j) + \phi(u,u_{jk},\alpha_j-\alpha_k) + \phi(u,u_{ki},\alpha_k-\alpha_i) \\
 &=\frac{\alpha_i - \alpha_j}{u - u_{ij}} + \frac{\alpha_j - \alpha_k}{u - u_{jk}} + \frac{\alpha_k - \alpha_i}{u - u_{ki}} .
\end{align*}
The tetrahedron equation can also be expressed in multi-affine form as $T=0$ with
\begin{gather*}
 T(u,u_{ij},u_{jk},u_{ki},\alpha_i,\alpha_j,\alpha_k) \\
 \qquad{} = (\alpha_i - \alpha_j)(u u_{ij} + u_{jk}u_{ki}) + (\alpha_j - \alpha_k)(u u_{jk} + u_{ij} u_{ki}) + (\alpha_k - \alpha_i)(u u_{ki} + u_{ij} u_{jk} ) ,
\end{gather*}
which is exactly the multi-affine polynomial of ABS equation Q1$_{\delta=0}$ (see Appendix~\ref{AppendixA}). The relation between the two forms of the tetrahedron equation is given by
\[ \mathcal{T}^{(u)} = \frac{T}{(u - u_{ij}) (u - u_{jk}) (u - u_{jk})} . \]
\end{ExampleH1}

\subsection{Planar Lagrangian structure}
\label{subsec-planar}

Quad equations are an example of 2-dimensional partial difference equations. The traditional point of view on variational principles for such equations is that they minimise a sum, over the planar lattice $\Z^2$, of some discrete Lagrangian. Unlike Lagrangian multiform theory, this does not capture the multidimensional consistency of these equations, but finding planar Lagrangian structures is a useful initial step towards studying Lagrangian multiforms.

Already in the original ABS paper \cite{adler2003classification}, Lagrangians in this sense were constructed for all equations from the ABS list. This construction is based on the observation that, after a suitable variable transformation, there exist functions $L$ and $\Lambda$ such that the leg functions $\psi$ and $\phi$ can be alternatively expressed as
\begin{align}
	&\psi(x,y, \alpha)= \pdv{}{x} L(x,y,\alpha) , \nonumber\\
	&\phi(x,y, \alpha-\beta)= \pdv{}{x} \Lambda(x,y,\alpha-\beta) ,\label{antiderivative1}
\end{align}
and as
\begin{align}
	&\psi(y,x,\alpha)= \pdv{}{y} L(x,y,\alpha) ,\nonumber \\
	&\phi(y,x, \alpha- \beta)= \pdv{}{y} \Lambda(x,y,\alpha- \beta) .\label{antiderivative2}
\end{align}
These conditions suggest that $L$ and $\Lambda$ may have the symmetries
\begin{align}
	&L(x,y,\alpha)= L(y,x, \alpha) , \nonumber\\
	&\Lambda(x,y,\alpha-\beta)= \Lambda(y,x,\alpha-\beta) ,\nonumber \\
	&\Lambda(x,y,\alpha-\beta)= -\Lambda(x,y,\beta-\alpha) ,\label{L-symmetries}
\end{align}
but this is not a necessary condition. While it is possible to find $L$ and $\Lambda$ satisfying \eqref{L-symmetries}, it is often more convenient to work with non-symmetric versions. In Appendix~\ref{AppendixA}, we list both options where appropriate.
The existence of these functions $L$ and $\Lambda$ was shown by implicit arguments in \cite{adler2003classification}. For most of the ABS equations, explicit expressions for $L$ and $\Lambda$ were found in \cite{lobb2009lagrangian, vermeeren2019variational}.

\begin{ExampleH1}
 We can integrate the leg functions \eqref{H1-three-leg} for H1 to find
 \begin{align}
 & L(u,u_i,\alpha_i)= u u_i ,\nonumber \\
 & \Lambda(u,u_{ij},\alpha_i-\alpha_j)= (\alpha_i - \alpha_j) \log(u - u_{ij}) . \label{H1-L-Lambda}
 \end{align}
 Note that $\Lambda$ is not quite symmetric under exchange of $u$ and $u_{ij}$, but these functions do satisfy the conditions \eqref{antiderivative1} and \eqref{antiderivative2}.
\end{ExampleH1}

\begin{ExampleH3}
 $Q_{ij}(u,u_i,u_{ij},u_j,\alpha_i,\alpha_j) = (u u_i + u_{ij} u_j) \alpha_i - (u_i u_{ij} + u u_j) \alpha_j$ has a~multiplicative three-leg form with
 \begin{align*}
 \Psi(u,u_i,\alpha_i) = u_i , \qquad
 \Phi(u,u_{ij},\alpha_i-\alpha_j) = \frac{\alpha_i \bigl(1 - \frac{\alpha_j u}{\alpha_i u_{ij}}\bigr) }{ \alpha_j \bigl( 1 - \frac{\alpha_i u}{\alpha_j u_{ij}} \bigr)} ,
 \end{align*}
 and hence an additive three-leg form with
 \begin{align*}
 &\psi(u,u_i,\alpha_i)= \log(u_i) , \\
 &\phi(u,u_{ij},\alpha_i-\alpha_j)= \log \biggl(1 - \frac{\alpha_j u}{\alpha_i u_{ij}}\biggr) - \log \biggl( 1 - \frac{\alpha_i u}{\alpha_j u_{ij}} \biggr) + \log(\alpha_i) - \log(\alpha_j) .
 \end{align*}
 Note that the properties \eqref{antiderivative1} and \eqref{antiderivative2} imply that the partial derivative of $\psi$ with respect to its second entry must be a symmetric function of $u$ and $u_i$:
 \[ \pdv{\psi(u,u_i,\alpha_i)}{u_i} = \pdv{^2 L(u,u_i,\alpha_i)}{u \partial u_i} = \pdv{\psi(u_i,u,\alpha_i)}{u} , \]
 which is not the case for $\psi(u,u_i,\alpha_i) = \log(u_i)$. Hence, this three-leg form is not suitable to construct a Lagrangian. We can bring it into a suitable form with the transformation $u = {\rm e}^U$. It will be convenient to also transform the parameters and set $\alpha = {\rm e}^A$. Then we have a three-leg form with
 \begin{align*}
 &\psi(U,U_i,A_i)= -U_i , \\
 &\phi(U,U_{ij},A_i-A_j)= \log \bigl( 1 - {\rm e}^{A_i-A_j+U-U_{ij}} \bigr) - \log \bigl(1 - {\rm e}^{-A_i+A_j+U-U_{ij}} \bigr)- A_i + A_j ,
 \end{align*}
 where we have introduced an overall minus sign that is irrelevant here, but makes sure that these expressions will be consistent with Lemma~\ref{lemma-dSdalpha} of Section~\ref{subsec-clos}.
 For these functions, we can find $L$ and $\Lambda$ satisfying \eqref{antiderivative1} and \eqref{antiderivative2}:
 \begin{align}
 &L(U,U_i,A_i) = -U U_i \label{H3-L-Lambda}\\
 &\Lambda(U,U_{ij},A_i-A_j) = \dilog \bigl( {\rm e}^{-A_i+A_j+U-U_{ij}} \bigr) - \dilog \bigl( {\rm e}^{A_i-A_j+U-U_{ij}} \bigr) - (A_i-A_j) (U-U_{ij}) ,\nonumber
 \end{align}
 where $\dilog$ denotes the dilogarithm function, which satisfies
 \[
 \pdv{\dilog(z)}{z} = - \frac{\log(1-z)}{z}.
 \]
\end{ExampleH3}

The functions $L$ and $\Lambda$ can be combined into a 4-point Lagrangian \cite{adler2003classification}
\begin{equation}
 \label{L_trident}
 \cL_\trileg(U,U_i,U_j,U_{ij},A_i,A_j) := L(U,U_i,A_i) - L(U,U_j,A_j) - \Lambda(U,U_{ij}, A_i-A_j) ,
\end{equation}
where we have written the fields and parameters with capital letters to reflect the fact that a~variable transformation may be necessary and the quad equation may not be multi-affine in these variables. We refer to the Lagrangian \eqref{L_trident} as the \emph{trident Lagrangian}, inspired by the three-leg structure depicted Figure \ref{fig-plan-trid}\,(a). In the traditional variational framework, the action is obtained by summing this over the plane,
\[
S_{\Z^2} := \sum_{\vec n \in \Z^2} \cL_\trileg ( U(\vec n), U(\vec n + \vec e_i), U(\vec n + \vec e_j), U(\vec n + \vec e_i + \vec e_j), A_i, A_j ) ,
 \]
and the Euler--Lagrange equation, illustrated in Figure \ref{fig-plan-trid}\,(b), is obtained by varying the field $U$ at one lattice site:
\begin{align*}
 0 = \pdv{S_{\Z^2}}{U}
 &=\psi(U,U_i,A_i) - \psi(U,U_j,A_j) - \phi(U,U_{ij}, A_i-A_j) \\
 &\quad{} + \psi(U,U_{-i},A_i) - \psi(U,U_{-j},A_j) - \phi(U,U_{-i-j}, A_i-A_j) ,
\end{align*}
where the subscript $-i$ denotes a shift in the negative direction along the $i$-th coordinate axis. This Euler--Lagrange equation is the sum of two shifted instances of the quad equation, in the form \eqref{quad-u} and \eqref{quad-uij} respectively, with $\Theta = 0$.

\begin{figure}[t]
\centering
\begin{tikzpicture}[scale=3]
 \begin{scope}[scale=1]
 	\draw (0,0) node[below left] {$U$} -- (1,0) node[below right] {$U_i$} -- (1,1) node[above right] {$U_{ij}$} -- (0,1) node[above left] {$U_j$} -- cycle;
 \draw[very thick, blue] (0,0) -- node[below, black] {$L(U,U_i,A_i)$} (1,0);
 \draw[very thick, red] (0,1) -- node[above, rotate=90, black] {$-L(U,U_j,A_j)$} (0,0);
 \draw[very thick, red] (1,1) -- node[above, rotate=45, black] {\ $-\Lambda(U,U_{ij}, A_i-A_j)$} (0,0);
 \node at (.5,-.55) {(a)};
 \node at (0,0) {$\bullet$};
 \node at (1,0) {$\bullet$};
 \node at (0,1) {$\bullet$};
 \node at (1,1) {$\bullet$};
 \end{scope}
	\begin{scope}[shift=({3,.5}), scale=.7]
 	\draw (-1,1) -- (0,1) node[above] {$U_j$} -- (1,1) node[above right] {$U_{ij}$} -- (1,0) node[right] {$U_i$} -- (1,-1) -- (0,-1) node[below] {$U_{-j}$} -- (-1,-1) node[below left] {$U_{-i,-j}$} -- (-1,0) node[left] {$U_{-i}$} -- cycle;
 \node[below right] at (0,0) {$U$};
 \draw[very thick, blue] (0,0) -- (1,0);
 \draw[very thick, blue] (-1,0) -- (0,0);
 \draw[very thick, red] (0,1) -- (0,0);
 \draw[very thick, red] (0,0) -- (0,-1);
 \draw[very thick, red] (-1,-1) -- (0,0);
 \draw[very thick, red] (0,0) -- (1,1);
 \node at (0,-1.5) {(b)};
 \node at (0,0) {$\bullet$};
 \node at (1,0) {$\bullet$};
 \node at (0,1) {$\bullet$};
 \node at (-1,0) {$\bullet$};
 \node at (-1,-1) {$\bullet$};
 \node at (0,-1) {$\bullet$};
 \node at (1,1) {$\bullet$};
	\end{scope}
\end{tikzpicture}
\caption{(a) The stencil of the trident Lagrangian. (b) The discrete Euler--Lagrange equation involves three-leg structures on two squares.}
\label{fig-plan-trid}
\end{figure}
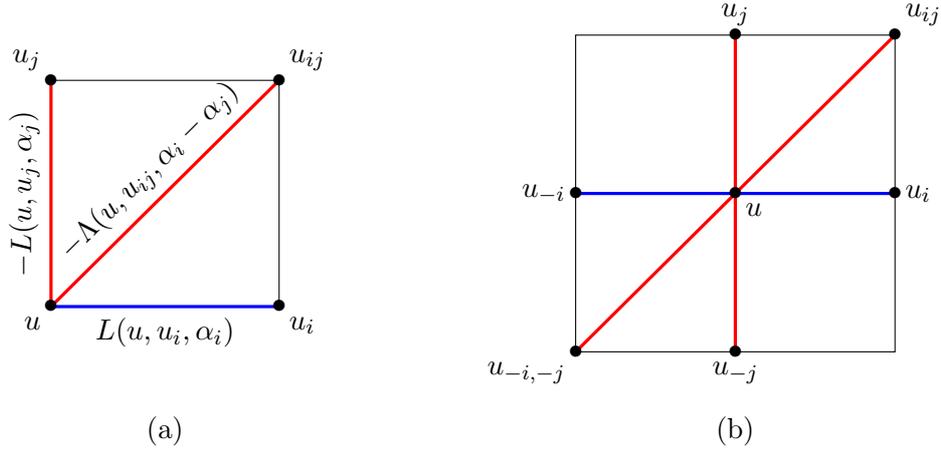

In order to get the general form of the three-leg equation (for H2, H3, A1$_{\delta=1}$, A2, Q1$_{\delta=1}$, Q2, Q3), we can add a term $2 \Theta U \pi {\rm i}$ at each lattice site and consider the action
\begin{align}
 S_{\Z^2}^{\vec \Theta}:={}& \sum_{\vec n \in \Z^2} \cL_\trileg ( U(\vec n), U(\vec n + \vec e_i), U(\vec n + \vec e_j), U(\vec n + \vec e_i + \vec e_j), A_i, A_j ) \nonumber\\
 &{} + \sum_{\vec n \in \Z^2} 2 \Theta(\vec n) U(\vec n) \pi {\rm i} .\label{planar-action-theta}
 \end{align}
Note that we do not get an Euler--Lagrange equation of the form \smash{$\pdv{S_{\Z^2}^{\vec\Theta}}{\Theta} = 0$} because we cannot take infinitesimal variations of the integer-valued $\Theta$. Note also that infinitesimal variations of~$U$, away from any branch cuts, do not induce any change in $\Theta$.

Alternatively, the functions $L$ and $\Lambda$ can be combined into a 3-point Lagrangian \cite{lobb2009lagrangian}
\begin{equation}
 \label{L_triang}
 \cL_\triang(U,U_i,U_j,A_i,A_j) := L(U,U_i,A_i) - L(U,U_j,A_j) - \Lambda(U_i,U_j, A_i-A_j) ,
\end{equation}
which we will refer to as the \emph{triangle Lagrangian}, inspired by the three-leg structure depicted Figure \ref{fig-triang}\,(a). The traditional Euler--Lagrange equation of $\cL_\triang$, depicted in Figure \ref{fig-triang}\,(b), is
\begin{align*}
 0 &= \pdv{}{U} ( \cL_\triang(U,U_i,U_j,A_i,A_j) + \cL_\triang(U_{-i},U,U_{-i,j},A_i,A_j) + \cL_\triang(U_{-j},U_{i,-j},U,A_i,A_j) ) \\
 &=\psi(U,U_i,A_i) - \psi(U,U_{-j},A_j) - \phi(U,U_{i,-j}, A_i-A_j) \\
 &\quad{} + \psi(U,U_{-i},A_i) - \psi(U,U_j,A_j) - \phi(U,U_{-i,j}, A_i-A_j) .
\end{align*}
This Euler--Lagrange equation is the sum of two shifted instances of the quad equation, in the form \eqref{quad-ui} and \eqref{quad-uj} respectively, with $\Theta = 0$. Again, to recover the equations with general values of $\Theta$, we can add $2 \Theta U \pi {\rm i}$ to the action at each lattice site.

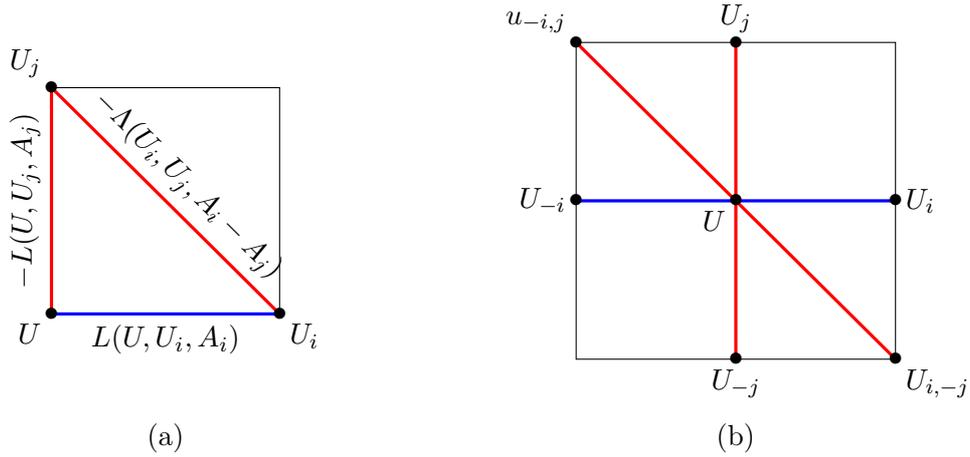
\begin{figure}[t]
\centering
\begin{tikzpicture}[scale=3]
	\draw (0,0) node[below left] {$U$} -- (1,0) node[below right] {$U_i$} -- (1,1) -- (0,1) node[above left] {$U_j$} -- cycle;
	\draw[very thick, blue] (0,0) -- node[below, black] {$L(U,U_i,A_i)$} (1,0);
	\draw[very thick, red] (0,1) -- node[above, rotate=90, black] {$-L(U,U_j,A_j)$} (0,0);
	\draw[very thick, red] (0,1) -- node[above, rotate=-45, black] {\ $-\Lambda(U_i,U_j, A_i-A_j)$} (1,0);
 \node at (.5,-.55) {(a)};
 \node at (0,0) {$\bullet$};
 \node at (1,0) {$\bullet$};
 \node at (0,1) {$\bullet$};

	\begin{scope}[shift=({3,.5}), scale=.7]
 	\draw (-1,1) node[above left] {$U_{-i,j}$} -- (0,1) node[above] {$U_j$} -- (1,1) -- (1,0) node[right] {$U_i$} -- (1,-1) node[below right] {$U_{i,-j}$} -- (0,-1) node[below] {$U_{-j}$} -- (-1,-1) -- (-1,0) node[left] {$U_{-i}$} -- cycle;
 \node[below left] at (0,0) {$U$};
 \draw[very thick, blue] (0,0) -- (1,0);
 \draw[very thick, blue] (-1,0) -- (0,0);
 \draw[very thick, red] (0,1) -- (0,0);
 \draw[very thick, red] (0,0) -- (0,-1);
 \draw[very thick, red] (-1,1) -- (0,0);
 \draw[very thick, red] (0,0) -- (1,-1);
 \node at (0,-1.5) {(b)};
 \node at (0,0) {$\bullet$};
 \node at (1,0) {$\bullet$};
 \node at (0,1) {$\bullet$};
 \node at (-1,0) {$\bullet$};
 \node at (-1,1) {$\bullet$};
 \node at (0,-1) {$\bullet$};
 \node at (1,-1) {$\bullet$};
	\end{scope}
\end{tikzpicture}
\caption{(a) The stencil of the triangle Lagrangian. (b) The discrete Euler--Lagrange equation involves three-leg structures on two squares.}
\label{fig-triang}
\end{figure}

\section{Lagrangian multiforms}
\label{sec-multiforms}

\subsection{General theory}
\label{sec-multiform-general}

In \cite{lobb2009lagrangian}, the crucial observation was made that the Lagrangian \eqref{L_triang} can be combined with the multidimensional consistency property, leading to the concept of Lagrangian multiforms. The key idea is to interpret a Lagrangian $\cL(U,U_i,U_j,U_{ij},A_i,A_j)$, which is skew-symmetric under interchange of indices $i \leftrightarrow j$, as a discrete 2-form on a higher-dimensional quadrilateral lattice. Here we will work on the lattice $\Z^N$, but more general lattices with quadrilateral faces can be considered. Instead of taking the action sum over a plane, Lagrangian multiform theory considers arbitrary 2-dimensional discrete surfaces in the lattice. Just like continuous a differential 2-form can be integrated over any oriented surface, a discrete 2-form can be integrated over an oriented discrete surface by summing over all elementary squares making up the surface. In Lagrangian multiform theory, we require that for every choice of discrete surface, the corresponding action sum is critical in two ways: with respect to variations of the fields $U,U_i,\ldots$ and with respect to variations of the discrete surface. An equivalent characterisation is to require that the action sum over every elementary cube $\{ \vec n + \mu_i \vec e_i + \mu_j \vec e_j + \mu_k \vec e_k \mid \mu_i,\mu_j,\mu_k \in \{0,1\} \} \subset \Z^N$ is equal to zero and critical with respect to variations of the fields \cite{boll2014integrability}.

Criticality with respect to variations of the fields leads to a set of generalised Euler--Lagrange equations. They are variously called \emph{multiform Euler--Lagrange equations}, \emph{multi-time Euler--Lagrange equations} (because in the continuous case the space of independent variables is spanned by time-variables of several commuting flows), or \emph{corner equations} \big(because they are derived from the corners of an elementary cube in $\Z^N$\big).
To derive the corner equations, consider the action sum over an elementary cube, depending on the fields $\vec U = (U,U_i,U_j,U_k,U_{ij},U_{jk},U_{ki})$ and lattice parameters $\vec A = (A_i,A_j,A_k)$,
\begin{align*}
 S = S(\vec U, \vec A) &= \cL(U_k,U_{ki},U_{jk}, U_{ijk},A_i,A_j) + \cL(U_i,U_{ij},U_{ki}, U_{ijk},A_j,A_k) \\
 &\quad{} + \cL(U_j,U_{jk},U_{ij},U_{ijk},A_k,A_i) - \cL(U,U_i,U_j,U_{ij},A_i,A_j) \\
 &\quad{} - \cL(U,U_j,U_k,U_{jk},A_j,A_k) - \cL(U,U_k,U_i,U_{ki},A_k,A_i) .
\end{align*}
This action is critical with respect to variations of the fields if and only if all of the following equations hold:
\begin{alignat*}{6}
 &\pdv{S}{U} = 0 ,\qquad && \pdv{S}{U_i} = 0 ,\qquad && \pdv{S}{U_j} = 0 ,\qquad && \pdv{S}{U_k} = 0 , && & \\
 & \qquad && \pdv{S}{U_{jk}} = 0 ,\qquad && \pdv{S}{U_{ki}} = 0 ,\qquad && \pdv{S}{U_{ij}} = 0 ,\qquad && \pdv{S}{U_{ijk}} = 0 .&
\end{alignat*}

For the cases in which the three-leg equations hold modulo $2 \pi {\rm i}$, similar to the planar action~\eqref{planar-action-theta}, we will consider
\begin{align*}
 S^{\vec \Theta} := S + 2 \pi {\rm i} ( \Theta U + \Theta_i U_i + \Theta_j U_j + \Theta_k U_k +\Theta_{ij} U_{ij} + \Theta_{jk} U_{jk} + \Theta_{ki} U_{ki} + \Theta_{ijk} U_{ijk}) .
\end{align*}
The action around the cube is symmetric under cyclic permutation of indices $(i,j,k)$. We will use the notation ${\!}+\cyclic$ to represent a sum over cyclic permutations. In addition, we will write $\sum 2 \Theta U \pi {\rm i}$ as a shorthand for the last eight terms. Thus, we can write
\begin{align*}
 S^{\vec \Theta} = \cL(U_k,U_{ki},U_{jk}, U_{ijk},A_i,A_j) - \cL(U,U_i,U_j,U_{ij},A_i,A_j) + \cyclic + \sum 2 \pi {\rm i} \Theta U
 .
\end{align*}
Note that the new terms in $S^\Theta$ are each associated to a vertex of the cube, not to a face. For this reason, we do not incorporate the $\Theta$-terms into the discrete Lagrangian 2-form. The price to pay is that the action now contains contributions of two different kinds: the sum $S$ over faces of the skew-symmetric function $\cL$, and the sum over vertices of the $\Theta$-dependent terms.

The action of a discrete Lagrangian 2-form over an elementary cube can be thought of as the discrete analogue to the exterior derivative. A property at the core of Lagrangian multiform theory is that, on solutions, the Lagrangian 2-form is closed, i.e., that the action over an elementary cube is zero:
\[ S = 0 . \]
This is the necessary and sufficient condition for the criticality of the action with respect to variations of the surface.

Similar to the three-leg formulation of the quad equations, the closure property depends on a careful choice of the branches of the logarithms and other functions in the definition of $\cL$. To achieve closure in a systematic way, we introduce additional integer fields to the action, one associated to each lattice parameter:
\begin{gather*}
 S^{\vec \Theta,\vec \Xi}:= \cL(U_k,U_{ki},U_{jk}, U_{ijk},A_i,A_j) - \cL(U,U_i,U_j,U_{ij},A_i,A_j) + \cyclic
 \\ \hphantom{S^{\vec \Theta,\vec \Xi}:=}{}
 + 2 \pi {\rm i} ( \Theta U + \Theta_i U_i + \Theta_j U_j + \Theta_k U_k +\Theta_{ij} U_{ij} + \Theta_{jk} U_{jk} + \Theta_{ki} U_{ki} + \Theta_{ijk} U_{ijk}) \\
 \hphantom{S^{\vec \Theta,\vec \Xi}:=}{}
 + 2 \pi {\rm i} (\Xi_i A_i + \Xi_j A_j + \Xi_k A_k).
\end{gather*}
For the equations with an additive three-leg form, the action does not contain the integers $\vec \Theta = (\Theta, \Theta_i, \ldots, \Theta_{ijk})$, but we will still need to include $\vec \Xi = (\Xi_i, \Xi_j, \Xi_k)$ to achieve closure. In~these cases, we consider
\begin{gather*}
 S^{\vec \Xi} := \cL(U_k,U_{ki},U_{jk}, U_{ijk},A_i,A_j) - \cL(U,U_i,U_j,U_{ij},A_i,A_j) + \cyclic \\
 \hphantom{S^{\vec \Xi} :=}{}
 + 2 \pi {\rm i} (\Xi_i A_i + \Xi_j A_j + \Xi_k A_k).
\end{gather*}

For equations H3, A2, and Q3, we will see below that the closure property only holds modulo~$4 \pi^2$, i.e., $S^{\vec \Theta,\vec \Xi}$ is a multiple of $4 \pi {\rm i}$. So the variational structure of these equations may be better described as a \emph{pluri-Lagrangian system}, which imposes the same variations of $U$, but does not require the closure condition.

More details on discrete Lagrangian multiforms and discrete pluri-Lagrangian systems can be found in \cite{bobenko2015discrete, boll2014integrability, hietarinta2016discrete, lobb2018variational}. However, one should note that these works assume that the right-hand sides of the three leg equations are always zero, instead of a multiple of $2 \pi {\rm i}$, and ignore similar subtleties when evaluating the action on the cube. Outside of Lagrangian multiform theory, integer multiples of $2 A_k \pi {\rm i}$ and $2 U \pi {\rm i}$ have been used to deal with branches of the complex logarithm in the context of star-triangle relations, see, for example, \cite{kels2021interaction, kels2023Twocomponent}.

In the next two subsections, we look at two ways of constructing Lagrangian multiforms for the ABS equation, each based on one of the planar Lagrangians of Section~\ref{subsec-planar}. The literature on discrete Lagrangian multiforms for quad equations almost exclusively considers multiforms based on the triangle Lagrangian, but we will argue that the trident Lagrangian is simpler to work with and show that it produces stronger Euler--Lagrange equations.
Then, in Section~\ref{subsec-clos}, we will give a rigorous proof of the closure (or almost closure) of both Lagrangian multiforms.

\subsection{Trident 2-form: quad equations are variational}
\label{sec-trident}

One of our main results is that the quad equations of the ABS list are variational, i.e., that they are equivalent to the corner equations of a Lagrangian multiform. This is in contrast to the Lagrangian multiforms for ABS equations that have previously been considered, which have corner equations that are slightly weaker than the quad equations \cite{bobenko2010lagrangian,boll2014integrability, lobb2009lagrangian,lobb2018variational}.
We will show that the trident Lagrangian $\cL_\trileg$, given in equation \eqref{L_trident} (and first considered in \cite{adler2003classification}) generalises to a~Lagrangian multiform with corner equations that produce the quad equations directly (in~their three-leg form).
The action over an elementary cube of $\cL_\trileg$, interpreted as a 2-form, is illustrated in Figure \ref{fig-trid}\,(b) and can be written as
\begin{align}
 S_\trileg & = \cL_\trileg(U_k,U_{ki},U_{jk},U_{ijk},A_i,A_j) - \cL_\trileg(U,U_i,U_j,U_{ij},A_i,A_j) + \cyclic \notag \\
 & = L(U_k,U_{ki},A_i) - L(U_k,U_{jk},A_j) - \Lambda(U_k,U_{ijk}, A_i-A_j) \notag \\
 &\quad{} - L(U,U_i,A_i) + L(U,U_j,A_j) + \Lambda(U,U_{ij}, A_i-A_j) + \cyclic \notag\\
 & = L(U_i,U_{ij},A_j) + \Lambda(U,U_{ij},A_i-A_j) -\smash{\reverse} + \cyclic , \label{Strid-sym}
\end{align}
where the notation \smash{$- \reverse$ }represents subtracting the point inverse of the preceding expression (i.e., replacing $U \leftrightarrow
U_{ijk}$, $U_i \leftrightarrow U_{jk}$, etc.), and $+\,\cyclic$ indicates the addition of terms obtained by cyclic permutation of $(i,j,k)$. Thus, equation \eqref{Strid-sym} is manifestly symmetric under cyclic permutations and skew-symmetric under point inversion. As in Section~\ref{sec-multiform-general}, we will extend the action with integer-valued fields $\Theta$ and $\Xi$:
\begin{align}
 S_\trileg^{\vec \Theta, \vec \Xi} &= S_\trileg + 2 \pi {\rm i} ( \Theta U + \Theta_i U_i + \Theta_j U_j + \Theta_k U_k +\Theta_{ij} U_{ij} + \Theta_{jk} U_{jk} + \Theta_{ki} U_{ki} + \Theta_{ijk} U_{ijk}) \nonumber\\
 &\quad{} + 2 \pi {\rm i} (\Xi_i A_i + \Xi_j A_j + \Xi_k A_k) \label{Strid-sym2}
 \end{align}
and
\[ S_\trileg^{\vec \Xi} = S_\trileg + 2 \pi {\rm i} (\Xi_i A_i + \Xi_j A_j + \Xi_k A_k). \]

\begin{figure}[t]
\centering
\begin{tikzpicture}[scale=3]
	\begin{scope}[shift=({1.75,0})]
 \fill[gray!30] (0,0) -- (1,0) -- (1,1) -- (0,1) -- cycle;
		\draw (0,0) node[below left] {$U_k$} -- (1,0) node[below right] {$U_{ki}$} -- (1,1) node[above right] {$U_{ijk}$} -- (0,1) node[above left] {$U_{jk}$} -- cycle;
		\draw[very thick, blue] (0,0) -- node[below, black] {$L(U_k,U_{ki},A_i)$} (1,0);
		\draw[very thick, red] (0,1) -- node[above, rotate=90, black] {$-L(U_k,U_{jk},A_j)$} (0,0);
		\draw[very thick, red] (1,1) -- node[above, rotate=45, black] {\ $-\Lambda(U_k,U_{ijk}, A_i-A_j)$} (0,0);
 \node at (.5,-.4) {(a)};
 \node at (0,0) {$\bullet$};
 \node at (1,0) {$\bullet$};
 \node at (0,1) {$\bullet$};
 \node at (1,1) {$\bullet$};
	\end{scope}
	\begin{scope}[shift=({4,-.1}),scale=1, y={(5mm, 3mm)}, z={(0cm,1cm)}]
 \fill[gray!30] (0,0,1) -- (1,0,1) -- (1,1,1) -- (0,1,1) -- cycle;
		\draw[every edge/.append style={dashed}]
		(1,0,0) node[below] {$U_i$} -- (0,0,0) node[below left] {$U$} -- (0,0,1) node[left] {$U_k$} -- (0,1,1) node[above] {$U_{jk}$} -- (1,1,1) node[above right] {$U_{ijk}$} -- (1,1,0) node[right] {$U_{ij}$}-- cycle -- (1,0,1) node[right] {$U_{ki}$}-- (0,0,1)
		(1,0,1) -- (1,1,1)
		(0,1,0) edge (0,0,0) edge (0,1,1) edge (1,1,0) node[left] {$U_j$};
		
		\draw[very thick, red, dashed] (1,1,0) -- (0,1,0);
		\draw[very thick, blue, dashed] (0,1,0) -- (0,1,1);
		
		\draw[very thick, blue, dashed] (0,0,0) -- (0,1,1) ;
		\draw[very thick, blue] (0,0,0) -- (1,0,1) ;
		\draw[very thick, blue, dashed] (0,0,0) -- (1,1,0) ;
		
		\draw[very thick, red] (1,1,1) -- (0,0,1) ;
		\draw[very thick, red] (1,1,1) -- (1,0,0) ;
		\draw[very thick, red, dashed] (1,1,1) -- (0,1,0) ;
		
		\draw[very thick, blue] (1,0,0) -- (1,1,0);
		\draw[very thick, red] (0,1,1) -- (0,0,1);
		\draw[very thick, blue] (0,0,1) -- (1,0,1);
		\draw[very thick, red] (1,0,1) -- (1,0,0);
 \node at (1,-1) {(b)};
 \node at (0,0,0) {$\bullet$};
 \node at (1,0,0) {$\bullet$};
 \node at (0,1,0) {$\bullet$};
 \node at (0,0,1) {$\bullet$};
 \node at (1,1,0) {$\bullet$};
 \node at (0,1,1) {$\bullet$};
 \node at (1,0,1) {$\bullet$};
 \node at (1,1,1) {$\bullet$};
	\end{scope}	
\end{tikzpicture}

\caption[.]{(a) The leg structure of a single Lagrangian $\cL_\trileg(U_k,U_{ki},U_{jk},U_{ijk},A_i,A_j)$.
(b) The leg struc\-ture for the action on the cube of the trident 2-form $\cL_\trileg$.}\label{fig-trid}
\end{figure}

Consider the corner expression at $U_{ij}$,
\begin{align*}
 \pdv{S_\trileg^{\vec \Theta, \vec \Xi}}{U_{ij}} & = \pdv{}{U_{ij}} \left( L(U_i,U_{ij},A_j) - L(U_{ij},U_j,A_i) + \Lambda(U,U_{ij},A_i-A_j) + 2\Theta_{ij}U_{ij} \pi {\rm i} \right) \\
 &= \psi(U_{ij},U_i,A_j) - \psi(U_{ij},U_j,A_i) + \phi(U_{ij},U,A_i-A_j) + 2 \Theta_{ij} \pi {\rm i} \\
 &= -\mathcal{Q}^{(U_{ij})}_{ij} + 2\Theta_{ij} \pi {\rm i} .
\end{align*}
Setting this equal to zero gives us exactly the quad equation in three-leg form (see equation~\eqref{quad-uij}).
The expressions at $U_{jk}$ and $U_{ki}$ are analogous and follow from permutation of indices. At the~$U_k$ corner we find a similar expression, \smash{$\mathcal{Q}^{(U_{k})}_{ij} + 2\Theta_{k} \pi {\rm i}$}, which is what we would expect from the~skew-symmetry under point inversion. The expressions at $U_{i}$ and $U_{j}$ are analogous and follow from permutation of indices.

Let us consider the corner expression at $U$ produced by this 2-form,
\begin{align*}
 \pdv{S_\trileg^{\vec \Theta, \vec \Xi}}{U} & = \pdv{}{U} \left( \Lambda(U,U_{ij},A_i-A_j) + \Lambda(U,U_{jk},A_j-A_k + \Lambda(U,U_{ki},A_k-A_i) + 2 \Theta U \pi {\rm i} \right) \\
 &=\phi(U,U_{ij},A_i-A_j) + \phi(U,U_{jk},A_j-A_k) + \phi(U,U_{ki},A_k-A_i) + 2 \Theta \pi {\rm i} \\
 &= \mathcal{T}^{(U)} + 2 \Theta \pi {\rm i} .
\end{align*}
We find the tetrahedron equation in three-leg form (see equation \eqref{tetra-u}). Similarly, at the $U_{ijk}$ corner we find $-\mathcal{T}^{(U_{ijk})} + 2 \Theta_{ijk} \pi {\rm i}$.
Altogether, we conclude the following.

\begin{Theorem}
 The corner equations of the action \smash{$S_\trileg^{\vec \Theta, \vec \Xi}$} are the quad equations and tetrahedron equations in three-leg form:
 \begin{alignat*}{3}
 &\pdv{S_\trileg^{\vec \Theta, \vec \Xi}}{U} = 0 \implies \mathcal{T}^{(U)} = - 2 \Theta \pi {\rm i} , \qquad&&
 \pdv{S_\trileg^{\vec \Theta, \vec \Xi}}{U_{ijk}} = 0 \implies\mathcal{T}^{(U_{ijk})} = 2 \Theta_{ijk} \pi {\rm i} , & \\
 &\pdv{S_\trileg^{\vec \Theta, \vec \Xi}}{U_i} = 0 \implies \mathcal{Q}^{(U_i)}_{jk} = - 2 \Theta_{i} \pi {\rm i} ,\qquad&&
 \pdv{S_\trileg^{\vec \Theta, \vec \Xi}}{U_{jk}} = 0 \implies \mathcal{Q}^{(U_{jk})}_{jk} = 2 \Theta_{jk} \pi {\rm i} ,& \\
 &\pdv{S_\trileg^{\vec \Theta, \vec \Xi}}{U_j} = 0 \implies \mathcal{Q}^{(U_j)}_{ki} = - 2 \Theta_j \pi {\rm i} ,\qquad&&
 \pdv{S_\trileg^{\vec \Theta, \vec \Xi}}{U_{ki}} = 0 \implies \mathcal{Q}^{(U_{ki})}_{ki} = 2 \Theta_{ki} \pi {\rm i} ,& \\
 &\pdv{S_\trileg^{\vec \Theta, \vec \Xi}}{U_k} = 0 \implies \mathcal{Q}^{(U_k)}_{ij} = - 2 \Theta_k \pi {\rm i} ,\qquad&&
 \pdv{S_\trileg^{\vec \Theta, \vec \Xi}}{U_{ij}} = 0 \implies \mathcal{Q}^{(U_{ij})}_{ij} = 2 \Theta_{ij} \pi {\rm i} .&
 \end{alignat*}
 The corner equations of the action $S_\trileg^{\vec \Xi}$ are obtained form the above by setting $\vec \Theta = \vec 0$.
\end{Theorem}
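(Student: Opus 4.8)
The plan is to verify the eight displayed implications by computing each corner derivative of $S_\trileg^{\vec \Theta, \vec \Xi}$ directly and identifying the result with a three-leg form. Two of the cases, namely $\pdv{S_\trileg^{\vec \Theta, \vec \Xi}}{U_{ij}} = -\mathcal{Q}^{(U_{ij})}_{ij} + 2\Theta_{ij}\pi{\rm i}$ and $\pdv{S_\trileg^{\vec \Theta, \vec \Xi}}{U} = \mathcal{T}^{(U)} + 2\Theta\pi{\rm i}$, are already established in the discussion preceding the statement; the remaining six I would obtain from these by exploiting the symmetries of the action. Throughout, the engine of the computation is the pair of antiderivative relations \eqref{antiderivative1} and \eqref{antiderivative2}: differentiating a term $L(x,y,A)$ or $\Lambda(x,y,A-B)$ with respect to its \emph{first} argument produces the corresponding $\psi$ or $\phi$ evaluated in the order $(x,y,\cdot)$, while differentiating with respect to the \emph{second} argument produces $\psi$ or $\phi$ in the order $(y,x,\cdot)$.

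First I would record that the two integer-field contributions behave as expected: differentiating $2\pi{\rm i}(\Theta U + \Theta_i U_i + \cdots + \Theta_{ijk}U_{ijk})$ with respect to a vertex field $U_v$ returns exactly $2\Theta_v\pi{\rm i}$, while the term $2\pi{\rm i}(\Xi_i A_i + \Xi_j A_j + \Xi_k A_k)$, depending only on the lattice parameters, contributes nothing to any field derivative. This is why the final sentence of the statement reduces the $S_\trileg^{\vec \Xi}$ case to $\vec\Theta = \vec 0$.

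Next I would treat the two vertex types separately. For the main-diagonal vertices $U$ and $U_{ijk}$, the key observation---already used for $U$---is that the short-leg terms $L(U,U_\ell,A_\ell)$ coming from the three faces adjacent to the vertex cancel in pairs (each appearing once with either sign), so that only the three long-leg terms $\Lambda(U,U_{ij},A_i-A_j)$, $\Lambda(U,U_{jk},A_j-A_k)$, $\Lambda(U,U_{ki},A_k-A_i)$ survive; differentiating these via \eqref{antiderivative1} yields the three $\phi$ terms that constitute $\mathcal{T}^{(U)}$ as in \eqref{tetra-u}. For each of the six edge vertices the surviving terms are two short legs and one long leg, and the relations \eqref{antiderivative1}--\eqref{antiderivative2} assemble them into a single three-leg form $\mathcal{Q}^{(\cdot)}_{\cdot}$ matching one of \eqref{quad-ui}--\eqref{quad-uki}.

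Finally, rather than repeat the edge computation, I would invoke the two symmetries of \eqref{Strid-sym}: cyclic permutation of $(i,j,k)$ relates the three equations at $U_{ij},U_{jk},U_{ki}$ to one another (and those at $U_i,U_j,U_k$ to one another), while skew-symmetry under point inversion $U\leftrightarrow U_{ijk}$, $U_i\leftrightarrow U_{jk}$, etc.\ relates the equation at each vertex to the one at its antipode with an overall sign flip---precisely the sign difference between the left and right columns of the statement ($-2\Theta\pi{\rm i}$ at $U,U_i,U_j,U_k$ versus $+2\Theta\pi{\rm i}$ at $U_{ijk},U_{jk},U_{ki},U_{ij}$). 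The one place demanding genuine care is the slot bookkeeping: one must read off from the face-wise expansion of $S_\trileg$ which slot each occurrence of a given vertex field occupies, and apply \eqref{antiderivative1} or \eqref{antiderivative2} accordingly, so that the arguments of $\psi$ and $\phi$ come out in the order prescribed by \eqref{quad-ui}--\eqref{quad-uki}. I expect this argument-ordering, together with confirming the pairwise cancellation of short legs at $U$ and $U_{ijk}$, to be the only real subtlety; the rest is a direct application of the three-leg definitions.
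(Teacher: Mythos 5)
Your proposal is correct and takes essentially the same route as the paper: the paper's proof is exactly ``take derivatives of \eqref{Strid-sym2} and recognise the three-leg expressions \eqref{quad-u}--\eqref{quad-uki}, \eqref{tetra-u} and \eqref{tetra-uijk}'', with only the corners $U_{ij}$ and $U$ (and, by remark, $U_k$) computed explicitly in the preceding discussion and the remaining cases obtained, as you do, from cyclic permutation of $(i,j,k)$ and the skew-symmetry of \eqref{Strid-sym} under point inversion. Your supporting observations---the vanishing of the $\Xi$-terms under field derivatives, the pairwise cancellation of the short legs $L$ at the diagonal vertices leaving only the three $\Lambda$-legs, and the slot bookkeeping via \eqref{antiderivative1}--\eqref{antiderivative2}---are precisely the content of the paper's computation.
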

\begin{proof}
 Take derivatives of \eqref{Strid-sym2} and recognise the three-leg quad expressions \eqref{quad-u}--\eqref{quad-uki}, as well as the three-leg tetrahedron expressions \eqref{tetra-u} and \eqref{tetra-uijk}.
\end{proof}

\begin{ExampleH1}
H1 is one of the equations for which the natural three-leg form is additive. This means that the functions $\phi$ and $\psi$ do not involve logarithms and that the multi-affine quad equation is equivalent to the three leg-forms \eqref{three-leg-0} with zero on the right-hand side. Hence, in this case we should not include the integer fields $\vec \Theta$ in the action. Furthermore, no variable transformation is needed, so we have $u = U$. Using the expressions \eqref{H1-L-Lambda} for $L$ and $\Lambda$, we find the trident Lagrangian
\begin{equation*}
 \cL_\trileg(u,u_i,u_j,u_{ij},\alpha_i-\alpha_j) = u u_i - u u_j - (\alpha_i - \alpha_j) \log(u - u_{ij}) .
\end{equation*}
The action around the elementary cube of the corresponding discrete 2-form is
\begin{align}
 S_\trileg
 & = u_k u_{ki} - u_k u_{jk} - (\alpha_i - \alpha_j) \log(u_k - u_{ijk}) \notag \\
 &\quad - u u_i + u u_j + (\alpha_i - \alpha_j) \log(u - u_{ij}) + \cyclic\notag\\
 & = u_i u_{ij} + (\alpha_i - \alpha_j) \log(u - u_{ij}) - \reverse + \cyclic .\label{H1-S-trident}
\end{align}
(We have also left out $\Xi_i$, $\Xi_j$, $\Xi_k$ here, because they do not affect the corner equations. However, they will be essential for the closure property in Section~\ref{subsec-clos}.)

The partial derivatives of the action \eqref{H1-S-trident} produce the three-leg forms of the tetrahedron and quad equations. For example, we have
\[
 \pdv{ S_\trileg}{u} = \frac{\alpha_i - \alpha_j}{u - u_{ij}} + \frac{\alpha_j - \alpha_k}{u - u_{jk}} + \frac{\alpha_k - \alpha_i}{u - u_{ki}}
 = \mathcal{T}^{(u)}
\]
and
\[
 \pdv{ S_\trileg}{u_{ij}} = u_i - u_j - \frac{\alpha_i - \alpha_j}{u - u_{ij}}
 = -\mathcal{Q}_{ij}^{(u_{ij})} .
\]
\end{ExampleH1}

\begin{ExampleH3}
For H3$_{\delta=0}$, we introduced the variable transformation $u = {\rm e}^U$ to find suitable using the expressions \eqref{H3-L-Lambda} for $L$ and $\Lambda$. These lead to the trident Lagrangian
\[ \cL_\trileg = -U U_i + U U_j + \dilog \bigl( {\rm e}^{A_i-A_j+U-U_{ij}} \bigr) - \dilog \bigl( {\rm e}^{-A_i+A_j+U-U_{ij}}\bigr) + (A_i-A_j) (U-U_{ij}) . \]
In this case, we obtained the additive three-leg form by taking the logarithm of the multiplicative one, so we need to include the integer fields $\Theta$ in the action:
\begin{align*}
 S_\trileg^{\vec \Theta, \vec \Xi} &= \bigl( -U_k U_{ki} + U_k U_{jk} \\
 &\hphantom{= \bigl(}{} + \dilog \bigl( {\rm e}^{A_i-A_j+U_k-U_{ijk}} \bigr) - \dilog \bigl( {\rm e}^{-A_i+A_j+U_k-U_{ijk}} \bigr) + (A_i-A_j) (U_k-U_{ijk}) \\
 &\hphantom{= \bigl(}{} - \dilog \bigl( {\rm e}^{A_i-A_j+U-U_{ij}} \bigr) + \dilog \bigl( {\rm e}^{-A_i+A_j+U-U_{ij}} \bigr) - (A_i-A_j) (U-U_{ij}) \bigr) + \cyclic \\
 &\quad + 2 \pi {\rm i} ( \Theta U + \Theta_i U_i + \Theta_j U_j + \Theta_k U_k +\Theta_{ij} U_{ij} + \Theta_{jk} U_{jk} + \Theta_{ki} U_{ki} + \Theta_{ijk} U_{ijk}) \\
 &\quad + 2 \pi {\rm i} (\Xi_i A_i + \Xi_j A_j + \Xi_k A_k) .
\end{align*}
In anticipation of the closure property (see Section~\ref{subsec-clos}), we have also included $\Xi_i$, $\Xi_j$, $\Xi_k$ here, but they play no role in the computation of the corner equations, which we turn to now.

Recall that
\[ \frac{\d}{\d z} \dilog({\rm e}^z) = - \log(1-{\rm e}^z) , \]
so as a corner expression, we have, for example,
\begin{align*}
 0 = \pdv{S_\trileg^{\vec \Theta, \vec \Xi}}{U_k}
 &{}= -U_{ki} + U_{jk} - \log \bigl(1 - {\rm e}^{A_i-A_j+U_k-U_{ijk}} \bigr) + \log \bigl( 1 - {\rm e}^{-A_i+A_j+U_k-U_{ijk}} \bigr) \\
 &\quad{} + (A_i - A_j) + 2 \Theta_k \pi {\rm i} .
\end{align*}
Taking the exponential, we find the equivalent equation
\[ {\rm e}^{U_{ki} - U_{jk} - A_i + A_j} \bigl( 1 - {\rm e}^{A_i-A_j+U_k-U_{ijk}} \bigr) = \bigl( 1 - {\rm e}^{-A_i+A_j+U_k-U_{ijk}} \bigr) .\]
Multiplying by ${\rm e}^{A_i+U_{jk}+U_{ijk}}$ and rearranging the terms, we recover the quad equation
\[ {\rm e}^{A_i}\bigl( {\rm e}^{U_k+U_{ki}} + {\rm e}^{U_{jk}+U_{ijk}} \bigr) - {\rm e}^{A_j} \bigl( {\rm e}^{U_k+U_{jk}} + {\rm e}^{U_{ki} + U_{ijk}} \bigr) = 0 . \]
\end{ExampleH3}

\subsection{Triangle 2-form}
\label{sec-triangle}

Next, we compare the above construction to the 2-form $\cL_\triang$, defined as in equation \eqref{L_triang}, which started the theory of Lagrangian multiforms \cite{lobb2009lagrangian}. We review that the action of this 2-form is critical on a set of equations weaker than the quad equations.

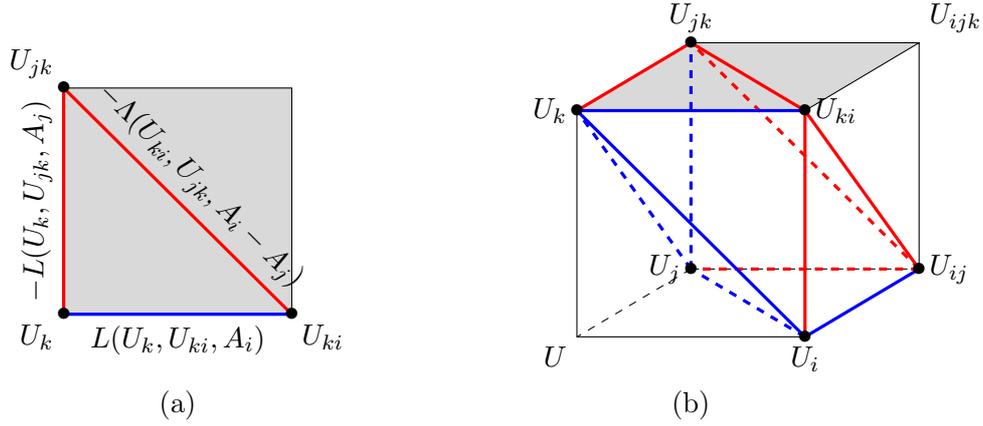
\begin{figure}[t]
\centering
\begin{tikzpicture}[scale=3]
 \begin{scope}[shift=({1.75,0})]
 \fill[gray!30] (0,0) -- (1,0) -- (1,1) -- (0,1) -- cycle;
		\draw (0,0) node[below left] {$U_k$} -- (1,0) node[below right] {$U_{ki}$} -- (1,1) -- (0,1) node[above left] {$U_{jk}$} -- cycle;
 	\draw[very thick, blue] (0,0) -- node[below, black] {$L(U_k,U_{ki},A_i)$} (1,0);
 	\draw[very thick, red] (0,1) -- node[above, rotate=90, black] {$-L(U_k,U_{jk},A_j)$} (0,0);
 	\draw[very thick, red] (0,1) -- node[above, rotate=-45, black] {\ $-\Lambda(U_{ki},U_{jk}, A_i-A_j)$} (1,0);
 \node at (.5,-.4) {(a)};
 \node at (0,0) {$\bullet$};
 \node at (1,0) {$\bullet$};
 \node at (0,1) {$\bullet$};
	\end{scope}
	\begin{scope}[shift=({4,-.1}),scale=1, y={(5mm, 3mm)}, z={(0cm,1cm)}]
 	\fill[gray!30] (0,0,1) -- (1,0,1) -- (1,1,1) -- (0,1,1) -- cycle;
 	\draw[every edge/.append style={dashed}]
 	(1,0,0) node[below] {$U_i$} -- (0,0,0) node[below left] {$U$} -- (0,0,1) node[left] {$U_k$} -- (0,1,1) node[above] {$U_{jk}$} -- (1,1,1) node[above right] {$U_{ijk}$} -- (1,1,0) node[right] {$U_{ij}$}-- cycle -- (1,0,1) node[right] {$U_{ki}$}-- (0,0,1)
 	(1,0,1) -- (1,1,1)
 	(0,1,0) edge (0,0,0) edge (0,1,1) edge (1,1,0) node[left] {$U_j$};
 	
 	\draw[very thick, red, dashed] (1,1,0) -- (0,1,1) ;
 	\draw[very thick, red] (0,1,1) -- (1,0,1) ;
 	\draw[very thick, red] (1,0,1) -- (1,1,0) ;
 	
 	\draw[very thick, blue] (1,0,0) -- (1,1,0);
 	\draw[very thick, red, dashed] (1,1,0) -- (0,1,0);
 	\draw[very thick, blue, dashed] (0,1,0) -- (0,1,1);
 	\draw[very thick, red] (0,1,1) -- (0,0,1);
 	\draw[very thick, blue] (0,0,1) -- (1,0,1);
 	\draw[very thick, red] (1,0,1) -- (1,0,0);
 	
 	\draw[very thick, blue, dashed] (1,0,0) -- (0,1,0) ;
 	\draw[very thick, blue, dashed] (0,1,0) -- (0,0,1) ;
 	\draw[very thick, blue] (0,0,1) -- (1,0,0) ;
 \node at (1,0,0) {$\bullet$};
 \node at (0,1,0) {$\bullet$};
 \node at (0,0,1) {$\bullet$};
 \node at (1,1,0) {$\bullet$};
 \node at (0,1,1) {$\bullet$};
 \node at (1,0,1) {$\bullet$};
 \node at (1,-1) {(b)};
 \end{scope}
 \end{tikzpicture}

 \caption[.]{(a) The leg structure of a single Lagrangian $\cL_\triang(U_k,U_{ki},U_{jk},A_i,A_j)$.
 (b) The leg structure for the action on an elementary cube of the triangle 2-form $\cL_\triang$ sits on an octahedral stencil.}
 \label{fig-triang-cube}
 \end{figure}

The action around the elementary cube of the triangle Lagrangian depends on an octahedral stencil depicted in Figure~\ref{fig-triang-cube},
\[
 S_\triang = L(U_i,U_{ij},A_j) + \Lambda(U_i,U_j,A_i-A_j) - \reverse + \cyclic .
\]
Since $U$ and $U_{ijk}$ do not appear in this action, there are no corner equations at these points.
The corner equation at $U_{ij}$ in terms of leg functions is
{\samepage\begin{align*}
	0 &= \pdv{S_\triang}{U_{ij}}
 \\
 &= \pdv{}{U_{ij}} \left( L(U_i,U_{ij},A_j) - L(U_j,U_{ij},A_i) - \Lambda(U_{ij},U_{ki}, A_j-A_k) - \Lambda(U_{jk},U_{ij}, A_k-A_i) \right) \\
 &= \psi(U_{ij},U_i,A_j) - \psi(U_{ij},U_j,A_i) - \phi(U_{ij},U_{ki}, A_j-A_k) - \phi(U_{ij},U_{jk}, A_k-A_i) \\
 & = \inversion{\mathcal Q}^{(U_{ij})}_{jk} + \inversion{\mathcal Q}^{(U_{ij})}_{ki} .
\end{align*}}
Similarly, at the corner $U_k$, we find the corner equation \smash{$-\mathcal{Q}^{(U_k)}_{jk} - \mathcal{Q}^{(U_k)}_{ki} = 0$}. The corner equations at $U_i$, $U_j$, $U_{jk}$, $U_{ki}$ are obtained from these equations by cyclic permutation of the indices.

For equations other than H1, Q1$_{\delta=0}$ and A1$_{\delta=0}$, solutions to the quad equations may not~satis\-fy \smash{$\mathcal Q^{(U_{ij})}_{jk} + \mathcal Q^{(U_{ij})}_{ki} = 0$} but rather \smash{$\mathcal Q\raisebox{-1pt}{${}^{(U_{ij})}_{jk}$} + \mathcal Q\raisebox{-1pt}{${}^{(U_{ij})}_{ki}$} = 2 \Theta_{ij} \pi {\rm i}$} for some $\Theta_{ij} \in \Z$. To account for~this, and analogous to the case of the trident Lagrangian, we add integer fields to the action of the triangle Lagrangian. Thus we obtain the following result.

\begin{Theorem}
 The corner equations of the action
 \begin{align*} S_\triang^{\vec \Theta, \vec \Xi} &= L(U_i,U_{ij},A_j) + \Lambda(U_i,U_j,A_i-A_j) - \reverse + \cyclic
 \\
 &\quad + 2 \pi {\rm i} ( \Theta U + \Theta_i U_i + \Theta_j U_j + \Theta_k U_k +\Theta_{ij} U_{ij} + \Theta_{jk} U_{jk} + \Theta_{ki} U_{ki} + \Theta_{ijk} U_{ijk}) \\
 &\quad + 2 \pi {\rm i} (\Xi_i A_i + \Xi_j A_j + \Xi_k A_k)
 \end{align*}
 are
 \begin{alignat*}{3}
 &\pdv{S_\triang^{\vec \Theta, \vec \Xi}}{U_i} = 0 \Rightarrow \mathcal{Q}^{(U_i)}_{ki} + \mathcal{Q}^{(U_i)}_{ij} = 2 \Theta_i \pi {\rm i} , \qquad&&
 \pdv{S_\triang^{\vec \Theta, \vec \Xi}}{U_{jk}} = 0 \Rightarrow \mathcal Q^{(U_{jk})}_{ki} + \mathcal Q^{(u_{jk})}_{ij} = - 2 \Theta_{jk} \pi {\rm i} ,& \\
 &\pdv{S_\triang^{\vec \Theta, \vec \Xi}}{U_j} = 0 \Rightarrow \mathcal{Q}^{(U_j)}_{ij} + \mathcal{Q}^{(U_j)}_{jk} = 2 \Theta_j \pi {\rm i} , \qquad&&
 \pdv{S_\triang^{\vec \Theta, \vec \Xi}}{U_{ki}} = 0 \Rightarrow \mathcal Q^{(U_{ki})}_{ij} + \mathcal Q^{(u_{ki})}_{jk} = - 2 \Theta_{ki} \pi {\rm i} ,& \\
 &\pdv{S_\triang^{\vec \Theta, \vec \Xi}}{U_k} = 0 \Rightarrow \mathcal{Q}^{(U_k)}_{jk} + \mathcal{Q}^{(U_k)}_{ki} = 2 \Theta_k \pi {\rm i} , \qquad&&
 \pdv{S_\triang^{\vec \Theta, \vec \Xi}}{U_{ij}} = 0 \Rightarrow \mathcal Q^{(U_{ij})}_{jk} + \mathcal Q^{(u_{ij})}_{ki} = - 2 \Theta_{ij} \pi {\rm i} ,&
 \end{alignat*}
 while \smash{$\pdv{S_\triang^{\vec \Theta, \vec \Xi}}{U} = 0$} and \smash{$\pdv{S_\triang^{\vec \Theta, \vec \Xi}}{U_{ijk}} = 0$} are trivially satisfied.

 The corner equations of the action $S_\triang^{\vec \Xi}$ with the trident $2$-form \eqref{L_triang} are obtained form the above by setting $\vec \Theta = \vec 0$.
\end{Theorem}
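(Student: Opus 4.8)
The plan is to follow the template of the preceding theorem for the trident 2-form and to exploit the fact that $S_\triang^{\vec\Theta,\vec\Xi}$ is built to be symmetric under cyclic permutation of $(i,j,k)$ and skew-symmetric under point inversion. First I would expand the compact expression $L(U_i,U_{ij},A_j) + \Lambda(U_i,U_j,A_i-A_j) - \reverse + \cyclic$ into its twelve explicit $L$- and $\Lambda$-terms, so that the dependence of the action on each vertex field becomes visible. Because the eight vertices fall into just two orbits under the symmetry group---the pair $\{U,U_{ijk}\}$ and the six equatorial vertices $\{U_i,U_j,U_k,U_{ij},U_{jk},U_{ki}\}$---it suffices to compute the corner equation at one representative of each orbit; the rest follow by applying the symmetries.

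For the equatorial orbit I would take $U_{ij}$ as representative, reusing the computation already carried out above: collecting the four terms of $S_\triang$ that contain $U_{ij}$ and applying the antiderivative relations \eqref{antiderivative1}--\eqref{antiderivative2} turns $\pdv{S_\triang}{U_{ij}}$ into $\psi(U_{ij},U_i,A_j) - \psi(U_{ij},U_j,A_i) - \phi(U_{ij},U_{ki},A_j-A_k) - \phi(U_{ij},U_{jk},A_k-A_i)$. The key observation is that this four-term expression splits as a sum of two three-leg forms sharing the common short leg $\psi(U_{ij},U_{ijk},A_k)$, which telescopes away; this identifies it as $\mathcal{Q}^{(U_{ij})}_{jk} + \mathcal{Q}^{(U_{ij})}_{ki}$. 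Since the $\vec\Xi$-terms depend only on the parameters $A_i,A_j,A_k$, they contribute nothing to any field derivative, while the $\Theta$-term adds exactly $2\pi\mathrm{i}\,\Theta_{ij}$; setting $\pdv{S_\triang^{\vec\Theta,\vec\Xi}}{U_{ij}} = 0$ then yields the stated equation $\mathcal{Q}^{(U_{ij})}_{jk} + \mathcal{Q}^{(U_{ij})}_{ki} = -2\Theta_{ij}\pi\mathrm{i}$.

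I would then generate the remaining equatorial corner equations by symmetry rather than by repeating the calculation. Cyclic permutation of $(i,j,k)$ sends $U_{ij}$ to $U_{jk}$ and $U_{ki}$, permuting the indices of the three-leg forms and of $\Theta_{ij}$ accordingly. Point inversion, under which $S_\triang$ is skew and which exchanges $U_{ij} \leftrightarrow U_k$, gives $\pdv{S_\triang}{U_k} = -\inversion{\bigl(\pdv{S_\triang}{U_{ij}}\bigr)} = -\bigl(\mathcal{Q}^{(U_k)}_{jk} + \mathcal{Q}^{(U_k)}_{ki}\bigr)$; the extra minus sign is precisely what flips the right-hand side to $+2\Theta_k\pi\mathrm{i}$ at the near vertices, and a further cyclic permutation produces the equations at $U_i$ and $U_j$. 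Finally, the vertices $U$ and $U_{ijk}$ never appear in $S_\triang$ (this is the octahedral nature of the triangle stencil, Figure~\ref{fig-triang-cube}), so with $\Theta = \Theta_{ijk} = 0$ their corner equations $\pdv{S_\triang^{\vec\Theta,\vec\Xi}}{U} = 0$ and $\pdv{S_\triang^{\vec\Theta,\vec\Xi}}{U_{ijk}} = 0$ hold identically.

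The main obstacle is the bookkeeping at the single representative vertex: one must verify that the four leg functions produced by the differentiation really do regroup into two adjacent three-leg forms with the correct orientations (forward versus point-inverted) and signs, rather than into some other linear combination. This is where the shared telescoping leg and the sign conventions in \eqref{quad-u}--\eqref{quad-uki} have to be tracked carefully; once this one case is pinned down, the cyclic and point-inversion symmetries propagate it to all eight corners with no further computation. The reduction to $S_\triang^{\vec\Xi}$ by setting $\vec\Theta = \vec 0$ is then immediate, since neither the $\vec\Theta$- nor the $\vec\Xi$-fields interact with the leg-function structure.
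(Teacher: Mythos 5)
Your proposal is correct and takes essentially the same route as the paper, whose proof consists precisely of comparing the partial derivatives of \smash{$S_\triang^{\vec \Theta, \vec \Xi}$} with the three-leg forms \eqref{quad-u}--\eqref{quad-uki}: the paper performs the representative computation at $U_{ij}$ (and states the $U_k$ case) in the text immediately before the theorem and obtains the remaining corners by cyclic permutation, just as you do. Your refinements --- making explicit the telescoping short leg $\psi(U_{ij},U_{ijk},A_k)$ that explains why the four-leg corner expression regroups into two adjacent three-leg forms, using point-inversion skew-symmetry to derive the near corners from the far ones instead of recomputing, and noting that \smash{$\pdv{S_\triang^{\vec \Theta, \vec \Xi}}{U} = 2\pi {\rm i}\, \Theta$} so the trivial corner equations require $\Theta = \Theta_{ijk} = 0$ --- are harmless elaborations of the same calculation (and the overall sign attached to each $2\Theta\pi{\rm i}$ is purely conventional, since the $\Theta$'s range over all of $\Z$).
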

\begin{proof}
 These equations are obtained by comparing the partial derivatives of the action to the three-leg forms defined in \eqref{quad-u}--\eqref{quad-uki}.
\end{proof}

This 2-form produces a set of equations which vanish on the quad equations but are not equivalent to them because they lack the variables $U$ and $U_{ijk}$.

\begin{ExampleH1}
 For H1, we have
\[ \cL_\triang = u u_i - u u_j - (\alpha_i - \alpha_j) \log( u_i - u_j ) , \]
so{\samepage
\begin{align}
 S_\triang & = u_k u_{ki} - u_k u_{jk} - (\alpha_i - \alpha_j) \log( u_{ki} - u_{jk} ) + (\alpha_i - \alpha_j) \log( u_i - u_j ) + \cyclic , \nonumber\\
 & = u_i u_{ij} + (\alpha_i - \alpha_j) \log( u_i - u_j ) - \reverse + \cyclic .\label{H1-S-triangle}
\end{align}}%
As mentioned in the previous part of this example, for H1 we should not include the $\Theta$-terms in the action, because it naturally has an additive three leg form.

The corner equations at $u$ and $u_{ijk}$ are identically zero and the other six corner equations have a four-leg form. For example,
\begin{align*}
 0 = \pdv{S_\triang}{u_{ij}} &= u_i - u_j - \frac{\alpha_j - \alpha_k}{u_{ij} - u_{ki}} + \frac{\alpha_j - \alpha_k}{u_{jk} - u_{ij}} = \inversion{\mathcal Q}^{(u_{ij})}_{jk} + \inversion{\mathcal Q}^{(u_{ij})}_{ki}
\end{align*}
and
\begin{align*}
 0 = \pdv{S_\triang}{u_k} &= u_{ki} - u_{jk} + \frac{\alpha_k - \alpha_i}{u_k - u_i} - \frac{\alpha_j - \alpha_k}{u_j - u_k} = -\mathcal{Q}^{(u_{ij})}_{jk} - \mathcal{Q}^{(u_{ij})}_{ki} .
\end{align*}
Clearly, these equations are satisfied when the quad equations hold.
\end{ExampleH1}

\subsection{Closure relations}\label{subsec-clos}

The closure relation, i.e., the property that $S = 0$ on solutions, is a core idea of Lagrangian multiform theory. It was verified for the action $S_\triang$ of the triangle Lagrangian for H1, H2, H3, Q1, Q3$_{\delta=0}$, A1, and A2 by explicit calculations in \cite{lobb2009lagrangian} and a general proof was attempted in~\mbox{\cite{bobenko2010lagrangian, boll2014integrability}}. However, both these presentations ignore the fact that $L$ and $\Lambda$ have branch cuts. In the explicit computations, addition formulas for the (di)logarithm may introduce additive constants, depending on where in the complex plane their arguments lie. The proposed general argument assumes that the action depends continuously on the variables.

The following example shows that the closure relation does not always hold when the action is considered without the integer-valued fields $\vec \Xi$.

\begin{ExampleH1}
A particular solution to the quad equation
\[ (u_i - u_j)(u - u_{ij}) = \alpha_i - \alpha_j \]
with $\alpha_i = 1$, $\alpha_j = 2$, and $\alpha_k = 3$ is given by
\begin{alignat*}{6}
	& u = 0,\qquad	&& u_i = 1,\qquad && u_j = {\rm i},\qquad && u_k = 1+{\rm i},&& & \\
	& && u_{jk} = -1,\qquad && u_{ki} = 2 {\rm i},\qquad && u_{ij} = \frac12 + \frac12 {\rm i},\qquad && u_{ijk} = \frac65 + \frac35 {\rm i} .&
\end{alignat*}
For this solution, the actions \eqref{H1-S-trident} and \eqref{H1-S-triangle} take the values
\[ S_\trileg = -2\pi {\rm i} \qquad \text{and} \qquad S_\triang = 2\pi {\rm i} .\]
This shows that the closure property does not hold in general.

We can resolve this by adding $2 \Xi_{i} \alpha_i \pi {\rm i} + 2 \Xi_{j} \alpha_j \pi {\rm i} + 2 \Xi_{k} \alpha_k \pi {\rm i}$ to the action. Let us focus on the trident Lagrangian and set
\[ S_\trileg^{\vec \Xi} := S_\trileg + 2 \Xi_{i} \alpha_i \pi {\rm i} + 2 \Xi_{j} \alpha_j \pi {\rm i} + 2 \Xi_{k} \alpha_k \pi {\rm i} . \]
The integers $\Xi_{i}$, $\Xi_{j}$, $\Xi_{k}$ are to be chosen such that \smash{$\pdv{S_\trileg^{\vec \Xi}}{\alpha_i} = \pdv{S_\trileg^{\vec \Xi}}{\alpha_j} =\pdv{S_\trileg^{\vec \Xi}}{\alpha_k} = 0 $}, so
\begin{align*}
 & 2\Xi_{i} \pi {\rm i} = - \pdv{S_\trileg}{\alpha_i} = -\log(u-u_{ij}) + \log(u-u_{ki}) - \log(u_j-u_{ijk}) + \log(u_k-u_{ijk}) = 0 , \\
 & 2\Xi_{j} \pi {\rm i} = -\pdv{S_\trileg}{\alpha_j} = -\log(u-u_{jk}) + \log(u-u_{ij}) - \log(u_k-u_{ijk}) + \log(u_i-u_{ijk}) = -2\pi {\rm i} , \\
 & 2\Xi_{k} \pi {\rm i} = -\pdv{S_\trileg}{\alpha_k} = -\log(u-u_{ki}) + \log(u-u_{jk}) - \log(u_i-u_{ijk}) + \log(u_j-u_{ijk}) = 2\pi {\rm i} .
\end{align*}
This determines that $\Xi_i = 0$, $\Xi_j = -1$, $\Xi_k = 1$, and thus we recover that
\[ S_\trileg^{\vec \Xi} := S_\trileg + 2 \Xi_{i} \alpha_i \pi {\rm i} + 2 \Xi_{j} \alpha_j \pi {\rm i} + 2 \Xi_{k} \alpha_k \pi {\rm i}
= S_\trileg + 0 - 4 \pi {\rm i} + 6 \pi {\rm i} = 0.\]
\end{ExampleH1}

This will be our general strategy to achieve (almost) closure. In addition to an integer fields~$\Theta$ at each lattice site, we include one integer field $\Xi$ for each coordinate direction in the action. The values of the $\Xi$ are determined by variations of the lattice parameters. This is made precise in the following lemma, which can be considered as a 2-form version of the \emph{spectrality property} from \cite{suris2013variational}.

\begin{Lemma}
\label{lemma-dSdalpha}
 On solutions to the quad equations around the cube, there exists a choice of integers $\Xi_i,\Xi_j,\Xi_k \in \mathbb{Z}$ such that
 \[ \pdv{S^{\vec \Xi}_\trileg}{\alpha_i} = \pdv{S^{\vec \Theta,\vec \Xi}_\trileg}{\alpha_i} = 0
 \]
 and there exists a choice of $\Xi_i,\Xi_j,\Xi_k \in \mathbb{Z}$ such that
 \[
 \pdv{S^{\vec \Xi}_\triang}{\alpha_i} = \pdv{S^{\vec \Theta,\vec \Xi}_\triang}{\alpha_i} = 0 . \]
\end{Lemma}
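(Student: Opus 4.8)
The plan is to reduce the statement to a single algebraic identity on solutions, namely that the exponential of $\pdv{S_\trileg}{\alpha_i}$ equals $1$. First I would observe that the integer-field terms decouple from the parameter derivative. Throughout, the derivative is taken with respect to the lattice parameter appearing in the action (written $\alpha_i$, or its transform $A_i$ after the substitution of Section~\ref{subsec-planar}). The contribution $2\pi{\rm i}\sum\Theta U$ depends only on the fields $U,\dots,U_{ijk}$, not on $\alpha_i$, so $\pdv{S^{\vec\Theta,\vec\Xi}_\trileg}{\alpha_i} = \pdv{S^{\vec\Xi}_\trileg}{\alpha_i}$ and the two equalities in the lemma coincide. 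Among the $\vec\Xi$-terms only $2\pi{\rm i}\,\Xi_i\alpha_i$ survives $\partial_{\alpha_i}$, contributing $2\pi{\rm i}\,\Xi_i$. Hence it suffices to show that, on solutions, $\pdv{S_\trileg}{\alpha_i}$ is an integer multiple of $2\pi{\rm i}$; the required integer is then $\Xi_i = -\frac{1}{2\pi{\rm i}}\pdv{S_\trileg}{\alpha_i}$, and the conditions for $i,j,k$ decouple because each $\Xi$ multiplies only its own parameter, so the cyclic symmetry of \eqref{Strid-sym} handles $\Xi_j,\Xi_k$.

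Next I would compute $\pdv{S_\trileg}{\alpha_i}$ from the symmetric form \eqref{Strid-sym}. The parameter $\alpha_i$ appears only on the four faces of the cube that span direction $i$: the two $ij$-faces and the two $ki$-faces. On the $i$-edge shared by the bottom $ij$- and $ki$-faces the short-leg term $L(U,U_i,A_i)$ occurs once in each adjacent $\cL_\trileg$ with opposite overall sign, so these contributions cancel identically; the same happens on the top $i$-edge. Only the long-leg terms survive, so writing $\lambda := \partial_\gamma\Lambda(\cdot,\cdot,\gamma)$ for the parameter-leg function one obtains a clean four-term expression
\begin{align*}
 \pdv{S_\trileg}{\alpha_i}
 &= \lambda(U,U_{ij},\alpha_i-\alpha_j) - \lambda(U_k,U_{ijk},\alpha_i-\alpha_j) \\
 &\quad{} - \lambda(U,U_{ki},\alpha_k-\alpha_i) + \lambda(U_j,U_{ijk},\alpha_k-\alpha_i) ,
\end{align*}
one $\lambda$ per face, with signs fixed by orientation and by whether $\alpha_i$ enters $\gamma$ with a plus or a minus. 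For H1, $\lambda(x,y,\gamma)=\log(x-y)$ and this reproduces exactly the combination computed in the closure example above.

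The heart of the proof, and the main obstacle, is to show that this combination lies in $2\pi{\rm i}\,\Z$, equivalently that its exponential collapses to $1$ on solutions. Setting $M:={\rm e}^{\lambda}$, the claim becomes the algebraic identity
\[ \frac{M(U,U_{ij},\alpha_i-\alpha_j)\,M(U_j,U_{ijk},\alpha_k-\alpha_i)}{M(U_k,U_{ijk},\alpha_i-\alpha_j)\,M(U,U_{ki},\alpha_k-\alpha_i)} = 1 . \]
I would establish this using the multiplicative three-leg forms of the quad equations on the four faces meeting direction $i$, supplemented where necessary by the tetrahedron relations of Figure~\ref{fig-tetra}, which are themselves consequences of the quad equations around the cube. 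Since $M$ is a rational or algebraic function of the face variables in every case, this reduces to a finite computation per equation using the explicit $L$ and $\Lambda$ of Appendix~\ref{AppendixA}; for H1, for instance, it amounts to $(u_k-u_i)(u_{ki}-u_{jk}) = (u_i-u_j)(u_{jk}-u_{ij})$, which follows from the face equations together with the tetrahedron equation. The reason the additive statement yields a nonzero multiple of $2\pi{\rm i}$ rather than literally $0$ is the multivaluedness of the logarithm/dilogarithm hidden in $\lambda$: this is precisely the spectrality-type telescoping of the parameter-derivative around the closed boundary of the cube, where the branch jumps of the parameter-momentum accumulate to an integer multiple of $2\pi{\rm i}$.

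Finally, the triangle case is entirely analogous. Starting from the octahedral action $S_\triang = L(U_i,U_{ij},A_j) + \Lambda(U_i,U_j,A_i-A_j) - \reverse + \cyclic$, the short-leg terms again cancel in pairs along shared edges, $\pdv{S_\triang}{\alpha_i}$ reduces to a sum of parameter-leg functions $\lambda$ evaluated now on the diagonal legs $U_i$--$U_j$ and their cyclic and point-inverted images, and the same exponentiate-and-collapse argument applies, with the same $\Xi_i$ defined by $2\pi{\rm i}\,\Xi_i = -\pdv{S_\triang}{\alpha_i}$. I expect the per-equation verification of the collapse identity to be the only laborious part; the remaining bookkeeping, namely the decoupling of $\vec\Theta$, the independence of the three $\Xi$, and the sign tracking, is routine and can be confirmed symbolically.
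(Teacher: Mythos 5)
Your bookkeeping is fine and matches the paper: the $\vec\Theta$-terms drop out of $\partial_{\alpha_i}$, each $\Xi$ couples only to its own parameter, and the lemma reduces to showing \smash{$\pdv{S_\trileg}{\alpha_i} \in 2\pi{\rm i}\,\Z$} on solutions. The genuine gap is your cancellation claim for the short legs. Expanding \eqref{Strid-sym}, the $A_i$-dependent short-leg terms are $-L(U,U_i,A_i)$ from the bottom $ij$-face, $+L(U,U_i,A_i)$ from the bottom $ki$-face, $+L(U_k,U_{ki},A_i)$ from the top $ij$-face, and $-L(U_j,U_{ij},A_i)$ from the top $ki$-face. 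Only the first pair cancels; the last two have \emph{different arguments} and survive, so in fact
\[
 \pdv{S_\trileg}{\alpha_i} = \pdv{}{\alpha_i}\bigl( L(U_k,U_{ki},\alpha_i) - L(U_j,U_{ij},\alpha_i) \bigr) + \text{(your four $\lambda$-terms)} .
\]
Your ``clean four-term expression'' is therefore valid only when $L$ is parameter-free, i.e., essentially for H1 and H3$_{\delta=0}$ --- which is why your H1 check works --- but not for H2, H3$_{\delta=1}$, A1, A2, Q1, Q2, Q3. Concretely, for H2 the omitted contribution is $\log(\alpha_i+u_k+u_{ki}) - \log(\alpha_i+u_j+u_{ij})$, which is generically not in $2\pi{\rm i}\,\Z$ on solutions; equivalently, the exponentiated four-term combination collapses on solutions not to $1$ but to the ratio of short-leg biquadratics $h(u_j,u_{ij},\alpha_i)/h(u_k,u_{ki},\alpha_i)$ (up to separable vertex factors), so the multiplicative identity at the heart of your argument is false as stated.

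The paper's proof keeps these two $L$-terms and handles everything through the biquadratics of the ABS classification: equation \eqref{L-log(h)} expresses $\pdv{L}{A_i}$ as $\log h$ and $\pdv{\Lambda}{A_i}$ as $\log g$ modulo $2\pi{\rm i}$, up to additive functions $\kappa(U)$, $c(A_i)$, $\gamma(A_i-A_j)$ which cancel only in the full six-term combination (the $\kappa$'s from your four $\lambda$-terms do \emph{not} cancel among themselves --- they cancel against the $\kappa$'s of the two $h$-terms, a second symptom of the omission). Then the face identities $h(u,u_i,\alpha_i)h(u_{ij},u_j,\alpha_i) = g(u,u_{ij},\alpha_i-\alpha_j)g(u_i,u_j,\alpha_i-\alpha_j)$ and their tetrahedron analogues collapse the six-factor product $h(u_k,u_{ki},\alpha_i)\,g(u,u_{ij},\cdot)\,g(u_{ijk},u_j,\cdot)\big/\bigl(h(u_j,u_{ij},\alpha_i)\,g(u,u_{ki},\cdot)\,g(u_{ijk},u_k,\cdot)\bigr)$ to $1$ modulo $2\pi{\rm i}$. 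Your exponentiate-and-verify strategy is viable in principle --- the paper itself notes the ``elementary but tedious case-by-case'' alternative --- but you must reinstate the two $h$-type factors in the product before verifying it, and you either need the face \emph{and} tetrahedron biquadratic identities or a per-equation computation with the explicit $L$, $\Lambda$ of Appendix~\ref{AppendixA}. Your triangle case inherits the same omission (there too, parameter-dependent short legs with distinct arguments survive) and needs the same repair.
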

\begin{proof}
 We could verify this by elementary but tedious case-by-case computations. Instead, we will prove this using the biquadratics that play a central role in the ABS classification. These are polynomials $h$ and $g$ satisfying
 \begin{align*}
 & h(u,u_i,\alpha_i)= \frac{1}{k(\alpha_i,\alpha_j)} \biggl( Q \pdv{^2 Q}{u_j \partial u_{ij}} - \pdv{Q}{u_j}\pdv{Q}{u_{ij}} \biggr) , \\
 & g(u,u_{ij},\alpha_i-\alpha_j)= \frac{1}{k(\alpha_i,\alpha_j)} \biggl( Q \pdv{^2 Q}{u_i \partial u_j} - \pdv{Q}{u_i}\pdv{Q}{u_j} \biggr) ,
 \end{align*}
 where $k$ is a skew-symmetric function of the lattice parameters, chosen such that $h$ only depends on the indicated lattice parameter. The biquadratics are related to the $L$ and $\Lambda$ by
 \begin{align}
	 &\pdv{L(U,U_i,A_i)}{A_i}\equiv \log h(u,u_i,\alpha_i) + \kappa(U) + \kappa(U_i) + c(A_i) \mod 2 \pi {\rm i} , \label{L-log(h)}\\
	 &\pdv{\Lambda(U,U_{ij},A_i-A_j)}{A_i}\equiv \log g(u,u_{ij},\alpha_i-\alpha_j) + \kappa(U) + \kappa(U_{ij}) - \gamma(A_i-A_j) \mod 2 \pi {\rm i} ,\nonumber
 \end{align}
 for some functions $\kappa$, $c$, $\gamma$, where the multi-affine variables $u$, $u_i$, $u_{ij}$ and parameters $A_i$, $A_j$ are considered as functions of the transformed variables $U$, $U_i$, $U_{ij}$ and parameters $\alpha_i$, $\alpha_j$. Equation~\eqref{L-log(h)} was proved by implicit means in \cite[Lemma 3]{bobenko2010lagrangian} and can be verified by explicit calculations using the expressions for $L$, $\Lambda$, $g$, $h$ listed in Appendix~\ref{AppendixA}.

 On solutions to a quad equation $Q_{ij} = 0$, the following biquadratic identities hold (see \cite[Lemma~1]{bobenko2010lagrangian} and \cite[Proposition 15]{adler2003classification}):
 \begin{align*}
	h(u,u_i,\alpha_i)h(u_{ij},u_j,\alpha_i) = h(u,u_j,\alpha_j)h(u_{ij},u_i,\alpha_j) = g(u,u_{ij},\alpha_i-\alpha_j)g(u_i,u_j,\alpha_i-\alpha_j) .
 \end{align*}
 Since the tetrahedron equation $T(u,u_{ij},u_{jk},u_{ki},\alpha_i,\alpha_j,\alpha_k) =0$ coincides with a quad equation of type $Q$, i.e., \smash{$T = Q^{(\text{type Q})}(u,u_{ij},u_{jk},u_{ki},\alpha_i-\alpha_j,-\alpha_j+\alpha_k)$}, it follows that on solutions to the tetrahedron equation there holds
 \begin{align*}
 \begin{split}
	g(u,u_{ij},\alpha_i-\alpha_j)g(u_{ki},u_{jk},\alpha_i-\alpha_j) & = g(u,u_{jk},\alpha_j-\alpha_k)g(u_{ki},u_{ij},\alpha_k-\alpha_i) \\
	& = g(u,u_{ki},\alpha_k-\alpha_i)g(u_{ij},u_{jk},\alpha_k-\alpha_i) .
\end{split}
 \end{align*}
 Similarly, on solutions to the tetrahedron equation $T(u_{ijk},u_{k},u_{i},u_{j},\alpha_i,\alpha_j,\alpha_k) = 0$ we have
 \begin{align*}
	g(u_{ijk},u_{k},\alpha_i-\alpha_j)g(u_{j},u_{i},\alpha_i-\alpha_j) & = g(u_{ijk},u_{i},\alpha_j-\alpha_k)g(u_{j},u_{k},\alpha_k-\alpha_i) \\
	& = g(u_{ijk},u_{j},\alpha_k-\alpha_i)g(u_{k},u_{i},\alpha_k-\alpha_i) .
 \end{align*}

 Hence, on solutions to the quad equations around the cube (which imply the tetrahedron equations), we compute
 \begin{align*}
	\pdv{S_\trileg^{\vec \Theta,\vec \Xi}}{\alpha_i} &=
	\pdv{}{\alpha_i} \bigl( L(U_k,U_{ki},\alpha_i) - L(U_j,U_{ij},\alpha_i) + \Lambda(U,U_{ij},\alpha_i-\alpha_j) + \Lambda(U,U_{ki},\alpha_k-\alpha_i)\\
	&\hphantom{=\pdv{}{\alpha_i} \bigl(}{}\, - \Lambda(U_{ijk}, U_{k}, \alpha_i-\alpha_j) - \Lambda(U_{ijk},U_{j},\alpha_k-\alpha_i) \bigr) + 2 \Xi_i \pi {\rm i}\\
	&\equiv \log\biggl( \frac{h(u_k,u_{ki},\alpha_i)g(u,u_{ij},\alpha_i-\alpha_j)g(u_{ijk},u_{j},\alpha_k-\alpha_i)}{h(u_j,u_{ij},\alpha_i)g(u,u_{ki},\alpha_k-\alpha_i)g(u_{ijk}, u_{k}, \alpha_i-\alpha_j)} \biggr) \mod 2 \pi {\rm i} \\
	&\equiv \log\biggl( \frac{g(u, u_{ij}, \alpha_i-\alpha_j) g(u_i,u_j,\alpha_i-\alpha_j)}{h(u,u_{i},\alpha_i)h(u_j,u_{ij},\alpha_i)} \biggr) \\
 &\quad + \log\biggl( \frac{h(u,u_{i},\alpha_i)h(u_k,u_{ki},\alpha_i)}{g(u, u_{ki}, \alpha_k-\alpha_i)g(u_k,u_i,\alpha_k-\alpha_i)} \biggr) \\
	& \quad + \log\biggl( \frac{g(u_{ijk},u_{j},\alpha_k-\alpha_i)g(u_{k},u_{i},\alpha_k-\alpha_i)}{g(u_{ijk},u_{k},\alpha_i-\alpha_j)g(u_{j},u_{i},\alpha_i-\alpha_j) } \biggr) \mod 2 \pi {\rm i} \\
 &\equiv 0 \mod 2 \pi {\rm i} ,
 \end{align*}
 so there is a $\Xi_i \in \Z$ such that $\pdv{S^{\Theta,\Xi}_\trileg}{\alpha_i} = 0$.
 The proof for $S_\triang^{\vec \Theta,\vec \Xi}$ is analogous.
\end{proof}

As a function of the fields and parameters, $S^{\vec \Theta, \vec \Xi}$ is a smooth function, except where its constituent functions have branch cuts. The following lemma states that $S^{\vec \Theta, \vec \Xi}$ is constant on solutions away from branch cuts, and gives the possible values for the jump when crossing a~branch cut.

\begin{Lemma}\label{lemma-Sjump}
 Consider a smooth $1$-parameter family of solutions $\vec U(t)$ to a transformed quad equation with lattice parameters $\vec A(t)$. Let \smash{$S^{\vec \Theta, \vec \Xi}$} denote the action on the cube of either the trident or the triangle Lagrangian.

 For ${\rm H}1$, ${\rm A}1_{\delta=0}$, ${\rm Q}1_{\delta=0}$, we set $\vec \Theta(t) = \vec 0$, so that $S^{\vec \Theta, \vec \Xi} = S^{\vec \Xi}$. For the other equations, $\vec \Theta(t)$ is uniquely determined by the corner equations. For all equations, $\vec \Xi(t)$ is determined by the~conditions \smash{$\pdv{S^{\vec \Theta, \vec \Xi}}{\vec A} = \vec 0$}.

 Assume $S^{\vec \Theta, \vec \Xi}$ has a discontinuity at $(\vec U(t_0), \vec A(t_0), \vec \Theta(t_0), \vec \Xi(t_0) )$ and denote the corresponding jump in the value of $S^{\vec \Theta, \vec \Xi}$ by
 \begin{gather*}
 J\bigl[S^{\vec \Theta, \vec \Xi}\bigr] := \lim_{\tau \to 0^+} \bigl( S^{\vec \Theta, \vec \Xi}(\vec U(t_0+\tau), \vec A(t_0+\tau), \vec \Theta(t_0+\tau), \vec \Xi(t_0+\tau)) \\
 \hphantom{J[S^{\vec \Theta, \vec \Xi}] := \lim_{\tau \to 0^+} \bigl(}{}\
 - S^{\vec \Theta, \vec \Xi}(\vec U(t_0-\tau), \vec A(t_0+\tau), \vec \Theta(t_0-\tau), \vec \Xi(t_0-\tau)) \bigr) .
 \end{gather*}
 Then
 \begin{itemize}\itemsep=0pt\samepage
 \item for ${\rm H}1$, ${\rm H}2$, ${\rm Q}1$, ${\rm Q}2$, ${\rm A}1$, we have $J\bigl[S^{\vec \Theta, \vec \Xi}\bigr] = 0$,

 \item for ${\rm H}3$, ${\rm Q}3$, ${\rm A}2$, we have $J\bigl[S^{\vec \Theta, \vec \Xi}\bigr] \equiv 0 \mod 4 \pi^2$.
 \end{itemize}
\end{Lemma}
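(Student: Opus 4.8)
The plan is to localise the analysis at each branch-cut crossing and treat the logarithmic and dilogarithmic equations separately. First I would establish that $S^{\vec\Theta,\vec\Xi}$ is constant on the family between crossings: there $\vec\Theta,\vec\Xi$ are fixed, so
\[
\frac{\d}{\d t} S^{\vec\Theta,\vec\Xi} = \sum_{v} \pdv{S^{\vec\Theta,\vec\Xi}}{U_v}\, \dot U_v + \sum_{d} \pdv{S^{\vec\Theta,\vec\Xi}}{A_d}\, \dot A_d .
\]
On solutions the corner equations give $\pdv{S^{\vec\Theta,\vec\Xi}}{U_v}=0$ for every vertex $v$ (for $\cL_\triang$ the two absent corners contribute nothing), while the defining conditions for $\vec\Xi$ give $\pdv{S^{\vec\Theta,\vec\Xi}}{A_d}=0$; hence $\frac{\d}{\d t}S^{\vec\Theta,\vec\Xi}=0$ and $J[S^{\vec\Theta,\vec\Xi}]$ collects only the contributions of isolated crossings. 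At such a crossing a principal-branch logarithm or dilogarithm hops its cut and the integers jump to preserve the corner and parameter equations; as $\vec U,\vec A$ are continuous,
\[
J\big[S^{\vec\Theta,\vec\Xi}\big] = J\big[S_{\mathrm{faces}}\big] + 2\pi{\rm i}\sum_v J[\Theta_v]\,U_v + 2\pi{\rm i}\sum_d J[\Xi_d]\,A_d .
\]

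For the logarithmic equations H1, H2, Q1, Q2, A1, each cut comes from a term $f\log(g)$ in $L$ or $\Lambda$ where, by inspection of Appendix~\ref{AppendixA}, the prefactor $f$ is homogeneous of degree one in the $U_v$ and $A_d$. When $g$ crosses its cut the face contribution jumps by $2\pi{\rm i}f$, and the derivatives $\psi=\pdv{L}{U}$, $\phi=\pdv{\Lambda}{U}$, $\pdv{L}{A_d}$, $\pdv{\Lambda}{A_d}$ jump as well, since each contains the summand $\pdv{f}{U_v}\log(g)$ or $\pdv{f}{A_d}\log(g)$. Entering the corresponding corner and parameter equations, these force $J[\Theta_v]$ and $J[\Xi_d]$ equal to $\mp\pdv{f}{U_v}$ and $\mp\pdv{f}{A_d}$, the signs being fixed consistently by the coefficient with which that face sits in $S$. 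Summing and invoking Euler's identity $\sum_v \pdv{f}{U_v}U_v + \sum_d \pdv{f}{A_d}A_d = f$, the vertex and parameter jumps reproduce $\mp 2\pi{\rm i}f$ and cancel $J[S_{\mathrm{faces}}]$ exactly, so $J[S^{\vec\Theta,\vec\Xi}]=0$.

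For the dilogarithmic equations H3, Q3, A2, the transformed $\Lambda$ genuinely contains $\dilog({\rm e}^{w})$ with $w$ homogeneous linear in the $U_v$ and $A_d$. Using the monodromy $\dilog(x+{\rm i}0)-\dilog(x-{\rm i}0) = 2\pi{\rm i}\log(x)$ for $x>1$, the face contribution jumps by $2\pi{\rm i}\log({\rm e}^{w}) = 2\pi{\rm i}(w-2\pi{\rm i}k) = 2\pi{\rm i}w + 4\pi^2 k$, where $k\in\Z$ labels the sheet on which ${\rm e}^{w}\in(1,\infty)$. Crucially, ${\rm e}^{w}$ reaching $(1,\infty)$ is the very event at which the logarithms $\log(1-{\rm e}^{w})$ in the derivatives hop their cut, since $\frac{\d}{\d z}\dilog({\rm e}^{z}) = -\log(1-{\rm e}^{z})$. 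Propagating these synchronised jumps through the corner equations (which use $\phi$ and its point inverse) and the $\vec\Xi$-conditions (which use $\pdv{\Lambda}{A_d}$) shows that $J[\Theta_v]$ and $J[\Xi_d]$ reconstruct precisely $-2\pi{\rm i}w$, cancelling the linear part of the face jump. Since no term of the action is an integer field times a constant, the residual $4\pi^2 k$ cannot be absorbed, and $J[S^{\vec\Theta,\vec\Xi}] \equiv 0 \mod 4\pi^2$.

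The hard part will be the synchronisation in the dilogarithmic case: one must check that the coefficients of $U_v$ and $A_d$ inside the single factor $2\pi{\rm i}w$ are matched term by term by the integer jumps, which are sourced by the $\log(1-{\rm e}^{w})$ crossings in $\phi$, in its point inverse, and in the two relevant parameter derivatives. This matching rests on the identity $\frac{\d}{\d z}\dilog({\rm e}^{z})=-\log(1-{\rm e}^{z})$ linking the two kinds of cut, and on Lemma~\ref{lemma-dSdalpha}, which guarantees that the $\vec\Xi$-adjustments demanded at the crossing are integers. The computation for $\cL_\triang$ is identical after deleting the corners $U$ and $U_{ijk}$, on which $S_\triang$ does not depend.
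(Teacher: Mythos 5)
Your proposal is correct and follows essentially the same strategy as the paper's proof: localise at each branch-cut crossing, note that the forced jumps in $\vec\Theta$ and $\vec\Xi$ (read off from the corner equations and the conditions $\pdv{S^{\vec\Theta,\vec\Xi}}{\vec A}=\vec 0$) cancel the $2\pi{\rm i}$-times-linear-prefactor jump of the (di)logarithm, leaving a residual $4\pi^2 k$ only in the dilogarithm case. Your use of Euler's identity for degree-one homogeneous prefactors merely unifies the paper's first two explicit cases (prefactor depending only on parameters, as for H1-type terms, versus $x\log x$-type terms for H2, Q1$_{\delta=1}$, Q2, A1$_{\delta=1}$) into one computation, which is a cosmetic rather than substantive difference.
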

\begin{proof}
 We show that the change in $\vec \Theta$ and $\vec \Xi$ cancels any branch jump in the (di)logarithm functions. Consider a 1-parameter family of solutions $(\vec U(t), \vec A(t), \vec \Theta(t), \vec \Xi(t) )$ such that at $t=t_0$ a branch cut is crossed in a term of $S^{\vec \Theta, \vec \Xi}$. There are three possible forms for this term:
 \begin{enumerate}\itemsep=0pt
 \item H1, Q1$_{\delta=0}$, A1$_{\delta=0}$ contain terms of the form
 \[ \biggl( \sum_n \varepsilon_n A_n \biggr) \log \biggl( \sum_m \delta_m U_m \biggr), \] where $n$ ranges over a subset of $\{i,j,k\}$, $m$ ranges over a subset of the corners of the cube, and $\varepsilon_n,\delta_m \in \{-2,-1,1,2\}$.
 The jump in the logarithm could be either~$2\pi {\rm i}$ or~$-2 \pi {\rm i}$, depending on the direction in which the branch cut is crossed. Assume without loss of generality that this is towards the positive imaginary direction and that $(\sum_n \varepsilon_n A_n) \log(\sum_m \delta_m U_m)$ occurs with + sign in the action. Then
 \begin{align*}
J\biggl[ \log \biggl(\sum_m \delta_m U_m\biggr) \biggr] &= \lim_{\tau \to 0^+} \biggl( \log \biggl( \sum_m \delta_m U_m(t_0+\tau) \biggr) - \log \biggl(\sum_m \delta_m U_m(t_0-\tau) \biggr) \biggr) \\
 &= 2 \pi {\rm i}
 \end{align*}
 and
 \[ J[S] = 2 \pi {\rm i} \sum_n \varepsilon_n A_n ,\]
 so
 \[ J\biggl[ \pdv{S}{A_n} \biggr] = 2 \pi {\rm i} \varepsilon_n . \]
 Since \smash{$\pdv{S^{\vec \Theta, \vec \Xi}}{A_n} = 0$} for all solutions, this implies that
 \[ J[\Xi_n] = - \varepsilon_n . \]
 Hence,
 \begin{align*}
 J\bigl[ S^{\vec \Theta, \vec \Xi} \bigr]
 &= J[S] + 2 \pi {\rm i} \sum_n A_n J[\Xi_n]
 = 0 .
 \end{align*}

 \item H2, Q1$_{\delta=1}$, Q2, A1$_{\delta=1}$ contain terms of the form
 \[ \biggl( \sum_n \varepsilon_n X_n \biggr) \log \biggl( \sum_n \varepsilon_n X_n \biggr), \]
 where $\varepsilon_n \in \{1, -1\}$ and the $X_n$ are a subset of the lattice parameters and the fields on the corners of the cube. Note that the two sums are identical.
 Assume without loss of generality that the term $( \sum_n \varepsilon_n X_n ) \log ( \sum_n \varepsilon_n X_n )$ occurs with a~+~sign in the action and that the branch cut is crossed from the lower half of the complex plane into the upper half, then
 \[ J\biggl[\log \biggl( \sum_n \varepsilon_n X_n \biggr) \biggr] = 2 \pi {\rm i} . \]
 Then
 \[ J[S] = 2 \pi {\rm i} \sum_n \varepsilon_n X_n \]
 and
 \[ J\biggl[ \pdv{S}{X_n} \biggr] = 2 \pi {\rm i} \varepsilon_n . \]
 So, depending on whether $X_n$ is a lattice parameter or a field, we have either
 \[ J[ \Xi_n ] = - \varepsilon_n \qquad \text{or} \qquad J[ \Theta_n ] = - \varepsilon_n . \]
 Hence,
 \begin{align*}
 J\bigl[ S^{\vec \Theta, \vec \Xi} \bigr]
 &= J[S] + \sum_{n\colon X_n \text{ is lattice parameter}} 2 \pi {\rm i} X_n J[\Xi_n] + \sum_{n\colon X_n \text{ is field}} 2 \pi {\rm i} X_n J[\Theta_n]
 = 0 .
 \end{align*}

 \item H3, Q3, A2 contain terms of the form
 \[ \dilog \biggl( \exp \biggl( \sum_n \delta_n X_n \biggr) \biggr), \]
 where $\varepsilon_n \in \{1, -1\}$ and the $X_n$ are a subset of the lattice parameters and the fields on the corners of the cube.
 Note that the dilogarithm function $\dilog(z)$ has a branch jump at $z \in (1,\infty) \subset \R$, where the value jumps by $\pm 2 \pi {\rm i} \log(z)$. Assume that branch cut is crossed from the lower half plane into the upper half plane and that the term $\dilog ( \exp ( \sum_n \varepsilon_n X_n ) )$ occurs with a + sign in the action, then the branch jump satisfies
 \begin{align*}
 J[S] = J\biggl[ \dilog \biggl(\exp \biggl( \sum_n \varepsilon_n X_n \biggr) \biggr) \biggr]
 &\equiv 2 \pi {\rm i} \sum_n \varepsilon_n X_n \mod 4 \pi^2 .
 \end{align*}
 Then
 \[ J\biggl[ \pdv{S}{X_n} \biggr] =
 J\biggl[ - \varepsilon_n \log \biggl( 1-\exp \biggl(\sum_m \varepsilon_m X_m \biggr) \biggr) \biggr] = 2 \pi {\rm i} \varepsilon_n . \]
 So, depending on whether $X_n$ is a lattice parameter or a field, we have either
 \[ J[ \Xi_n ] = -\varepsilon_n \qquad \text{or} \qquad J[ \Theta_n ] = -\varepsilon_n . \]
 Hence,
 \begin{align*}
 J[ S^{\vec \Theta, \vec \Xi} ]
 &= J[S] + \sum_{n \colon X_n \text{ is lattice parameter}} 2 \pi {\rm i} X_n J[\Xi_n] + \sum_{n \colon X_n \text{ is field}} 2 \pi {\rm i} X_n J[\Theta_n] \\
 &\equiv 0 \mod 4 \pi^2 .
 \tag*{\qed}
 \end{align*}
 \end{enumerate}
 \renewcommand{\qed}{}
\end{proof}

\begin{Theorem}
 \label{thm-closed}
 Let either \smash{$S = S_\trileg^{\vec \Theta, \vec \Xi}$} or \smash{$S = S_\triang^{\vec \Theta, \vec \Xi}$}.
 For any solution $\vec U$ to the $($transformed$)$ quad equations, with lattice parameters $\vec A$ and associated integer fields $\vec \Theta$, $\vec \Xi$, there holds
 \[ S(\vec U, \vec A, \vec \Theta, \vec \Xi) =
 \begin{cases}
 0 & \text{for ${\rm H}1$, ${\rm H}2$, ${\rm Q}1$, ${\rm Q}2$, and ${\rm A}1$,} \\
 4 k \pi^2, \ k \in \Z, & \text{for ${\rm H}3$, ${\rm Q}3$, and ${\rm A}2$.}
 \end{cases}\]
 Note that for ${\rm H}1$, ${\rm A}1_{\delta=0}$, ${\rm Q}1_{\delta=0}$, we have $\vec \Theta = \vec 0$.
\end{Theorem}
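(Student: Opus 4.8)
The plan is to show that $S$ is locally constant along any smooth family of solutions and then to evaluate it at a single conveniently chosen reference solution. The two facts that force local constancy are already available: the corner equations (the Theorems identifying $\pdv{S}{U_c}$ with the three-leg quad and tetrahedron expressions) control variations of the fields, and Lemma~\ref{lemma-dSdalpha} controls variations of the lattice parameters.

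Concretely, let $(\vec U(t), \vec A(t))$ be a smooth one-parameter family of solutions to the (transformed) quad equations around the cube, with the integer fields $\vec\Theta(t)$, $\vec\Xi(t)$ determined as in Lemma~\ref{lemma-Sjump}. On any interval on which no branch cut is crossed, $\vec\Theta$ and $\vec\Xi$ are locally constant integers, so
\[
\frac{\d S}{\d t} = \sum_{c} \pdv{S}{U_c}\, \dot U_c + \sum_{m \in \{i,j,k\}} \pdv{S}{A_m}\, \dot A_m ,
\]
where $c$ runs over the eight corners of the cube. Each $\pdv{S}{U_c}$ equals, up to sign, a three-leg quad or tetrahedron expression shifted by $2\Theta_c \pi {\rm i}$, hence vanishes on solutions by the corner equations; each $\pdv{S}{A_m}$ vanishes by Lemma~\ref{lemma-dSdalpha} (equivalently $\pdv{S}{\alpha_m}=0$, the change of variables $\alpha = {\rm e}^{A}$ contributing only a nonzero factor). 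Therefore $S$ is constant on every branch-cut-free piece of the family, and the same computation applies verbatim to both $S_\trileg^{\vec\Theta,\vec\Xi}$ and $S_\triang^{\vec\Theta,\vec\Xi}$.

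It remains to handle branch-cut crossings and to fix the constant. The jumps are exactly the content of Lemma~\ref{lemma-Sjump}: for H1, H2, Q1, Q2, A1 every crossing contributes $0$, while for H3, Q3, A2 every crossing contributes a multiple of $4\pi^2$. Hence along any path of solutions $S$ is piecewise constant with jumps in $\{0\}$, respectively $4\pi^2\Z$, so $S$ is globally constant, respectively constant modulo $4\pi^2$. To pin the constant down I would deform an arbitrary solution to a point-inversion-symmetric reference solution, i.e.\ one with $U = U_{ijk}$, $U_i = U_{jk}$, $U_j = U_{ki}$, $U_k = U_{ij}$: by the manifestly skew-symmetric form $(\cdots) - \reverse + \cyclic$ of \eqref{Strid-sym}, each summand then cancels its own point inverse (the logarithms and dilogarithms pairing with identical arguments), so $S = 0$ there. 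Combined with local constancy this yields $S = 0$ for H1, H2, Q1, Q2, A1, and $S = 4k\pi^2$ with $k \in \Z$ for H3, Q3, A2, where nonzero values of $k$ are produced by the $4\pi^2$-jumps accumulated while deforming a generic solution to the reference.

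The main obstacle is the connectivity step. I must establish that any solution around the cube can be joined to a reference solution by a continuous path of solutions along which $S$ remains defined, so that the only discontinuities are the mild branch jumps of Lemma~\ref{lemma-Sjump} and not genuine singularities, where legs collide and the (di)logarithms diverge. A solution is parametrised by the free initial data $(U, U_i, U_j, U_k)$ together with $(\alpha_i, \alpha_j, \alpha_k)$, so the solution set is an open dense subset of a complex affine space and hence path-connected; the care required is to route paths around the codimension-one singular locus, to confirm that a point-inversion-symmetric solution actually exists in the relevant component, and to check that $\vec\Theta$ and $\vec\Xi$ stay consistently defined (as prescribed by the corner equations and by $\pdv{S}{\vec A}=\vec 0$) all along the path. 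For H3, Q3, A2 the residual $4\pi^2$ ambiguity is genuinely unremovable, in agreement with the numerical evidence announced in the introduction.
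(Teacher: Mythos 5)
Your proposal shares its core mechanism with the paper's proof: on a smooth one-parameter family of solutions, the corner equations together with Lemma~\ref{lemma-dSdalpha} kill the gradient of $S$, so $S$ is constant between branch crossings, and Lemma~\ref{lemma-Sjump} controls the jumps ($0$, respectively multiples of $4\pi^2$). Where you genuinely diverge is the evaluation step. The paper deforms to explicit degenerate families: for H1, H2, Q1, Q2, A1 the scaling $\vec V(t) = \bigl(t\vec U, t^2\vec A\bigr)$ or $(t\vec U, t\vec A)$, which is a family of solutions by homogeneity and, crucially, never crosses a branch cut (the complex argument of each logarithm's argument is $t$-independent), so that for these equations Lemma~\ref{lemma-Sjump} is not even needed and $\lim_{t\to 0}S = 0$ follows by elementary calculus; for H3, Q3, A2 the family $v = 1 + wt$, $\beta = 1 + \gamma t^2$ together with Lemma~\ref{lemma-1+Ot}, giving $\lim_{t\to 0} S \equiv 0 \bmod 4\pi^2$. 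You instead deform to a point-inversion-symmetric solution ($U = U_{ijk}$, $U_i = U_{jk}$, etc.), where $S = 0$ \emph{exactly} because each term of the $(\cdots) - \reverse + \cyclic$ structure cancels its point inverse at literally identical arguments, independent of branch choices; the $\Theta$-terms also cancel, since the corner equations at paired corners force $\Theta_{ijk} = -\Theta$, $\Theta_{jk} = -\Theta_i$, etc., and $\pdv{S}{A_m} = 0$ at a symmetric point forces $\Xi_m = 0$. This buys uniformity across all equations and dispenses with limit computations and with Lemma~\ref{lemma-1+Ot}; the paper's explicit families buy the stronger, cleaner conclusion for the first group (no branch crossings at all) and sidestep any existence question for the endpoint.

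That existence question is the one substantive gap you have flagged but not closed: you must show, for each ABS equation in its transformed variables, that a nonsingular point-inversion-symmetric solution exists and lies in the same path component as a generic solution. This is a genuine per-equation obligation, comparable in labour to the paper's Lemma~\ref{lemma-1+Ot}. For H1 it reduces to the cyclic system $(u_i - u_j)(u - u_k) = \alpha_i - \alpha_j$ and its two cyclic companions, whose left-hand sides sum to zero identically, matching the vanishing sum on the right, so solutions exist for generic parameters; for the exponential cases one must additionally choose the lifts so that $U_{ijk} = U$ holds exactly and not merely modulo $2\pi{\rm i}$ (possible, since the transformed equations only see ${\rm e}^U$, but it needs saying). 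Second, your connectivity justification is flawed as stated: ``open dense subset of a complex affine space'' does not imply path-connected (the complement of a real hyperplane is open and dense but disconnected). What saves you is that the genuine singular locus --- collapsing legs, vanishing logarithm arguments --- is a finite union of \emph{complex} hypersurfaces in the initial-data space $(U, U_i, U_j, U_k, A_i, A_j, A_k)$, whose complement is path-connected because the excluded set has real codimension two; branch cuts, which are only real-codimension one and hence unavoidable, need not be avoided precisely because Lemma~\ref{lemma-Sjump} absorbs them. With the symmetric-endpoint existence checked equation by equation and the codimension argument stated correctly, your proof goes through and would be a legitimate alternative to the paper's.
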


\begin{proof}[Proof for H1, H2, Q1, Q2, A1]
 Given a solution $(\vec U, \vec A) = (U,U_i,\ldots,U_{ijk},A_i,A_j,A_k)$, consider the 1-parameter family of scaled solutions $\vec V(t) = \bigl(t \vec U, t^2 \vec A\bigr)$ in case of H1 and $\vec V(t) = (t \vec U, t \vec A)$ for the other equations. For $t>0$, this family does not encounter any singularities or branch cuts, because the complex argument of the argument of every logarithm function in the action is independent of $t$. For the same reasons, the integer fields $\vec \Xi$ and $\vec \Theta$ associated to this family of solutions are independent of $t$.

 The corner equations
 \begin{alignat*}{6}
 &\pdv{S}{U} = 0 ,\qquad && \pdv{S}{U_i} = 0 ,\qquad && \pdv{S}{U_j} = 0 ,\qquad && \pdv{S}{U_k} = 0 ,&& & \\
 & && \pdv{S}{U_{jk}} = 0 ,\qquad && \pdv{S}{U_{ki}} = 0 ,\qquad && \pdv{S}{U_{ij}} = 0 ,\qquad && \pdv{S}{U_{ijk}} = 0 , &
 \end{alignat*}
 together with the equations from Lemma~\ref{lemma-dSdalpha}
 \begin{align*}
 &\pdv{S}{A_i} = 0 ,\qquad  \pdv{S}{A_j} = 0 ,\qquad \pdv{S}{A_k} = 0
 \end{align*}
 imply that the gradient of $S$ vanishes. Hence, $S(\vec V(t), \vec \Theta, \vec \Xi)$ is constant for $t>0$. Furthermore, elementary calculus shows that in each of these cases, $\lim_{t \to 0} S(\vec V(t), \vec \Theta, \vec \Xi) = 0$, so we conclude that $S(\vec V(t), \vec \Theta, \vec \Xi) = 0$ for all $t>0$. In particular, we have \[ S(\vec U, \vec A, \vec \Theta, \vec \Xi) = S(\vec V(1), \vec \Theta, \vec \Xi) = 0 . \tag*{\qed} \]
 \renewcommand{\qed}{}
\end{proof}

In the remaining cases, we will not be able to give such an explicit description of a suitable one-parameter family of solutions, but the following lemma allows us to establish the existence of such a family.

\begin{Lemma}
 \label{lemma-1+Ot}
 Let $w, w_i, w_j, \gamma_i, \gamma_j \in \C$ with $w_i \neq w_j$ and consider the following initial data, parametrised by $t \in \R$:
 \begin{alignat*}{4}
 & v(t) = 1 + w t ,\qquad
 && v_i(t) = 1 + w_i t ,\qquad
 && v_j(t) = 1 + w_j t ,& \\
 & &&\beta_i(t) = 1 + \gamma_i t^2 ,\qquad
 &&\beta_j(t) = 1 + \gamma_j t^2 .&
 \end{alignat*}
 Then the solution to the multi-affine quad equation $Q(v,v_i,v_j,v_{ij},\beta_i,\beta_j) = 0$ satisfies $v_{ij}(t) = 1 + O(t)$.
\end{Lemma}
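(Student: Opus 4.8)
The plan is to treat the statement as an elementary fact about the unique multi-affine solution, extracting $v_{ij}(0)$ from a first-order expansion. Since $Q$ has degree one in its fourth argument, I would set $q(t,z) := Q(v(t),v_i(t),v_j(t),z,\beta_i(t),\beta_j(t)) = P(t)\,z + R(t)$, so that the unique solution is $v_{ij}(t) = -R(t)/P(t)$ wherever $P(t)\neq 0$. The whole difficulty is that at $t=0$ we have $v(0)=v_i(0)=v_j(0)=1$ and $\beta_i(0)=\beta_j(0)=1$, and on this configuration the equation degenerates, $q(0,z)=Q(1,1,1,z,1,1)\equiv 0$. Hence $P(0)=R(0)=0$, and $v_{ij}(0)$ is an indeterminate $0/0$ whose value must be read off from the order-$t$ coefficients.

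The key structural input I would invoke is the standard degeneration of ABS quad polynomials at coinciding parameters: there is a factorization
\[ Q(u,u_i,u_j,u_{ij},\alpha,\alpha) = f(\alpha)\,(u_i-u_j)\,(u-u_{ij}), \]
with $f(1)\neq 0$ (verified directly from the explicit polynomials of Appendix~\ref{AppendixA}; for the equations requiring a change of variables it is the multi-affine polynomial in $u={\rm e}^U$, $\alpha={\rm e}^A$ that is meant, with base point $u=\alpha=1$). Differentiating $q$ in $t$ at $t=0$, the contributions from $\beta_i,\beta_j$ drop out because $\dot\beta_i(0)=\dot\beta_j(0)=0$ — this is exactly why the parameters are scaled with $t^2$ rather than $t$. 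What survives is
\[ \dot q(0,z) = w\,\pdv{Q}{u} + w_i\,\pdv{Q}{u_i} + w_j\,\pdv{Q}{u_j}, \]
all evaluated at $(1,1,1,z,1,1)$. From the factorization these partials equal $f(1)(u_i-u_j)=0$, $f(1)(u-u_{ij})=f(1)(1-z)$, and $-f(1)(1-z)$ respectively, so $\dot q(0,z)=f(1)(w_i-w_j)(1-z)$.

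Reading off coefficients gives $P'(0) = -f(1)(w_i-w_j)$ and $R'(0) = f(1)(w_i-w_j)$. Because $w_i\neq w_j$ and $f(1)\neq 0$, we have $P'(0)\neq 0$, so $P(t)=P'(0)t+O(t^2)$ is nonzero for small $t\neq 0$, the solution $v_{ij}(t)=-R(t)/P(t)$ is genuinely defined and analytic near $t=0$, and
\[ v_{ij}(0) = -\frac{R'(0)}{P'(0)} = 1. \]
Thus $v_{ij}(t)=1+O(t)$, as claimed. Equivalently, one may factor $q(t,z)=t\,\hat q(t,z)$ and apply the implicit function theorem to $\hat q$ at $(t,z)=(0,1)$, where $\hat q(0,1)=0$ and $\partial_z\hat q(0,1)=P'(0)\neq 0$.

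The hard part will be justifying the factorization at coinciding parameters together with $f(1)\neq 0$; once these are granted the rest is bookkeeping. I expect to establish them either by case-by-case inspection of Appendix~\ref{AppendixA} or, more conceptually, from the three-leg form: with $\alpha_i=\alpha_j$ the diagonal leg $\phi(u,u_{ij},0)$ vanishes and the short legs collapse to $\psi(u,u_i,\alpha)-\psi(u,u_j,\alpha)$, forcing the degeneracy locus $u_i=u_j$ and explaining the factor $(u_i-u_j)$, while the factor $(u-u_{ij})$ is the denominator relating $Q_{ij}$ to $\mathcal{Q}^{(u)}_{ij}$. One should also confirm that $P(t)$ does not vanish to higher order for special $\gamma_i,\gamma_j$, but this cannot occur, since its leading coefficient $-f(1)(w_i-w_j)$ is independent of the $\gamma$'s.
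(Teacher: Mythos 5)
Your reduction of the lemma to the degeneration at coinciding parameters, $Q(u,u_i,u_j,u_{ij},\alpha,\alpha)=f(\alpha)(u_i-u_j)(u-u_{ij})$, is sound, and the first-order bookkeeping that follows from it is correct. The genuine gap is the claim that $f(1)\neq 0$ can be ``verified directly from the explicit polynomials of Appendix~\ref{AppendixA}'': it is \emph{false} for Q3 and A2, which together with H3 are exactly the equations for which the paper invokes this lemma in the proof of Theorem~\ref{thm-closed}. For Q3$_{\delta=0}$, setting $\alpha_i=\alpha_j=\alpha$ kills the cross term and leaves $Q=\bigl(\alpha-\alpha^{-1}\bigr)(u-u_{ij})(u_i-u_j)$, and similarly for Q3$_{\delta=1}$ and A2 one finds $f(\alpha)$ proportional to $\alpha-\alpha^{-1}$, so $f(1)=0$ at the base point $\beta=1$. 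In these cases $\dot q(0,z)\equiv 0$, i.e., $P'(0)=R'(0)=0$, and your expansion yields no information; your closing reassurance that higher-order vanishing of $P$ ``cannot occur, since its leading coefficient $-f(1)(w_i-w_j)$ is independent of the $\gamma$'s'' is precisely wrong here — that leading coefficient vanishes, and the next one does depend on the $\gamma$'s.

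Carrying the expansion one order further (using $\beta_i-\beta_i^{-1}=2\gamma_i t^2+O\bigl(t^4\bigr)$, etc.) one finds for Q3 that the leading term of $q(t,z)$ is proportional to $\bigl[\gamma_i w_j-\gamma_j w_i-(\gamma_i-\gamma_j)w\bigr](z-1)\,t^3$, so the conclusion $v_{ij}(t)=1+O(t)$ requires the nondegeneracy $c:=\gamma_i w_j-\gamma_j w_i-(\gamma_i-\gamma_j)w\neq 0$, which does not follow from $w_i\neq w_j$. This is not a removable technicality: for Q3$_{\delta=0}$ with $w=0$, $w_i=1$, $w_j=-1$, $\gamma_i=1$, $\gamma_j=-1$ (so $c=0$), the unique multi-affine solution is exactly
\begin{equation*}
 v_{ij}(t)=\frac{-2+t+t^2}{t(1-t)} ,
\end{equation*}
which blows up as $t\to 0$. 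Hence for Q3 and A2 any correct argument must go to second order and impose a genericity condition on $(w,\gamma_i,\gamma_j)$ beyond $w_i\neq w_j$ (in the same spirit as the paper's assumption that $w_i$, $w_j$, $w_k$ are distinct); your route cannot be patched by more careful expansion alone. For comparison, the paper offers no details — its proof is a one-line appeal to elementary case-by-case computations — and for the six equations with $f(1)\neq 0$ (H1, H2, H3, A1, Q1, Q2) your uniform argument is correct and cleaner than brute force; but as a proof of the lemma in the cases where it is actually needed, it covers only H3 and is therefore incomplete.
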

\begin{proof}
This follows from elementary case-by-case computations.
\end{proof}

\begin{proof}[Proof of Theorem \ref{thm-closed} for H3, Q3, and A2]
 Given a solution $(\vec U, \vec A)$ to the transformed equations, and the corresponding solution $(\vec u, \vec \alpha)$ to the multi-affine equations. Consider a~family $(\vec v(t), \vec \beta(t))$ of solutions to the multi-affine equations such that
 \begin{alignat*}{5}
 & v(t) = 1 + w t ,\qquad
 && v_i(t) = 1 + w_i t ,\qquad
 && v_j(t) = 1 + w_j t ,\qquad
 && v_k(t) = 1 + w_k t , &\\
 &\beta_i(t) = 1 + \gamma_i t^2 ,\qquad
 &&\beta_j(t) = 1 + \gamma_j t^2 ,\qquad
 &&\beta_k(t) = 1 + \gamma_k t^2 ,&
 \end{alignat*}
 where the constants $w$, $w_i$, $w_j$, $w_k$, $\gamma_i$, $\gamma_j$, $\gamma_k$ are chosen such that $(\vec v(1), \vec \beta(1)) = (\vec u, \vec \alpha)$.

 We assume that we are dealing with a generic solution $(\vec U, \vec A)$, in the sense that~$w_i$,~$w_j$,~$w_k$ are distinct. If not, we can consider a small perturbation of the original solution.

 Applying Lemma~\ref{lemma-1+Ot} to the faces of the cube adjacent to the vertex of $v$, we obtain that $v_{ij} = 1+ O(t)$, $v_{jk} = 1+ O(t)$, and $v_{ki} = 1+ O(t)$. Applying Lemma~\ref{lemma-1+Ot} to the tetrahedron equation, which is always a quad equation of type Q, we find that $v_{ijk} = 1 + O(t)$. Now consider a smooth one-parameter family $(\vec V(t), \vec B(t))$ of solutions to the transformed equations, corresponding to $(\vec v(t), \vec \beta(t))$ at each $t$. Let $\vec \Theta(t)$, $\vec \Xi(t)$ be the integer fields associated to this solution.

 At each of the lattice sites, we have either $v(t) = {\rm e}^{V(t)}$ or $v(t) = \cosh(V(t))$; in both cases it follows that ${\rm e}^{V(t)} = 1 + O(t)$ and hence $\lim_{t \to 0} V(t) \equiv 0 \mod 2 \pi {\rm i}$. By inspection of the functions~$L$ and~$\Lambda$ given in Appendix~\ref{AppendixA}, and noting that $2 \pi {\rm i} \lim_{t \to 0} V(t) \Theta(t) \equiv 0 \mod 4 \pi^2$, it now follows that
 \[ \lim_{t \to 0} S(\vec V(t), \vec B(t), \vec \Theta(t), \vec \Xi(t)) \equiv 0 \mod 4 \pi^2. \]
If $S$ is differentiable, the corner equations and Lemma~\ref{lemma-dSdalpha} mean that the gradient of $S$ vanishes. If $S$ is not differentiable, it encounters a branch discontinuity, where it jumps by a multiple of~$4 \pi^2$ as shown in Lemma~\ref{lemma-Sjump}. This shows that
 \begin{align*}
 S(\vec U, \vec A,\vec \Theta, \vec \Xi) &= S(\vec V(1), \vec B(1), \vec \Theta(1), \vec \Xi(1)) \\
 &\equiv \lim_{t \to 0} S(\vec V(t), \vec B(t), \vec \Theta(t), \vec \Xi(t))
 \equiv 0 \mod 4 \pi^2 . \tag*{\qed}
 \end{align*}
 \renewcommand{\qed}{}
\end{proof}

\begin{ExampleH3}
 To see that the statement of Theorem \ref{thm-closed} is as strong as can be hoped for, we numerically computed a number of solutions to the equations of the ABS list and readily observed examples for H3, A2, and Q3, where the action is a non-zero multiple of $4 \pi^2$. For example, for H3$_{\delta = 0}$ with initial data
 \begin{alignat*}{5}
 & A_i = 2 - 2{\rm i} ,\qquad && A_j = 1 + 2{\rm i} ,\qquad && A_k = 1 , \qquad &&&\\
 &U = 1 - {\rm i} ,\qquad && U_i = 1 + {\rm i} ,\qquad && U_j = {\rm i} ,\qquad && U_k = 2 + {\rm i} ,&
 \end{alignat*}
 we find a (non-unique) solution
 \begin{align*}
 &U_{ij} = 1.49110840612501 + 1.80621078461947{\rm i} , \\
 &U_{jk} = 1.00000000000000 + 3.05123312020969{\rm i} , \\
 &U_{ik} = 0.459012522305983 + 1.68677242027984{\rm i} , \\
 &U_{ijk} = 1.05436866494958 - 2.73523350791485{\rm i} .
 \end{align*}
 for which the non-zero integer fields are $\Theta = -1$, $\Theta_k = 1$, $\Theta_{ij} = 1$, $\Theta_{ijk} = -1$ and the action is computed as
 \[ S_\trileg^{\vec \Theta, \vec \Xi} = -39.4784176043574 - 1.8 \cdot 10^{-15} {\rm i} . \]
This agrees with $-4 \pi^2$ up to nine decimal places, suggesting that the difference is due to rounding errors.
\end{ExampleH3}

\subsection{Arbitrary surfaces}
\label{sec-arbitrary}

So far, we have always considered the action on a single elementary cube. We have used the strategy of adding $\vec \Theta = (\Theta, \Theta_i,\ldots, \Theta_{ijk})$ and $\vec \Xi = (\Xi_i,\Xi_j,\Xi_k)$ to achieve equivalence of the three-leg forms to the quad equations and the (almost-)closure property in that setting. A~natural generalisation to the whole lattice $\Z^N$ would be to assign a $\Theta$ to each lattice site and to have $N$ globally defined $\Xi$s. However, if we fix the integer fields first and then choose a cube to work with, we will run into contradictions. There is no reason why different cubes should need the same $\Xi$s, or why neighbouring cubes should need the same value of $\Theta$ at their common vertices.

\begin{ExampleH2}
 For H2, with $\alpha_i = 1-{\rm i}$, $\alpha_j= -{\rm i}$, $\alpha_k = 1$, we compute fields in two adjacent cubes. With initial conditions
 \[ u = 0, \qquad u_i = 1, \qquad u_j = 1+{\rm i}, \qquad u_k = {\rm i}, \]
 we find
 \[u_{ij} = -2-{\rm i}, \qquad u_{jk} = -3+{\rm i}, \qquad u_{ik} = - \frac65 + \frac35 {\rm i}, \qquad u_{ijk} = \frac65 + \frac35 {\rm i} . \]
 In the cube $(u,u_i,u_j,u_k,u_{ij},u_{jk},u_{ki},u_{ijk})$, this induces the integer fields
 \begin{alignat}{6}
 &\Xi_i = 0 , \qquad && \Xi_j = -1 , \qquad&& \Xi_k = 1 , && && & \nonumber\\
 &\Theta = -1 , \qquad && \Theta_i = 1 , \qquad &&\Theta_j = 0 , \qquad &&\Theta_k = 1 , && &\nonumber\\
 & && \Theta_{ij} = 0 , \qquad&& \Theta_{jk} = -1 , \qquad&& \Theta_{ki} = -1 , \qquad&& \Theta_{ijk} = 1 . & \label{intfields1}
 \end{alignat}
 To extend this to the cube $(u_i,u_{ii},u_{ij},u_{ki},u_{iij},u_{ijk},u_{kii},u_{iijk})$, we choose the initial value $u_{ii} = 1$ and find
 \[ u_{iij} = \frac{12}{17} + \frac{14}{17}{\rm i} ,
 \qquad u_{iik} = \frac{11}{37} + \frac{45}{37}{\rm i} ,
 \qquad u_{iijk} = - \frac{17}{6} + \frac12 {\rm i} . \]
 In this cube, we now find the integer fields
 \begin{alignat}{6}
 &\Xi_i = -1 , \qquad && \Xi_j = 1 , \qquad&& \Xi_k = 0 , && && &\nonumber\\
 &\Theta_i = -1 , \qquad && \Theta_{ii} = 1 , \qquad && \Theta_{ij} = 0 , \qquad && \Theta_{ki} = 1 ,&& & \nonumber\\
 & && \Theta_{iij} = -1 , \qquad&& \Theta_{ijk} = 0 , \qquad&& \Theta_{kii} = -1 , \qquad&& \Theta_{iijk} = 1 .& \label{intfields2}
 \end{alignat}
 We see that equations \eqref{intfields1} and \eqref{intfields2} give conflicting values for $\Xi_i$, $\Xi_j$, $\Xi_k$, $\Theta_i$, $\Theta_{ki}$, $\Theta_{ijk}$.
\end{ExampleH2}

However, there is no need to restrict the Lagrangian multiform principle to an individual elementary cube. As a general Lagrangian multiform principle in the presence of integer fields, we propose the following definition.

\begin{Definition}
 \label{def-multiform-Gamma}
 $U\colon \Z^N \to \C$ solves the Lagrangian multiform principle for a discrete 2-form~$\cL$, if for every discrete surface $\Gamma \subset \Z^N$, there exist integer fields $\Theta\colon \Gamma \to \Z$ and $\Xi_1, \ldots, \Xi_N \in \Z$ such that the extended action
 \[ S_\Gamma^{\vec \Theta, \vec \Xi} = \sum_{f \text{ face of $\Gamma$}} \cL(f) + \sum_{\vec n \in \Gamma} 2 \pi {\rm i} \Theta(\vec n) U(\vec n) + \sum_{i=1}^N 2 \pi {\rm i} \Xi_i A_i \]
 satisfies
 \begin{equation}
 \label{gamma-partials}
 \pdv{S_\Gamma^{\vec \Theta, \vec \Xi}}{U(\vec n)} = 0
 \end{equation}
 for every lattice site $\vec n$ in the interior of $\Gamma$, and, if $\Gamma$ is a closed surface,
 \begin{equation}
 \label{gamma-closure}
 S_\Gamma^{\vec \Theta, \vec \Xi} = 0 .
 \end{equation}
\end{Definition}

In other words, we allow the integer fields to depend on the surface at hand, but in keeping with to original spirit of Lagrangian multiforms, the same function $U$ of the whole lattice must provide a critical point for all discrete surfaces.

\begin{Proposition}
 $U\colon \Z^N \to \C$ solves the Lagrangian multiform principle if and only if the quad equations are satisfied on every elementary square in $\Z^N$.
\end{Proposition}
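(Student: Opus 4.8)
The plan is to prove the two implications for the trident $2$-form $\cL_\trileg$, using the equivalence between a multi-affine quad equation and its three-leg form (the Proposition of Section~\ref{sec-abs}) together with the corner-equation computation of Section~\ref{sec-trident} and the closure results of Section~\ref{subsec-clos}. Throughout, ``solves the multiform principle'' is meant in the sense of Definition~\ref{def-multiform-Gamma}, so for each surface $\Gamma$ the integer fields $\vec\Theta$, $\vec\Xi$ may be chosen depending on $\Gamma$.

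For the ``only if'' direction, suppose $U$ solves the multiform principle. Fix an elementary square in directions $i,j$, choose a transverse direction $k$, and let $\Gamma_0$ be the boundary of the resulting elementary cube. Since $\Gamma_0$ is closed, every one of its vertices is interior, so Definition~\ref{def-multiform-Gamma} provides integers for which $\pdv{S_{\Gamma_0}^{\vec\Theta,\vec\Xi}}{U_{ij}}=0$. By the trident corner-equation computation this reads $\mathcal Q^{(U_{ij})}_{ij}=2\Theta_{ij}\pi{\rm i}$ with $\Theta_{ij}\in\Z$, which by the three-leg equivalence is exactly $Q_{ij}=0$ on that square. Letting the square and the transverse direction vary recovers every quad equation on every elementary square.

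For the ``if'' direction, assume all quad equations hold and fix an arbitrary surface $\Gamma$. First the corner equations: at an interior vertex $\vec n$ the expression $\pdv{S_\Gamma}{U(\vec n)}$ (taken with $\vec\Theta=\vec 0$) is the sum over faces $f\ni\vec n$ of $\pdv{\cL_\trileg(f)}{U(\vec n)}$, and by the leg structure of $\cL_\trileg$ (each face contributing a full three-leg form at its base vertex and single legs, pairing along shared edges, at the others) this assembles into an integer combination of three-leg quad expressions $\mathcal Q^{(\vec n)}_{pq}$ and tetrahedron expressions $\mathcal T^{(\vec n)}$ based at $\vec n$. On solutions each of these lies in $2\pi{\rm i}\,\Z$, so the whole corner expression does too, and setting $\Theta(\vec n)$ to minus that value over $2\pi{\rm i}$ makes $\pdv{S_\Gamma^{\vec\Theta,\vec\Xi}}{U(\vec n)}=0$ (for H1, A1$_{\delta=0}$, Q1$_{\delta=0}$ the three-leg forms vanish identically and $\vec\Theta=\vec 0$). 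For closure on a \emph{closed} $\Gamma$, I would use contractibility of $\Z^N$ to write $\Gamma=\partial W$ with $W=\sum_\ell C_\ell$ a $3$-chain of elementary cubes; interior faces cancel, so $\sum_{f\subset\Gamma}\pdv{\cL_\trileg(f)}{A_i}=\sum_\ell\pdv{S_{\partial C_\ell}}{A_i}$, and Lemma~\ref{lemma-dSdalpha} places each summand in $2\pi{\rm i}\,\Z$; taking $\Xi_i$ to be minus the sum of the per-cube values yields $\pdv{S_\Gamma^{\vec\Theta,\vec\Xi}}{A_i}=0$. With $\vec\Theta,\vec\Xi$ so fixed, the gradient of $S_\Gamma^{\vec\Theta,\vec\Xi}$ in all the $U(\vec n)$ (all interior, as $\Gamma$ is closed) and all $A_i$ vanishes, and I would re-run the deformation argument of Theorem~\ref{thm-closed} on $\Gamma$: scale the whole solution (via $(t\vec U,t^2\vec A)$ for H1, or the family built from Lemma~\ref{lemma-1+Ot} applied to each face and to the tetrahedron equation for the other equations), observe that the integer fields stay constant away from branch cuts and jump by amounts that cancel in $S_\Gamma^{\vec\Theta,\vec\Xi}$ by the term-by-term analysis of Lemma~\ref{lemma-Sjump}, and take $t\to0$, where $S_\Gamma\to0$. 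This gives $S_\Gamma^{\vec\Theta,\vec\Xi}=0$ for H1, H2, Q1, Q2, A1 and $S_\Gamma^{\vec\Theta,\vec\Xi}\equiv0\bmod 4\pi^2$ for H3, Q3, A2; in the latter cases closure holds only in the pluri-Lagrangian sense, in line with the earlier discussion.

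The ``only if'' direction and the construction of $\vec\Theta$ are essentially immediate from the earlier Theorem and Proposition. I expect the main obstacle to be closure on an \emph{arbitrary} closed surface. The two-cube computation for H2 in Section~\ref{sec-arbitrary} shows that the integer fields cannot be fixed globally, so they must be chosen per surface; the delicate points are then that the cube decomposition $\Gamma=\partial W$ lets the single-cube spectrality of Lemma~\ref{lemma-dSdalpha} be summed into one global $\vec\Xi$, and that the scaled family of solutions and the branch-jump bookkeeping of Lemma~\ref{lemma-Sjump} carry over from a single cube to $\Gamma$. A further point needing care is verifying, at a general interior vertex of an arbitrary surface rather than of a cube, that the assembled corner expression is genuinely an integer combination of three-leg quad and tetrahedron forms, which is what allows $\vec\Theta$ to be integer-valued.
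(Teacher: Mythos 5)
Your overall architecture is the paper's: the ``only if'' direction restricts to elementary cubes exactly as in the paper's proof, and the ``if'' direction is organised around cube decompositions. The genuine gap is the step you yourself flag at the end: at an interior vertex $\vec n$ of an \emph{arbitrary} discrete surface it is not clear that the legs contributed by the faces around $\vec n$ assemble into an integer combination of complete three-leg forms $\mathcal{Q}^{(\vec n)}_{pq}$ and tetrahedron forms $\mathcal{T}^{(\vec n)}$; ``pairing along shared edges'' is not an argument, since a long (diagonal) leg of one face cannot be completed by legs of a neighbouring face sharing only an edge, and the combinatorics of which faces can surround a vertex of a surface in $\Z^N$ is nontrivial. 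The paper closes exactly this gap by citing \cite[Lemma~3.3]{boll2014integrability}: the faces of $\Gamma$ adjacent to $\vec n$ coincide, after cancelling oppositely oriented pairs, with the faces adjacent to $\vec n$ of a collection of unit cubes, so that \smash{$\pdv{S_\Gamma}{U(\vec n)}$} is a sum of cube corner expressions, each lying in $2\pi{\rm i}\,\Z$ on solutions of the quad equations; only then is your integer choice of $\Theta(\vec n)$ justified. Without this lemma, or an equivalent combinatorial statement, your ``if'' direction does not go through.

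Second, your closure argument takes an unnecessarily heavy detour. Having written $\Gamma=\partial W$ with interior faces cancelling, the paper does \emph{not} re-run the deformation of Theorem~\ref{thm-closed} on $\Gamma$: it simply sums the per-cube closure identities \smash{$S^{\vec\Theta_\ell,\vec\Xi_\ell}_{C_\ell}=0$} (respectively $\equiv 0 \bmod 4\pi^2$) over the cubes $C_\ell$ and sets $\Xi_i=\sum_\ell \Xi_{\ell,i}$. The only bookkeeping this requires is at the vertices: one sets $\Theta(\vec n)=\sum_{\ell\colon \vec n\in C_\ell}\Theta_\ell(\vec n)$ for $\vec n\in\Gamma$, while at vertices of $W$ not on $\Gamma$ the sum $\sum_\ell \Theta_\ell(\vec n)$ vanishes automatically, because there \smash{$\sum_\ell \pdv{S_{C_\ell}}{U(\vec n)} = \pdv{S_\Gamma}{U(\vec n)} = 0$} (the faces containing $\vec n$ cancel in pairs) and each cube contributes \smash{$\pdv{S_{C_\ell}}{U(\vec n)} = -2\pi{\rm i}\,\Theta_\ell(\vec n)$} by its corner equation. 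Your plan of redoing Lemma~\ref{lemma-Sjump} and the $t\to0$ limit directly on an arbitrary closed surface could probably be made to work, since both arguments are term-by-term, but it would force you to construct a global one-parameter family of solutions (propagating the data of Lemma~\ref{lemma-1+Ot} consistently over all lattice sites involved) and to control the limit face-by-face without the point-inversion cancellations available on a single cube --- effort the additive route makes unnecessary, and you already assembled all its ingredients when you summed Lemma~\ref{lemma-dSdalpha} over the cubes to define $\vec\Xi$.
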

\begin{proof}	
 To show that the Lagrangian multiform principle implies all quad equations, it is sufficient to consider surfaces $\Gamma$ that are unit cubes in $\Z^N$.

 Now we show that the Lagrangian multiform principle is satisfied on all solutions of the quad equations. Let $\Gamma$ be an arbitrary discrete surface and $\vec n$ one of its interior vertices. Then there exists a collection of unit cubes such that the faces of $\Gamma$ adjacent to $\vec n$ coincide with the faces of the cubes adjacent to $\vec n$ (where pairs of faces with opposite orientations cancel), see~\mbox{\cite[Lemma~3.3]{boll2014integrability}}. Hence, the corner equations imply \eqref{gamma-partials}.
 Furthermore, if $\Gamma$ is a closed surface, then it can be obtained as the combination of a finite number of unit cubes, where again repeated faces with opposite orientation cancel. The values of $\Xi_i$ that yield \eqref{gamma-closure} can be obtained by summing the corresponding values of each of the elementary cubes.
\end{proof}

To capture our Lagrangian multiforms for H3, Q3 and A2 in Definition \ref{def-multiform-Gamma}, the requirement~\eqref{gamma-closure} needs to be weakened to $S_\Gamma^{\vec \Theta, \vec \Xi} \equiv 0 \mod 4 \pi^2$. It is an open question whether alternative Lagrangian multiforms exists for these equation that satisfy the closure property exactly. Further research into this question, as well as into the closure relation for Q4, should decide which version of the closure condition is most appropriate.

\section{Conclusion}
\label{sec-conclusion}

We have shown that, contrary to common belief, the quad equations of the ABS list are variational. To show this, we introduced a Lagrangian multiform on a four-point stencil instead of the more commonly used triangular stencil. We also took into account the effects of branch cuts in the three-leg forms of the quad equations. To ensure the three-leg forms that occur as corner equations of the Lagrangian multiforms are equivalent to the quad equations, we added an integer field $\Theta$ to the action. It remains to be seen whether this construction can be given a~geometric or topological interpretation.

We pointed out a problem with the usual formulation of the closure property of Lagrangian multiforms for the ABS equations and gaps in the available proofs of this property. We provided a new proof of (almost-)closure for all ABS equations except Q4. Our proof is based on a~continuous deformation of a given solution into a trivial solution and accounting for the effects of crossing branch cuts. We are optimistic that these arguments can be generalised to prove almost-closure of Q4, where we expect the $4 \pi {\rm i}$ will be replaced by a quantity related to the half-periods of the elliptic function underlying this equation. This is left for future work, as it will require non-trivial extensions of Lemma~\ref{lemma-Sjump} and a careful study of the relations between multi-affine variables $v(t)$ and transformed variables $V(t)$ in a suitable limit $t \to 0$.

In part II of this paper, we will study the relation between the triangle and trident Lagrangians and two additional types of Lagrangian multiform, and investigate their double-zero structure.

\appendix

\section{The ABS list}\label{AppendixA}

Here we present the equations of the ABS list. For each one, we give leg functions $\psi$ and $\phi$ making up a three-leg form. There is a large amount of freedom in how to present the equations: variable transformations, adding terms to the leg functions $\psi,\phi$ that cancel against each other, adding constants to the Lagrangian, etc.

For H1--H3, A1--A2 and Q1, the leg functions presented are essentially those from \cite{lobb2009lagrangian}, as well as a more symmetric version of $\Lambda$, and an explicit form for case $\delta = 0$ in H3 and A2. The Lagrangians for the equations Q2--Q3 are inspired on the ideas of that paper and on the three-leg forms presented in \cite{adler2003classification, bobenko2010lagrangian}.

\paragraph{H1.} $Q = (u - u_{ij}) (u_i - u_j) - \alpha_i + \alpha_j$, no transformation required: $U=u$, etc.,
\begin{alignat*}{4}
 &\text{Leg functions:} \qquad && \psi = 2u_i ,\qquad&&
 \phi = \frac{2(\alpha_i - \alpha_j)}{u - u_{ij}} ,& \\
 &\text{Leg Lagrangians:}\qquad&& L = 2 u u_i ,\qquad&&
 \Lambda = 2 (\alpha_i - \alpha_j) \log ( u - u_{ij} ) ,& \\
 &\text{Biquadratics:}\qquad&& h = 1 ,\qquad&&
 g = -\frac{(u-u_{ij})^2}{\alpha_i-\alpha_j} .&
\end{alignat*}
The $\Lambda$ given above does not satisfy the symmetry property \eqref{L-symmetries}. Instead, switching the~$u_i$ and~$u_j$ changes it by $(\alpha_i-\alpha_j) \pi {\rm i}$. An alternative, symmetric, choice of $\Lambda$ is
\[
\Lambda_{{\rm sym}} = (\alpha_i - \alpha_j) \log \bigl( (u - u_{ij})^2 \bigr) .
\]

Note that in the examples throughout the text we scaled $\psi$, $\phi$, $L$, $\Lambda$ by $\frac12$. Those~$L$,~$\Lambda$,~without the prefactor~2, fail to satisfy equation \eqref{L-log(h)} in the proof of Lemma~\ref{lemma-dSdalpha}, but still satisfy the lemma itself. However, if we multiplied $\Lambda_{{\rm sym}}$ by $\frac12$, that result would no longer hold.

\paragraph{H2.} $ Q = (u - u_{ij}) (u_i - u_j) - (\alpha_i - \alpha_j) (u + u_i + u_{ij} + u_j) -\alpha_i^2 + \alpha_j^2 $, no transformation required: $U=u$, etc.,
\begin{alignat*}{3}
 &\text{Leg functions:} \qquad && \psi = \log(\alpha_{i} + u + u_{i}) + 1 ,& \\
 & &&\phi = \log(\alpha_{i} - \alpha_{j} + u - u_{ij}) - \log(-\alpha_{i} + \alpha_{j} + u - u_{ij}) . &\\
 &\text{Leg Lagrangians:} \qquad && L = {(\alpha_{i} + u + u_{i})} \log(\alpha_{i} + u + u_{i}) , &\\
 & &&\Lambda = {(\alpha_{i} - \alpha_{j} + u - u_{ij})} \log(\alpha_{i} - \alpha_{j} + u - u_{ij}) &\\
 & &&\hphantom{\Lambda =}{} + {(\alpha_{i} - \alpha_{j} - u + u_{ij})} \log(-\alpha_{i} + \alpha_{j} + u - u_{ij}) . &\\
 &\text{Biquadratics:} \qquad&& h = \alpha_i + u + u_i , &\\
 & &&g = -\frac{(\alpha_i-\alpha_j + u - u_{ij})(-\alpha_i+\alpha_j + u - u_{ij})}{2(\alpha_i - \alpha_j)} . &
\end{alignat*}
An alternative, symmetric, choice of $\Lambda$ is
\begin{align*}
\Lambda_{{\rm sym}} &= {(\alpha_{i} - \alpha_{j} + u - u_{ij})} \log(\alpha_{i} - \alpha_{j} + u - u_{ij}) \\
&\quad{} + {(\alpha_{i} - \alpha_{j} - u + u_{ij})} \log(\alpha_{i} - \alpha_{j} - u + u_{ij}) + \pi {\rm i} (u + u_{ij}) .
\end{align*}

\paragraph{H3.}
$ Q_{{\rm multi-affine}} = {(u u_{i} + u_{ij} u_{j})} \alpha_{i} - {(u_{i} u_{ij} + u u_{j})} \alpha_{j} + \delta \bigl( \alpha_i^2 - \alpha_j^2 \bigr)$, with transformation \smash{$u = {\rm e}^U$}, \smash{$\alpha_i = -{\rm e}^{A_i}$}, etc.,
\[ Q_{{\rm transformed}} = -\bigl({\rm e}^{U+U_i} + {\rm e}^{U_{ij}+U_j}\bigr) {\rm e}^{A_{i}} + \bigl({\rm e}^{U_{i}+U_{ij}} + {\rm e}^{U+U_{j}}\bigr) {\rm e}^{A_{j}} + \delta \bigl( {\rm e}^{2A_i} - {\rm e}^{2A_j} \bigr), \]
\begin{alignat*}{3}
& \text{Leg functions:}\qquad&& \psi =
\begin{cases}
 -U_i & \text{if }\delta = 0 , \\
 -A_i - \log \bigl( 1 - {\rm e}^{U+U_i-A_i} \bigr) & \text{if }\delta = 1 ,
\end{cases} &\\
& && \phi = \log \bigl( 1- {\rm e}^{A_i-A_j+U-U_{ij}} \bigr) - \log \bigl(1 -{\rm e}^{-A_i+A_{j}+U-U_{ij}} \bigr) - A_i + A_j , & \\
& \text{Lagrangians:}\qquad && L =
\begin{cases}
 -U U_i& \text{if }\delta = 0 , \\
 \dilog({\rm e}^{-A_i+U+U_i}) -A_i (U + U_i) & \text{if }\delta = 1 ,
\end{cases} &\\
& && \Lambda = \dilog \bigl( {\rm e}^{-A_i+A_{j}+U-U_{ij}} \bigr) - \dilog \bigl( {\rm e}^{A_i-A_{j}+U-U_{ij}} \bigr) - (A_i-A_j) (U-U_{ij}) ,& \\
& \text{Biquadratics:}\qquad&& h = u u_i + \delta \alpha_i = {\rm e}^{U + U_i} - \delta {\rm e}^{A_i} , &\\
& && g = \frac{(\alpha_i u - \alpha_j u_{ij})(\alpha_j u - \alpha_i u_{ij})}{\alpha_i^2 - \alpha_j^2}& \\
& && \hphantom{g}{} = \frac{{\rm e}^{2 U_{ij}}}{{\rm e}^{A_i - A_j} - {\rm e}^{A_j - A_i}} \bigl(1 - {\rm e}^{A_i - A_j + U - U_{ij}} \bigr) \bigl(1 - {\rm e}^{-A_i + A_j + U - U_{ij}} \bigr) .&
\end{alignat*}
An alternative, symmetric, choice of $\Lambda$ is
\[
\Lambda_{{\rm sym}} = -\dilog \bigl( {\rm e}^{A_i-A_{j}+U-U_{ij}} \bigr) - \dilog \bigl( {\rm e}^{A_i-A_{j}-U+U_{ij}} \bigr) - \frac12 (U - U_{ij})^2 + (U+U_{ij}) \pi {\rm i} .
\]

\paragraph{A1$\boldsymbol{_{\delta=0}}$.} $Q = {(u u_{ij} + u_{i} u_{j})} {(\alpha_{i} - \alpha_{j})} + {(u u_{i} + u_{ij} u_{j})} \alpha_{i} - {(u_{i} u_{ij} + u u_{j})} \alpha_{j}$, no transformation required: $U=u$, etc.,
\begin{alignat*}{4}
 &\text{Leg functions:}\qquad&& \psi = \frac{2 \alpha_{i}}{u + u_{i}} ,\qquad&&
\phi = \frac{2 (\alpha_{i} - \alpha_{j})}{u - u_{ij}} ,& \\
 &\text{Lagrangians:}\qquad && L = 2 \alpha_{i} \log(u + u_{i}) ,\qquad&&
\Lambda = 2 (\alpha_{i} - \alpha_{j}) \log (u - u_{ij}) ,& \\
& && && \Lambda_{{\rm sym}} = (\alpha_{i} - \alpha_{j}) \log \bigl((u - u_{ij})^2\bigr) ,& \\
& \text{Biquadratics:}\qquad && h = \frac{(u + u_i)^2}{\alpha_i} ,\qquad&&
g = \frac{(u - u_{ij})^2}{\alpha_i - \alpha_j} .&
\end{alignat*}
As in the case of H1, we could scale $\psi$, $\phi$, $L$, $\Lambda$ by $\frac12$ (but not $\Lambda_{{\rm sym}}$). The rescaled $L$, $\Lambda$ would not satisfy equation \eqref{L-log(h)}, but Lemma~\ref{lemma-dSdalpha} would still hold.

\paragraph{A1$\boldsymbol{_{\delta=1}}$.} $Q = {(u u_{ij} + u_{i} u_{j})} {(\alpha_{i} - \alpha_{j})} + {(u u_{i} + u_{ij} u_{j})} \alpha_{i} - {(u_{i} u_{ij} + u u_{j})} \alpha_{j} - {(\alpha_{i} - \alpha_{j})} \alpha_{i} \alpha_{j}$, no transformation required: $U=u$, etc.,
\begin{alignat*}{3}
 &\text{Leg functions:}\qquad&& \psi = \log(\alpha_{i} + u + u_{i}) - \log(-\alpha_{i} + u + u_{i}) ,& \\
 & &&\phi = \log(-\alpha_{i} + \alpha_{j} - u + u_{ij}) - \log(\alpha_{i} - \alpha_{j} - u + u_{ij}) , &\\
 &\text{Lagrangians:}\qquad&& L = {(\alpha_{i} + u + u_{i})} \log(\alpha_{i} + u + u_{i}) + {(\alpha_{i} - u - u_{i})} \log(-\alpha_{i} + u + u_{i}) ,& \\
 & &&\Lambda = {(\alpha_{i} - \alpha_{j} - u + u_{ij})} \log(\alpha_{i} - \alpha_{j} - u + u_{ij})& \\
 & &&\hphantom{\Lambda =}{} + {(\alpha_{i} - \alpha_{j} + u - u_{ij})} \log(-\alpha_{i} + \alpha_{j} - u + u_{ij}) ,& \\
 & &&\Lambda_{{\rm sym}} = {(\alpha_{i} - \alpha_{j} - u + u_{ij})} \log(\alpha_{i} - \alpha_{j} - u + u_{ij})& \\
 & &&\hphantom{\Lambda_{{\rm sym}} =}{} + {(\alpha_{i} - \alpha_{j} + u - u_{ij})} \log(\alpha_{i} - \alpha_{j} + u - u_{ij}) + (u+u_{ij}) \pi {\rm i} ,& \\
 &\text{Biquadratics:}\qquad&& h = \frac{(\alpha_i + u + u_i)(-\alpha_i + u + u_i)}{\alpha_i} ,& \\
 & &&g = \frac{(\alpha_i-\alpha_j + u - u_{ij})(-\alpha_{i} + \alpha_{j} + u - u_{ij})}{\alpha_i - \alpha_j}.&
\end{alignat*}

\paragraph{A2.}
$Q_{{\rm multi-affine}} = (u_{i} u_{ij} + u u_{j}) \bigl(\alpha_i- \frac{1}{\alpha_i} \bigr) - (u u_{i} + u_{ij} u_{j}) \bigl(\alpha_j - \frac{1}{\alpha_j} \bigr) - (u u_{i} u_{ij} u_{j} + 1) \bigl( \frac{\alpha_i}{\alpha_j} - \frac{\alpha_j}{\alpha_i} \bigr) $, with transformation $u = {\rm e}^U$, $\alpha_i = {\rm e}^{A_i}$, etc.,
\begin{align*}
Q_{{\rm transformed}} &= \bigl({\rm e}^{U + U_j} + {\rm e}^{U_i + U_{ij}}\bigr) \sinh(A_i) - \bigl({\rm e}^{U+U_i} + {\rm e}^{U_{ij}+U_j}\bigr) \sinh(A_j) \\
&\quad{} -\bigl({\rm e}^{U+U_i+U_{ij}+U_j} + 1\bigr) \sinh(A_i - A_j) ,
\end{align*}
\unskip
\begin{alignat*}{3}
 &\text{Leg functions:}\qquad&& \psi = \log\bigl(1-{\rm e}^{A_i+U+U_i}\bigr) - \log\bigl(1-{\rm e}^{-A_i+U+U_i}\bigr) - A_i ,& \\
 & && \phi = \log \bigl( 1-{\rm e}^{A_i-A_j+U-U_{ij}} \bigr) - \log \bigl( 1-{\rm e}^{-A_i+A_j+U-U_{ij}} \bigr) - A_i + A_j , &\\
 &\text{Lagrangians:}\qquad && L = \dilog\bigl({\rm e}^{-A_i + U + U_i}\bigr) - \dilog\bigl({\rm e}^{A_i + U + U_i}\bigr) - A_i(U+U_i) ,& \\
& && \Lambda = \dilog\bigl({\rm e}^{-A_i+A_j+U-U_{ij}}\bigr) - \dilog\bigl({\rm e}^{A_i-A_j+U-U_{ij}}\bigr) - (A_i - A_j) (U - U_{ij}) ,& \\
& && \Lambda_{{\rm sym}} = -\dilog\bigl({\rm e}^{A_i-A_j+U-U_{ij}}\bigr) - \dilog\bigl({\rm e}^{A_i-A_j-U+U_{ij}}\bigr) \\
& && \hphantom{\Lambda_{{\rm sym}} =}{}- \tfrac12 (U - U_{ij})^2 + (U+U_{ij}) \pi {\rm i} ,& \\
& \text{Biquadratics:}\qquad&& h = \frac{ ( 1 - \alpha_i u u_i) \bigl(1 - \frac{u u_i}{\alpha_i} \bigr)}{ \alpha_i - \frac{1}{\alpha_i}}
 = \frac{ \bigl(1- {\rm e}^{A_i + U + U_i}\bigr) \bigl(1 - {\rm e}^{-A_i + U + U_i}\bigr)}{ 2 \sinh(A_i- A_j)} , &\\
& && g = u_{ij}^2 \frac{ \bigl( 1- \frac{\alpha_i u}{\alpha_j u_{ij}} \bigr) \bigl( 1- \frac{\alpha_j u}{\alpha_i u_{ij}} \bigr)}{ \frac{\alpha_i}{\alpha_j} - \frac{\alpha_j}{\alpha_i}}& \\
& && \hphantom{g}{} = {\rm e}^{2 U_{ij}} \frac{ \bigl(1 - {\rm e}^{A_i - A_j + U - U_{ij}} \bigr) \bigl( 1-{\rm e}^{-A_i + A_j + U - U_{ij}} \bigr)}{ 2 \sinh(A_i- A_j)} .&
\end{alignat*}

\paragraph{Q1$\boldsymbol{_{\delta = 0}}$.}
$ Q = \alpha_{i} {(u - u_{j})} {(u_{i} - u_{ij})} + \alpha_{j} {(u - u_{i})} {(u_{ij} - u_{j})} $, no transformation required: $U=u$, etc.,
\begin{alignat*}{4}
 &\text{Leg functions:}\qquad && \psi = \frac{2\alpha_{i}}{u - u_{i}} ,\qquad&&
 \phi = \frac{2(\alpha_i - \alpha_j)}{u - u_{i}} ,& \\
 &\text{Lagrangians:}\qquad && L = 2 \alpha_{i} \log( {u - u_{i}} ) ,\qquad&&
 \Lambda = 2 (\alpha_i-\alpha_j) \log( {u - u_{ij}} ) , &\\
 & && L_{{\rm sym}} = \alpha_{i} \log \bigl( ({u - u_{i}})^2 \bigr) ,\qquad&& \Lambda_{{\rm sym}} = (\alpha_i-\alpha_j) \log \bigl( (u - u_{ij})^2 \bigr) ,& \\
 &\text{Biquadratics:}\qquad&& h = \frac{(u-u_i)^2}{\alpha_i} ,\qquad&&
 g = -\frac{(u-u_{ij})^2}{\alpha_i - \alpha_j} .&
\end{alignat*}
For equations of type Q, the long leg function $\phi$ is identical to the short leg function $\psi$, and correspondingly $\Lambda$ is identical to $L$. The biquadratics $g$ and $h$ are the same up to an overall minus sign. For the remaining equations we will only write $\psi$, $L$ and $h$.

As in the case of H1, we could scale $\psi$, $\phi$, $L$, $\Lambda$ by $\frac12$ (but not $\Lambda_{{\rm sym}}$). The rescaled $L$, $\Lambda$ would not satisfy equation \eqref{L-log(h)}, but Lemma~\ref{lemma-dSdalpha} would still hold.

\paragraph{Q1$\boldsymbol{_{\delta=1}}$.}
$Q = {(\alpha_{i} - \alpha_{j})} \alpha_{i} \alpha_{j} + \alpha_{i} {(u - u_{j})} {(u_{i} - u_{ij})} + \alpha_{j} {(u - u_{i})} {(u_{ij} - u_{j})} $, no transformation required: $U=u$, etc.,
\begin{align*}
&\psi= \log(\alpha_{i} + u - u_{i}) - \log(-\alpha_{i} + u - u_{i}) , \\
&L= (\alpha_{i} + u - u_{i}) \log(\alpha_{i} + u - u_{i}) + (\alpha_{i} - u + u_{i}) \log(-\alpha_{i} + u - u_{i}) , \\
&L_{{\rm sym}}= (\alpha_{i} + u - u_{i}) \log(\alpha_{i} + u - u_{i}) + (\alpha_{i} - u + u_{i}) \log(\alpha_{i} - u + u_{i}) + (u+u_i) \pi {\rm i} , \\
&h= -\frac{(\alpha_i + u-u_i)(-\alpha_i + u-u_i)}{\alpha_i} .
\end{align*}

\paragraph{Q2.}
$Q_{{\rm multi-affine}} = -{(\alpha_{i}^{2} - \alpha_{i} \alpha_{j} + \alpha_{j}^{2} - u - u_{i} - u_{ij} - u_{j})} {(\alpha_{i} - \alpha_{j})} \alpha_{i} \alpha_{j} + \alpha_i (u - u_j)(u_i - u_{ij}) - \alpha_j (u - u_i)(u_j - u_{ij})$, with transformation $u = U^2$, $\alpha_i = A_i$, etc.,
\begin{gather*}
Q_{{\rm transformed}} = -\bigl(A_i^2 - A_i A_j + A_j^2 - U^2 - U_i^2 - U_{ij}^2 - U_j^2\bigr) (A_i - A_j) A_i A_j \\
\hphantom{Q_{{\rm transformed}} =}{} + A_i \bigl(U^2 - U_j^2\bigr) \bigl(U_i^2 - U_{ij}^2\bigr) + A_j \bigl(U^2 - U_i^2\bigr) \bigl(U_{ij}^2 - U_j^2\bigr) ,
\\
\psi = \log(A_i + U + U_i) + \log(A_i + U - U_i) - \log(-A_i + U - U_i) - \log(-A_i + U + U_i) , \\
L = (A_i + U + U_i) \log(A_i + U + U_i) + (A_i + U - U_i) \log(A_i + U - U_i) \\
\hphantom{L =}{} + (A_i - U - U_i) \log(-A_i + U + U_i) + (A_i - U + U_i) \log(-A_i + U - U_i) , \\
L_{{\rm sym}} = (A_i + U + U_i) \log(A_i + U + U_i) + (A_i + U - U_i) \log(A_i + U - U_i) \\
\hphantom{L_{{\rm sym}} =}{} + (A_i - U - U_i) \log(-A_i + U + U_i) + (A_i - U + U_i) \log(A_i - U + U_i) \\
\hphantom{L_{{\rm sym}} =}{} + (U + U_i) \pi {\rm i} , \\
h = \frac{1}{\alpha_i}\bigl( \alpha_i^4 - 2 \alpha_i^2 (u+u_i) + (u-u_i)^2 \bigr) \\
\hphantom{h }{} = \frac{1}{A_i} (A_i + U + U_i) (A_i + U - U_i) (-A_i + U - U_i) (-A_i + U + U_i) .
\end{gather*}

\paragraph{Q3$\boldsymbol{_{\delta=0}}$.}
$Q_{{\rm multi-affine}} = {(u u_{i} + u_{ij} u_{j})} \bigl(\alpha_i- \frac{1}{\alpha_i} \bigr) - {(u_{i} u_{ij} + u u_{j})} \bigl(\alpha_j - \frac{1}{\alpha_j} \bigr) - {(u u_{ij} + u_{i} u_{j})} \bigl( \frac{\alpha_i}{\alpha_j} - \frac{\alpha_j}{\alpha_i} \bigr)$, with transformation $u = {\rm e}^U$, $\alpha_i = {\rm e}^{A_i}$, etc.,
\begin{gather*}
Q_{{\rm transformed}} = \bigl({\rm e}^{U+U_i} + {\rm e}^{U_{ij}+U_j}\bigr) \sinh(A_i) - \bigl({\rm e}^{U_i+U_{ij}} + {\rm e}^{U+U_j}\bigr) \sinh(A_j) \\
 \hphantom{Q_{{\rm transformed}} =}{} - \bigl({\rm e}^{U+U_{ij}} + {\rm e}^{U_i+U_j}\bigr) \sinh(A_i - A_j), \\
\psi = \log \bigl(1-{\rm e}^{A_i + U - U_i}\bigr) - \log \bigl(1-{\rm e}^{-A_i + U - U_i}\bigr) - A_i , \\
L = \dilog \bigl({\rm e}^{-A_i + U - U_i}\bigr) - \dilog \bigl({\rm e}^{A_i + U - U_i}\bigr)
- A_i (U - U_i) , \\
L_{{\rm sym}} = -\dilog \bigl({\rm e}^{A_i + U - U_i}\bigr)
- \dilog \bigl({\rm e}^{A_i - U + U_i}\bigr)
- \frac12 (U - U_i)^2 + (U + U_i) \pi {\rm i} , \\
h = \frac{ (u_i - \alpha_i u ) (\alpha_i u_i - u)}{ \alpha_i^2 - 1}
= \frac{{\rm e}^{2 U_i} \bigl( 1-{\rm e}^{A_i + U - U_i} \bigr) \bigl( 1-{\rm e}^{-A_i + U - U_i} \bigr)}{ 2 \sinh(A_i)} .
\end{gather*}

\paragraph{Q3$\boldsymbol{_{\delta=1}}$.}
$Q_{{\rm multi-affine}} = \bigl(\alpha_i- \frac{1}{\alpha_i} \bigr) (u u_i + u_{ij} u_j) - \bigl(\alpha_j- \frac{1}{\alpha_j} \bigr) (u_i u_{ij} + u u_j) - \bigl( \frac{\alpha_i}{\alpha_j} - \frac{\alpha_j}{\alpha_i} \bigr) (u u_{ij} + u_i u_j)
- \bigl( \frac{\alpha_i}{\alpha_j} - \frac{\alpha_j}{\alpha_i} \bigr) \bigl(\alpha_i- \frac{1}{\alpha_i} \bigr) \bigl(\alpha_j- \frac{1}{\alpha_j} \bigr) $, with transformation $u = 2\cosh(U)$, $\alpha_i = {\rm e}^{A_i}$, etc.,
\begin{gather*}
Q_{{\rm transformed}} = \sinh(A_{i}) {(\cosh(U) \cosh(U_{i}) + \cosh(U_{ij}) \cosh(U_{j}))} \\
 \hphantom{Q_{{\rm transformed}} =}{} - \sinh(A_{j}) {(\cosh(U_{i}) \cosh(U_{ij}) + \cosh(U) \cosh(U_{j}))} \\
 \hphantom{Q_{{\rm transformed}} =}{} - \sinh(A_{i} - A_{j})(\cosh(U) \cosh(U_{ij}) + \cosh(U_{i}) \cosh(U_{j}) ) \\
 \hphantom{Q_{{\rm transformed}} =}{} - \sinh(A_{i} - A_{j})\sinh(A_{i}) \sinh(A_{j}),
\\
\psi = \log \bigl( 1-{\rm e}^{-A_i - U + U_i} \bigr) + \log \bigl( 1-{\rm e}^{-A_i - U - U_i} \bigr) \\
 \hphantom{\psi =}{} - \log\bigl( 1-{\rm e}^{A_i - U - U_i} \bigr) - \log \bigl( 1-{\rm e}^{A_i - U + U_i}\bigr) + A_i , \\
L = \dilog \bigl( {\rm e}^{-A_i - U + U_i} \bigr) + \dilog \bigl( {\rm e}^{-A_i - U - U_i}\bigr) -\dilog \bigl( {\rm e}^{A_i - U - U_i} \bigr) - \dilog \bigl( {\rm e}^{A_i - U + U_i} \bigr) \\
 \hphantom{L =}{} + A_i (U - U_i) , \\
L_{{\rm sym}} = \dilog \bigl( {\rm e}^{-A_i + U + U_i} \bigr) + \dilog\bigl( {\rm e}^{-A_i + U - U_i} \bigr) + \dilog \bigl( {\rm e}^{-A_i - U + U_i} \bigr) + \dilog \bigl( {\rm e}^{-A_i - U - U_i} \bigr) \\
 \hphantom{L_{{\rm sym}} =}{} + U^2 + U_i^2 - A_i (U - U_i) , \\
h = \frac{\alpha_i^2 - 2 + \frac{1}{\alpha_i^2} - \bigl(\alpha_i + \frac{1}{\alpha_i}\bigr) u u_i + u^2 + u_i^2}{ \alpha_i - \frac{1}{\alpha_i}} \\
\hphantom{h}{} = \frac{2 \sinh(A_i)^2 - 4 \cosh(A_i) \cosh(U) \cosh(U_i) + 2 \cosh(U)^2 + 2\cosh(U_i)^2}{ \sinh(A_i) } \\
\hphantom{h}{} = \frac{{\rm e}^{2 U} \bigl(1 - {\rm e}^{-A_i-U+U_i}\bigr) \bigl(1 - {\rm e}^{-A_i-U-U_i}\bigr) \bigl(1 - {\rm e}^{A_i-U-U_i}\bigr) \bigl(1 - {\rm e}^{A_i-U+U_i}\bigr)}{ {\rm e}^{A_i} - {\rm e}^{-A_i} }.
\end{gather*}

\subsection*{Acknowledgements}

The impetus for this work was provided by the critical questions asked by an anonymous referee of the paper that has now become Part II of the present work \cite{richardson2024discrete2}. We are grateful for their detailed feedback and constructive criticism.
We would like to thank Prof Frank Nijhoff and Dr Vincent Caudrelier for helpful discussions on the topic of this work and many related subjects.

JR acknowledges funding from Engineering and Physical Sciences Research Council DTP, Crowther Endowment and School of Mathematics at the University of Leeds. MV is supported by the Engineering and Physical Sciences Research Council [EP/Y006712/1].

\pdfbookmark[1]{References}{ref}
\LastPageEnding

\end{document}